\documentclass[11pt,letterpaper]{article}
\ifdefined\pdfminorversion\pdfminorversion=7\fi
\usepackage[margin=1in]{geometry}
\usepackage[T1]{fontenc}
\usepackage{newtxtext}
\usepackage{amsmath,amsthm,mathtools}
\usepackage{newtxmath}
\usepackage{graphicx,booktabs,multirow}
\usepackage{microtype,xcolor,enumitem,array,url}
\usepackage[authoryear,round]{natbib}
\usepackage[font=small,labelfont=bf,labelsep=period]{caption}
\usepackage[unicode=true]{hyperref}
\hypersetup{hidelinks,bookmarksnumbered=true,bookmarksopen=true,bookmarksopenlevel=1,
  pdftitle={Neural Eigenfluid Bases: Spectral Objectives, Gauge, and Topology},
  pdfauthor={Baiming Zhang, Shiying Xiong, Zhecheng Wang}}
\DeclareRobustCommand{\docref}[2]{\hyperref[#2]{#1~\ref*{#2}}}
\DeclareRobustCommand{\doceqref}[2][Eq.]{\hyperref[#2]{#1~\textup{(\ref*{#2})}}}
\newcommand{\doi}[1]{doi: \begingroup\urlstyle{rm}\href{https://doi.org/#1}{\nolinkurl{#1}}\endgroup}
\graphicspath{{figures/}}
\newtheorem{proposition}{Proposition}

\newcommand{\R}{\mathbb R}
\newcommand{\C}{\mathcal C}
\newcommand{\Harm}{\mathcal H}
\newcommand{\ip}[2]{\langle #1,#2\rangle}
\newcommand{\norm}[1]{\lVert #1\rVert}
\newcommand{\curl}{\operatorname{curl}}
\newcommand{\diver}{\operatorname{div}}
\newcommand{\sg}{\operatorname{sg}}
\newcommand{\dd}{\,\mathrm d}
\title{Neural Eigenfluid Bases:\\Spectral Objectives, Gauge, and Topology}
\author{%
\begin{tabular}{ccc}
Baiming Zhang & Shiying Xiong & {Zhecheng Wang\thanks{{Corresponding author.}}}\\[3pt]
\small Zhejiang University & \small Zhejiang University & \small University of Toronto\\[3pt]
\small\href{mailto:baimingzhang@zju.edu.cn}{baimingzhang@zju.edu.cn} &
\small\href{mailto:shiying.xiong@zju.edu.cn}{shiying.xiong@zju.edu.cn} &
\small\href{mailto:zhecheng@cs.toronto.edu}{zhecheng@cs.toronto.edu}
\end{tabular}}

\date{}
\begin{document}
\maketitle
\begin{abstract}
\begingroup
Reduced-order models simplify fluid simulation by representing complex flows with a small number of spatial modes. Building on Laplacian Eigenfluids, we introduce Neural Eigenfluid Bases (NEB), a neural representation whose basis functions can be evaluated anywhere in space. We enforce incompressibility and impermeable boundaries directly through differential lifts and potential boundary conditions. Geometry-local orthogonalization and Rayleigh minimization then learn an ordered basis. We connect learning in potential and velocity spaces, and derive a spectral-variance relation explaining why these objectives weight approximate modes differently, even though their exact positive spectra agree. Gauge fixing reduces the number of spatial derivatives needed for potential learning, while a Dirichlet energy loss learns independent circulation fields around obstacles. With 1024 modes, both objectives keep velocity errors below 1\% in periodic vortex merger. For random flows around obstacles, potential-based learning captures more initial energy and gives lower final errors than numerical Laplacian bases of the same size. The reduced state is about a thousand times smaller than the full reference model. Three-dimensional tests separate spatial reconstruction errors from errors in coefficient evolution. Precomputed Galerkin operators allow millisecond-scale updates when the same basis is reused across simulations. This provides a physically constrained basis for continuous flow reconstruction and efficient reduced-order simulation. Code and models are available at \href{https://github.com/baiming-zhang/NEB}{\textcolor{blue}{https://github.com/baiming-zhang/NEB}}.
\par\endgroup
\end{abstract}

\section{Introduction}
\begingroup
Interactive simulation, initial-condition studies, and geometry design often require solving similar flow problems many times. In airfoil design, engineers may repeatedly adjust shape parameters to identify the dominant flow modes associated with candidate geometries. In games, moving obstacles continually reshape the fluid domain and demand rapid visual feedback. Recomputing numerical bases and flow solutions for each configuration can make this interactive exploration costly. Reduced-order models (ROMs) reduce this cost by representing complex flows with a small set of spatial basis functions. A trained geometry-conditioned neural representation queries these eigenmodes directly. Once the modes and reduced operators are available, the flow evolves in a low-dimensional coefficient space, exploiting the greatly reduced number of degrees of freedom to achieve millisecond-scale updates and support near-real-time flow previews (\docref{Section}{sec:inference-cost}). The choice of basis determines what the reduced model can represent, how physical constraints are handled, and which computations can be reused. A useful fluid basis should therefore capture the main flow structures while supporting dynamics from different initial conditions.

For conventional spectral ROMs, however, the basis is usually computed by solving a Laplacian eigenvalue problem on a specific domain. Even a small change in the boundary or the position of an obstacle can change the eigenfunctions and require a new numerical solve. Analytical solutions are available for a few simple geometries, such as rectangles and circles, but are generally difficult to obtain for more complex shapes. This becomes a practical limitation during early-stage design, where geometry parameters may change repeatedly. Neural networks offer another approach. Instead of solving a separate eigenvalue problem for each geometry, we can learn a mapping from shape space to eigenspace, allowing basis functions to be evaluated at arbitrary spatial locations within a family of geometries. Our goal is to preserve the physical structure and accuracy of Laplacian Eigenfluids while reducing the need for repeated numerical solves. Our experiments show that learned bases can achieve accuracy comparable to numerical Laplacian bases, and sometimes perform better in selected cases. For repeated queries, the cost of training can also be shared across many evaluations.

Fluid model reduction has developed along two main directions: data-driven and operator-based methods. Snapshot-based approaches extract a low-dimensional space from existing simulations. \citet{treuille2006model} demonstrated interactive reduced fluid simulation, including interactions with moving obstacles. \citet{kim2013subspace} combined subspace integration with cubature to efficiently simulate variations of high-resolution flows. These methods use patterns in simulation data to reduce spatial computation and time-stepping costs. Spectral methods instead construct basis functions from differential operators. Laplacian Eigenfluids \citep{dewitt2012fluid} organizes flow structures by eigenvalues and provides a modal interpretation of viscous decay. Model-Reduced Variational Fluid Simulation \citep{liu2015model} further connects spectral reduction with geometric structure and harmonic fields. Unlike bases extracted from specific flow trajectories, operator-based modes are ordered by differential energy and have a clear physical interpretation. These properties provide a natural starting point for our neural representation.

Neural networks have introduced several other ways to reuse computations in fluid simulation. Deep Fluids learns a parameterized velocity generator and latent dynamics from simulation data \citep{kim2019deep}. DeepONet and Fourier Neural Operators learn mappings between function spaces \citep{lu2021deeponet,li2021fourier}, while Geo-FNO extends this approach to general geometries through learned deformations \citep{li2023geometry}. These methods mainly focus on learning how physical inputs map to flow solutions. We take a different approach: rather than learning the flow solution directly, we learn the spatial basis used to represent it. Once the basis and its reduced operators are prepared for a given geometry, they can be reused across different initial conditions by changing only the coefficient state. This also allows us to study spatial representation error and reduced dynamics error separately.

Coordinate-based neural fields provide continuous function values and spatial derivatives. Periodic activation functions help represent differentiable signals \citep{sitzmann2020implicit}, while Fourier features improve the learning of high-frequency variations \citep{tancik2020fourier}. Variational methods can then use physical energy to learn eigenfunctions. The Deep Ritz method minimizes energy functionals \citep{e2018deep}, Spectral Inference Networks learns multiple eigenfunctions through stochastic optimization \citep{pfau2019spectral}, and neural Rayleigh methods apply similar ideas to engineering eigenvalue problems \citep{rowan2025rayleigh}. More recently, Shape Space Spectra extended neural eigenfunction learning to parameterized geometries, using geometry-local sorting and causal orthogonalization to handle changes in spectral ordering \citep{chang2025shapespacespectra}. These works show that neural networks can learn bases defined by differential operators. For incompressible fluids, however, minimizing spectral energy alone is not enough. The basis must also satisfy incompressibility and wall impermeability, account for gauge freedom in the potential representation, and include the independent circulation fields introduced by domain topology.

Physical constraints can be built directly into neural representations. Neural Conservation Laws (NCL) constructs divergence-free fields through differential lifts of antisymmetric potentials \citep{richterpowell2022neural}, and other constrained neural basis methods use similar ideas for incompressible flows \citep{cheng2026decoupled}. Neural Kinematic Bases (NKB) learns velocity basis functions directly, using losses for smoothness, divergence, wall alignment, and orthogonality to support mesh-free flow fitting and animation \citep{liu2025neural}. These studies show that learning fluid bases with neural networks is feasible. However, soft penalties do not strictly enforce all physical constraints, and they do not directly organize the learned functions according to the spectrum of a differential operator.

We aim to reduce how much of the physics the network must learn from penalties. Instead of relying on the loss function to enforce incompressibility, we build it into the representation and use energy minimization to learn the spectral modes. In two dimensions, a scalar streamfunction generates a divergence-free velocity field; in three dimensions, we use a vector potential. Potential boundary conditions enforce wall impermeability, geometry-local orthogonalization separates the modes, and Rayleigh minimization determines their spectral ordering. The training does not require flow snapshots, since the basis is defined by the geometry and differential operator. The neural network provides continuous spatial functions, the physical construction restricts the space of possible solutions, and the energy objective selects the low-frequency modes. We also compare our approach with NKB under the same training budget and Galerkin evolution framework to examine how these structural choices affect accuracy.

Learning in potential space offers another advantage: it can reduce the number of spatial derivatives needed during training. However, this requires a consistent treatment of gauge freedom. For example, adding a constant to a periodic streamfunction in two dimensions does not change the velocity. In three dimensions, certain curl-free components can also be added to a vector potential without changing its curl. Although these changes leave the physical velocity unchanged, they affect the potential mass norm and can therefore distort the Rayleigh objective. We address this by selecting a suitable gauge representative before normalization and optimization. Using the classical compatible positive-spectrum correspondence \citep{arnold2006finite}, we connect the potential-space and velocity-space objectives and derive a spectral-variance relation that explains why they assign different weights to high-frequency components in approximate modes.

Our experiments also reveal a practical difference between two- and three-dimensional learning. In 2D, the potential objective requires fewer spatial derivatives and reduces training time while maintaining good accuracy. In 3D, however, selecting a consistent vector-potential gauge introduces a substantial computational cost, while the accuracy gains over numerical Laplacian bases are relatively small in our tests. We therefore view potential-space learning as an alternative with a clear mathematical basis, rather than an objective that is always preferable. Its usefulness depends on the problem and the cost of gauge selection. We examine auxiliary solvers to understand how gauge accuracy depends on spatial resolution and operator consistency.

Beyond the positive spectral modes, flow in complex domains contains independent circulation fields determined by topology. Hodge theory shows that harmonic velocity fields can carry circulation around obstacles without introducing vorticity \citep{arnold2006finite}. This structure has been studied in spectral variational ROMs \citep{liu2015model}, Fluid Cohomology \citep{yin2023fluid}, and Hodge-based neural operator learning \citep{zheng2026hodge}. In our framework, we learn these harmonic fields through fixed boundary traces and Dirichlet energy minimization, then convert them into velocity fields through differential lifts. Together, the positive modes and harmonic coordinates represent both vortical motion and independent circulation around obstacles.

Circulation is especially important in flow around lifting airfoils. In idealized airfoil flow, it is closely related to the Kutta condition and the generation of lift. A basis containing only positive spectral modes cannot represent the independent circulation needed for such a flow, regardless of how accurately those modes are learned. Harmonic completion therefore provides more than a mathematical correction; it adds degrees of freedom needed to describe certain physical phenomena. However, when the total number of modes is fixed, adding harmonic fields also means removing the same number of positive modes. Although this makes the representation topologically complete, it does not always improve energy capture. The benefit depends on how much of the target flow lies in the vortical and harmonic sectors.

\begin{samepage}
Our main contributions are as follows:
\begin{enumerate}[leftmargin=*,itemsep=3pt,topsep=4pt]
\item \textbf{Physically constrained neural eigenfluid bases.} We introduce a continuous, geometry-conditioned spectral representation that combines differential lifts, impermeable boundary conditions, and geometry-local orthogonalization. The resulting basis satisfies key kinematic constraints by construction. For a selected geometry, reusable Galerkin operators support autonomous coefficient evolution and flow reconstruction at arbitrary spatial locations.
\item \textbf{Potential-space spectral learning with gauge selection.} We formulate a Rayleigh learning objective on selected potential representatives and derive a spectral-variance relation connecting potential-space and velocity-space objectives through their shared positive spectrum. We also examine the accuracy and computational cost of both approaches in two and three dimensions.
\item \textbf{Neural harmonic circulation fields.} We learn independent harmonic coordinates using fixed boundary traces and Dirichlet energy minimization, completing the flow representation in multiply connected domains. Through fixed-capacity comparisons and controlled tests of boundary-lift conditioning, we study how these coordinates affect energy capture and flow evolution.
\end{enumerate}
\end{samepage}

\par\endgroup

\section{Method}
\label{sec:method}
\begingroup
The construction has two stages (\docref{Figure}{fig:method}). During learning, the differential lifts of Neural Conservation Laws \citep{richterpowell2022neural} and potential boundary traces restrict the candidates to divergence-free, impermeable velocities. Geometry-local spectral orthogonalization follows Shape Space Spectra \citep{chang2025shapespacespectra}; harmonic circulation fields complete the span as in spectral fluid reduction \citep{liu2015model}. For a selected geometry, initial projection and preassembled Galerkin operators turn this spatial basis into an autonomous coefficient model, which we advance using the coefficient-space evolution equations of Laplacian Eigenfluids \citep{dewitt2012fluid}. The resulting state can be reconstructed at any requested spatial locations and output times.
\par\endgroup
\begin{figure}[!htb]
\centering\includegraphics[width=\linewidth,height=.72\textheight,keepaspectratio]{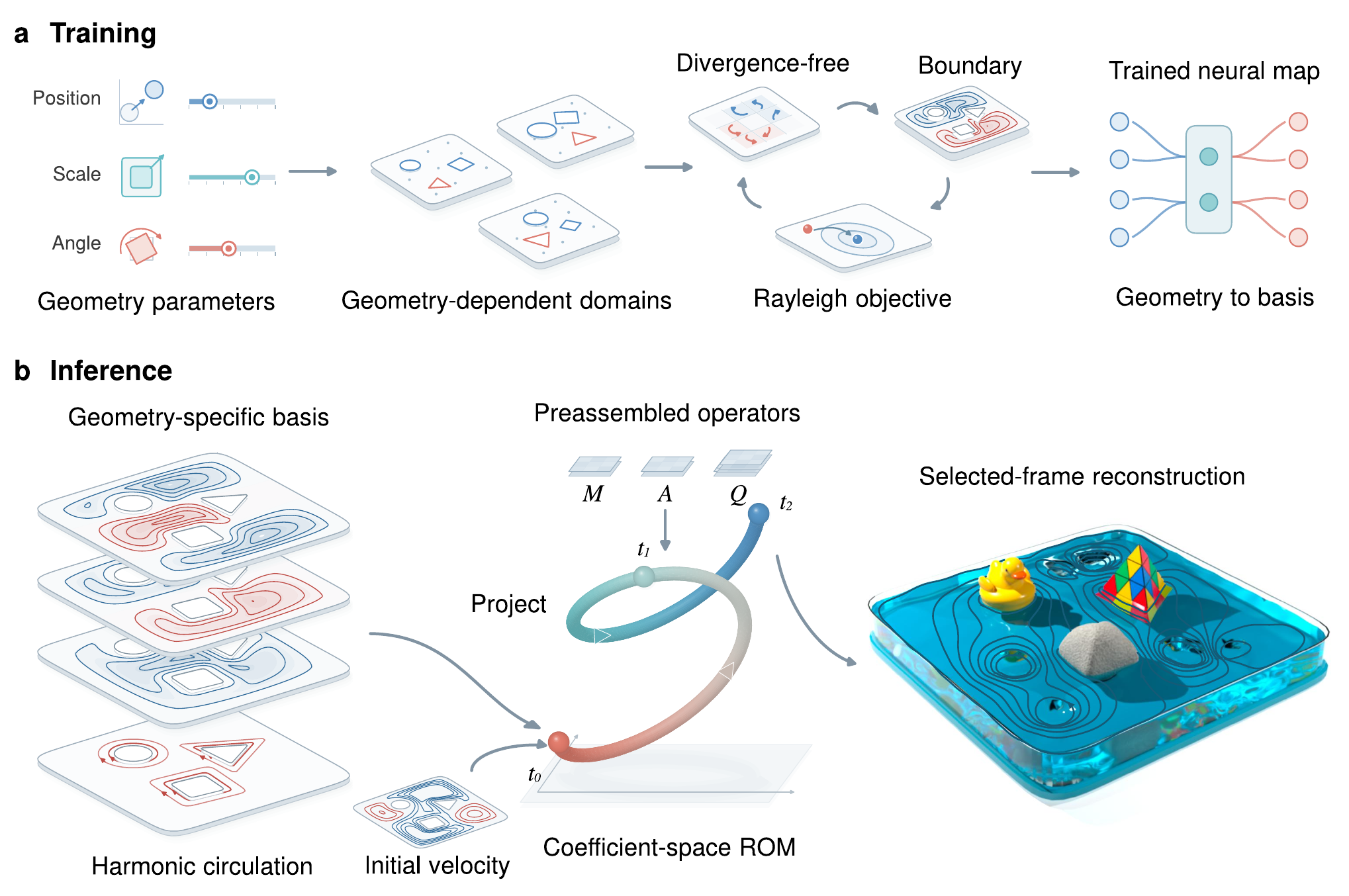}
\caption{\textbf{Shape-dependent training and coefficient-only inference.} (a) Geometry parameters vary obstacle position, scale and orientation within a fixed rounded-square outer boundary. A hard boundary envelope acts on geometry-conditioned potentials. Paired, oppositely oriented red and blue marks depict the antisymmetric tensor in the Neural Conservation Laws (NCL) construction {\citep{richterpowell2022neural}}; its row-wise divergence generates divergence-free velocities. {Gauge fixing and geometry-local causal orthogonalization produce normalized candidates within the divergence-free, impermeable trial space; the Rayleigh contours depict their energy minimization.} Training yields the parameter-to-basis neural map. (b) For a selected geometry, the map supplies positive modes, completed by harmonic circulation fields. Initial projection and geometry-specific preassembled mass, linear and quadratic operators support coefficient-only ROM stepping; spatial fields are reconstructed at selected times. The schematic uses $M$ for the velocity mass matrix, $A$ for the reduced linear operator, $Q$ for the quadratic advection tensor, and $t_0,t_1,t_2$ for selected output times. }
\label{fig:method}
\end{figure}

\subsection{The velocity space: positive modes and harmonic circulation}
Let $\Omega_{\mathbf g}\subset\R^D$ be a connected fluid domain of spatial dimension $D\in\{2,3\}$, indexed by geometry parameters $\mathbf g$, with outward unit normal $\mathbf n$. Here $\R$ is the real number space, $\mathbf x=(x_1,\ldots,x_D)$ is a spatial point, written $(x,y)$ in 2D or $(x,y,z)$ in 3D, and $\mathbf u$ denotes velocity. For scalar or vector fields $a,b$, we use the geometry-local mass inner product
\begin{equation}
\ip{a}{b}_{\mathbf g}=\int_{\Omega_{\mathbf g}}a\cdot b\,\dd\mathbf x,
\qquad \norm a_{\mathbf g}^2=\ip a a_{\mathbf g}.
\label{eq:metric}
\end{equation}
The dot is the Euclidean inner product (ordinary multiplication for scalars); $\dd\mathbf x$ is volume measure. For a fixed domain we write $\Omega=\Omega_{\mathbf g}$ and suppress the geometry subscript on the inner product and norm. The operators $\diver$ and $\curl$ denote spatial divergence and curl.
{An admissible velocity satisfies $\diver\mathbf u=0$ and $\mathbf u\cdot\mathbf n=0$: the wall is impermeable and tangential motion is permitted.} We order this space by vorticity energy in 2D and curl energy in 3D, giving the divergence-free absolute Hodge problem {\citep{arnold2006finite}}. Its zero-energy fields form the harmonic sector; its positive-energy eigenfields form the positive sector. To express both dimensions together, write $\alpha=\mathbf u^\flat$ for the velocity one-form, with $\flat$ the metric dual, $d$ for exterior differentiation, $\delta$ for its Hilbert adjoint, and $\star$ for the Hodge star. The contraction $\iota_{\mathbf n}$ extracts the normal trace. The subscript $A$ denotes absolute boundary conditions; the space $\Harm_A^1$ of harmonic velocity one-forms is
\begin{equation}
\Harm_A^1=\{\alpha:d\alpha=0,\ \delta\alpha=0,\ \iota_{\mathbf n}\alpha=0\}.
\label{eq:harmonic-space}
\end{equation}
These fields have zero Hodge energy but can carry circulation. For positive mode index $i$, let $\Phi_i$ be a normalized velocity eigenfield and $\lambda_i$ its eigenvalue. On admissible velocities orthogonal to the harmonic sector, they satisfy the weak eigenproblem for every admissible test field $v$:
\begin{equation}
\ip{d\Phi_i^\flat}{d v^\flat}=\lambda_i\ip{\Phi_i}{v},
\qquad \lambda_i>0,
\label{eq:target}
\end{equation}
\begingroup
The absolute Hodge realization has $\Phi_i\cdot\mathbf n=0$ and $\iota_{\mathbf n}d\Phi_i^\flat=0$; the latter is zero boundary vorticity in 2D and $\mathbf n\times\curl\Phi_i=0$ in 3D. These are the natural Hodge slip conditions of the spectral problem. The envelope imposes impermeability directly; the energy determines the natural eigenfield condition.
\par\endgroup

\subsection{Admissible neural potentials and the differential lift}
\begingroup
Geometry determines the domain, quadrature and boundary envelope. Following the geometry-conditioned spectral representation of \citet{chang2025shapespacespectra}, the network $f_\theta(\mathbf x,\mathbf g)$ has trainable parameters $\theta$ and outputs $K$ potential candidates $f_{\theta,i}$, indexed by $i=1,\ldots,K$, one per mode. The symbol $\widetilde\mu_i$ denotes the enveloped candidate before gauge selection and normalization: a scalar streamfunction in 2D and a vector potential in 3D. Geometry-family training samples $\mathbf g$ during optimization; fixed-domain training specializes this map to one geometry.
\par\endgroup

On a bounded domain we multiply each candidate by a differentiable geometry envelope $e_{\mathbf g}$ that vanishes on the domain boundary $\partial\Omega_{\mathbf g}$,
\begin{equation}
\widetilde\mu_i(\mathbf x,\mathbf g)=e_{\mathbf g}(\mathbf x)f_{\theta,i}(\mathbf x,\mathbf g),
\qquad e_{\mathbf g}|_{\partial\Omega_{\mathbf g}}=0.
\label{eq:envelope}
\end{equation}
The zero envelope enforces a sufficient relative potential trace: zero streamfunction trace in 2D and zero tangential vector-potential trace in 3D. The periodic benchmark uses a periodic encoding and identity envelope. For a potential $\mu$, write $\psi$ for its scalar streamfunction in 2D and $\mathbf A=(A_1,A_2,A_3)$ for its vector potential in 3D. With $\nabla$ the spatial gradient, $\partial_x,\partial_y$ coordinate derivatives, and $\nabla^\perp$ the rotated gradient, the potential-to-velocity map $\C=\star d$ specializes to
\begin{equation}
\mathbf u=\C\mu=
\begin{cases}
\nabla^\perp\psi=(\partial_y\psi,-\partial_x\psi),&D=2,\\
\curl\mathbf A,&D=3.
\end{cases}
\label{eq:lift}
\end{equation}
Equivalently, the Neural Conservation Laws (NCL) construction represents velocity as the row-wise divergence of an antisymmetric matrix field \citep{richterpowell2022neural}: $u_i=\sum_j\partial_j\mathsf S_{ij}$ with $\mathsf S=-\mathsf S^{\top}$. Here $u_i$ is velocity component $i$, $\partial_j$ differentiates coordinate $j$, the component indices run from $1$ to $D$, and $\top$ denotes transpose. For smooth fields, antisymmetry and commutation of mixed derivatives give
\begin{equation}
\diver\mathbf u=\frac12\sum_{i,j}\partial_i\partial_j
\bigl(\mathsf S_{ij}+\mathsf S_{ji}\bigr)=0.
\label{eq:ncl-antisymmetric}
\end{equation}
\begingroup
In 3D the equivalent antisymmetric potential is the $3\times3$ tensor $\mathsf S$. With $\epsilon_{ijk}$ the Levi--Civita symbol, normalized by $\epsilon_{123}=1$, its entries are $\mathsf S_{ij}=\sum_k\epsilon_{ijk}A_k$. In 2D, $\mathsf S_{12}=\psi$ and $\mathsf S_{21}=-\psi$.
\par\endgroup
These yield the rotated gradient and curl in \doceqref{eq:lift}, respectively. This cancellation is the coordinate form of $d^2=0$. In 2D a constant streamfunction trace has zero tangential derivative and hence zero normal velocity. In 3D the relative trace $\mathbf n\times\mathbf A=0$ gives a curl with zero normal component. Linear combinations preserve these identities, so the parameterization builds the kinematic constraints into the spectral trial space.

\begingroup
Several potentials generate the same velocity. Gauge fixing chooses one representative so that its mass norm, orthogonalization and Rayleigh objective measure the flow-producing coordinates consistently.
\par\endgroup

\subsection{Gauge representatives and the potential metric}
\label{sec:gauge}
\paragraph{Scalar constants in 2D.}
{Periodic streamfunctions admit the constant freedom $\nabla^\perp(\psi+c)=\nabla^\perp\psi$. Here $c$ is an arbitrary scalar constant. We select the centered representative $\psi^\circ$ by subtracting the area-weighted mean $\langle\psi\rangle_\Omega$. Its quadrature approximation uses the node--weight set $Q=\{(\mathbf x_q,w_q)\}_q$, with node index $q$ and positive weights $w_q$:}
\begin{equation}
\psi^\circ=\psi-\langle\psi\rangle_\Omega,
\qquad \langle\psi\rangle_{\Omega,Q}=\frac{\sum_q w_q\psi(\mathbf x_q)}{\sum_q w_q}.
\label{eq:scalar-gauge}
\end{equation}
{This centering removes the constant null coordinate before normalization, ordering, and Gram--Schmidt, and is differentiated during training. It makes the potential norm measure the spatially varying field: an unconstrained constant could otherwise lower the Rayleigh ratio arbitrarily without changing velocity.}

For the bounded scalar Dirichlet problem, $\psi=0$ on the boundary selects the representative. The multi-hole experiment uses this boundary-pinned mass norm. Scalar-field displays subtract one global spatial mean, preserving derivatives and circulation while giving the plots a common gauge. \docref{Appendix}{app:scalar} details the distinction between the boundary-pinned training metric and the centered display convention.

\paragraph{Curl-free directions in 3D.}
{The vector-potential gauge kernel contains admissible gradients and may also contain relative harmonic fields. We first remove the gradients with a Coulomb projection, as also used in vector-potential fluid editing \citep{chen2026obstacle}. Let $\widetilde{\mathbf A}$ be the raw potential, $\phi$ its scalar gauge correction, $\Delta=\nabla\cdot\nabla$ the Laplacian, and $\mathbf A'$ the gradient-corrected field:}
\begin{equation}
\Delta\phi=\diver\widetilde{\mathbf A},\qquad \phi|_{\partial\Omega}=0,
\qquad \mathbf A'=\widetilde{\mathbf A}-\nabla\phi,
\label{eq:poisson-gauge}
\end{equation}
We then obtain the canonical potential $\mathbf A^\circ$ using the identity operator $I$ and the mass-orthogonal projector $P_{\Harm_R^1}$ onto the relative harmonic potential space $\Harm_R^1$ (the subscript $R$ denotes relative boundary conditions):
\begin{equation}
\mathbf A^\circ=(I-P_{\Harm_R^1})\mathbf A'.
\label{eq:vector-gauge}
\end{equation}
The exact representative is orthogonal to the kernel of $\C$, has zero divergence, and preserves curl and the relative boundary trace. A geometry-dependent solve handles all mode right-hand sides together. For the Metaball Potential objective (the canonical curl quotient), the full relative compatible curl complex selects the mass-orthogonal canonical representative, with physical solve actions in FP64. The Potential-AL and Potential-FG controls use augmented-Lagrangian constraints and finite scalar/harmonic projection, respectively, as detailed in \docref{Appendix}{app:old-Pot}. The removed relative harmonic \emph{potential} fields $\Harm_R^1$ belong to the gauge kernel. The absolute harmonic \emph{velocity} fields $\Harm_A^1$ carry physical circulation and enter the completed velocity basis.

\subsection{Rayleigh learning and the effect of objective choice}
Denote the selected potential representative by $\mu^\dagger$: canonical for the periodic/vector quotient and boundary-pinned for the scalar Dirichlet experiment. {Following variational neural eigenfunction learning \citep{e2018deep,pfau2019spectral,rowan2025rayleigh}, the potential Rayleigh ratio $R_p$ and its Hodge velocity counterpart $R_v$ are}
\begin{equation}
R_p(\mu^\dagger)=\frac{\norm{\C\mu^\dagger}^2}{\norm{\mu^\dagger}^2},
\qquad R_v(\mu^\dagger)=\frac{\norm{d(\C\mu^\dagger)}^2}{\norm{\C\mu^\dagger}^2}.
\label{eq:rayleigh}
\end{equation}
For a streamfunction these are $\norm{\nabla\psi^\dagger}^2/\norm{\psi^\dagger}^2$ and $\norm{\Delta\psi}^2/\norm{\nabla\psi}^2$. For a canonical vector potential they are $\norm{\curl\mathbf A^\circ}^2/\norm{\mathbf A^\circ}^2$ and $\norm{\curl\curl\mathbf A^\circ}^2/\norm{\curl\mathbf A^\circ}^2$. We call our family of geometry-based neural spectral methods \textbf{Neural Eigenfluid Bases (NEB)}. \textbf{Velocity} uses a velocity-space Rayleigh objective, optimizing the energy and mass of the lifted velocity; its network can still parameterize a potential before the differential lift. \textbf{Potential} uses a potential-space Rayleigh objective, with a scalar streamfunction in 2D and a mass-orthogonal canonical vector potential in 3D. Its control variants are \textbf{Potential-FG}, with finite scalar/harmonic gauge projection, and \textbf{Potential-AL}, with augmented-Lagrangian constraints. The potential-space objective uses one fewer spatial derivative. We also explored a vorticity-space objective, but its slower training brought little gain in accuracy in our preliminary tests. It may have advantages in particular settings, though we do not pursue it further here. Our baselines are \textbf{Laplacian}, the numerically solved spectral basis \citep{dewitt2012fluid}, and \textbf{NKB}, the direct-velocity Neural Kinematic Basis \citep{liu2025neural}.

The lower-order objective is linked to the velocity spectrum by the differential lift. Write $\C^*$ for the mass-inner-product adjoint of $\C$, and $\mu_i$ for a normalized potential eigenmode. If $\C^*\C\mu_i=\lambda_i\mu_i$ with $\lambda_i>0$ and $\norm\mu_i=1$, then
\begin{equation}
\Phi_i=\frac{\C\mu_i}{\sqrt{\lambda_i}},\qquad
\C\C^*\Phi_i=\lambda_i\Phi_i,\qquad \norm{\Phi_i}=1.
\label{eq:paired-lift}
\end{equation}
\begingroup
Under compatible relative/absolute domains, $\C\C^*$ is the positive Hodge velocity operator. The classical Hodge pairing gives the same exact nonzero spectrum, including multiplicities \citep{arnold2006finite}. This includes mean-zero periodic fields and decaying free-space fields when integration by parts has no boundary contribution. The equivalence concerns eigenvalues and their lifted eigenspaces; for arbitrary approximate candidates the objectives obey \doceqref{eq:variance}. \docref{Appendix}{app:spectrum} gives the operator proof and its free-space specialization.
\par\endgroup

The objectives still differ away from an exact eigenspace. For a potential expansion $\mu=\sum_j a_j\mu_j$, with spectral coefficients $a_j$ and mode index $j$ in an orthonormal potential eigenbasis,
\begin{equation}
R_p=\frac{\sum_j\lambda_j|a_j|^2}{\sum_j|a_j|^2},\qquad
R_v=\frac{\sum_j\lambda_j^2|a_j|^2}{\sum_j\lambda_j|a_j|^2}\ge R_p.
\label{eq:moments}
\end{equation}
Writing $p_j=|a_j|^2/\sum_\ell|a_\ell|^2$ defines probability weights on the eigenvalues. With $\mathbb E_p$ and $\operatorname{Var}_p$ denoting their weighted mean and variance, the difference is
\begin{equation}
R_v-R_p=\frac{\operatorname{Var}_p(\lambda)}{\mathbb E_p[\lambda]}.
\label{eq:variance}
\end{equation}
Velocity learning places extra weight on high-frequency content in an approximate mode. This changes the optimization of a finite neural representation even though the exact spectra agree. The periodic experiment measures the resulting cost--accuracy tradeoff. \docref{Appendix}{app:spectrum} derives the correspondence and the boundary term relating componentwise gradient energy to curl energy in 3D.

\subsection{Geometry-local ordering and causal orthogonalization}
Multiple independent outputs must span distinct low-frequency directions. For quadrature points $(\mathbf x_q,w_q)$ with $w_q>0$, define
\begin{equation}
\ip{a}{b}_Q=\sum_q w_q a(\mathbf x_q)\cdot b(\mathbf x_q).
\end{equation}
At each sampled geometry we select representatives, normalize them in the chosen metric, and sort detached preliminary Rayleigh values. The geometry-local permutation $\pi$ adapts the ordering to that domain. Let $\bar\mu_i$ be preliminary normalized representatives, $\nu_i$ their orthogonalization residuals, and $\widehat\mu_i$ the resulting unit-norm candidates, where $\|a\|_Q^2=\langle a,a\rangle_Q$. The operator $\sg$ stops differentiation through its argument. The training loss $\mathcal L(\theta)$ averages the quadrature Rayleigh ratio $R_{p,Q}$ over the sampled geometry distribution, denoted by $\mathbb E_{\mathbf g}$. For the potential objective, causal orthogonalization follows Shape Space Spectra \citep{chang2025shapespacespectra} and treats lower-ranked modes as fixed when learning the next mode:
\begin{align}
\nu_i&=\bar\mu_{\pi(i)}-\sum_{j<i}\ip{\bar\mu_{\pi(i)}}{\sg(\widehat\mu_j)}_Q\sg(\widehat\mu_j),\\
\widehat\mu_i&=\nu_i/\norm{\nu_i}_Q,
\qquad \mathcal L(\theta)=\mathbb E_{\mathbf g}\sum_{i=1}^K R_{p,Q}(\widehat\mu_i).
\label{eq:training}
\end{align}
A second orthogonalization pass reduces accumulated projection error. The causal gradient prevents a higher mode from changing a lower mode merely to simplify its own residual. Because gauge selection and the boundary constraints define linear spaces, the subsequent linear combinations preserve them.

The velocity objective uses the velocity inner product and corresponding energy, applying the same coefficient transform to every derivative channel. Spatial derivatives and parameter adjoints use an analytic CuPy implementation. At repeated eigenvalues, the identifiable object is the eigenspace; the mode atlas is interpreted up to signs and orthogonal rotations within that space.

\begin{figure}[!t]
\centering\includegraphics[width=0.7\linewidth,height=.72\textheight,keepaspectratio]{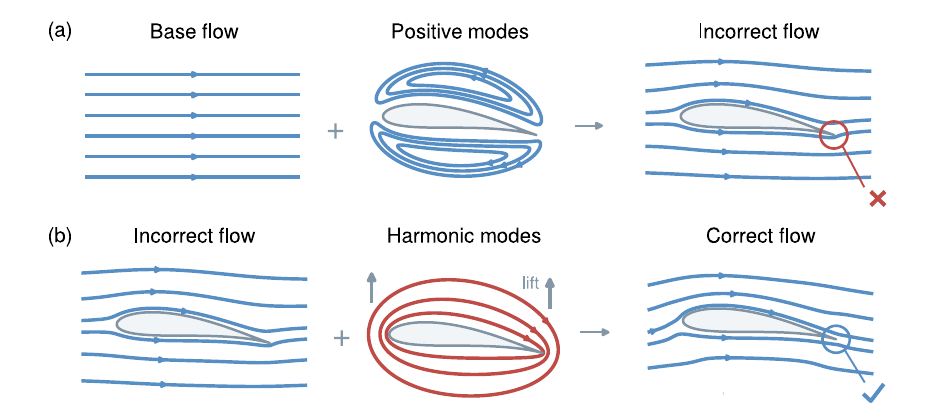}
\caption{\textbf{Harmonic circulation completes the flow representation.} (a) A base flow combined with positive modes gives an incomplete, zero-circulation description of a lifting airfoil, with an incorrect sharp-trailing-edge flow. (b) Adding harmonic circulation permits smooth outflow satisfying the Kutta condition and generates lift in the presence of the base flow. Streamlines schematically illustrate the decomposition in \doceqref{eq:completion}.}
\label{fig:harmonic-airfoil}
\end{figure}

\subsection{Harmonic completion at fixed capacity}
\label{sec:rom}
{Harmonic zero-frequency fields complete the spectral ROM basis in \citet{liu2015model}. We construct their neural counterparts with a dedicated boundary-value objective. Zero-trace positive streamfunctions describe the vortical sector; independent curl-free circulation is supplied by differences between boundary constants. Let $h$ be the number of holes, $\Gamma_0$ the outer boundary, and $\Gamma_m$ the boundary of obstacle $m=1,\ldots,h$. For obstacle $\ell$, its harmonic measure $\chi_\ell$ solves the following problem, where $\delta_{\ell m}$ is the Kronecker delta:}
\begin{equation}
\Delta\chi_\ell=0,\qquad
\chi_\ell|_{\Gamma_m}=\delta_{\ell m},\qquad
\chi_\ell|_{\Gamma_0}=0,
\label{eq:harmonic-measure}
\end{equation}
The velocities $h_\ell=\nabla^\perp\chi_\ell$ are divergence-free, curl-free, and tangent to every boundary. Their span has one independent coordinate per hole. Cardinal boundary values are convenient coordinates; a further invertible circulation matrix can convert them into unit-period coordinates.

{Let $L_\ell$ be a fixed lift of the cardinal boundary traces and $n_{\theta,\ell}$ a neural correction. The parameterization $\chi_\ell=L_\ell+e_{\mathbf g}n_{\theta,\ell}$ gives the correction zero trace. The harmonic training loss $\mathcal L_H$ therefore minimizes Dirichlet energy within this prescribed boundary-value space:}
\begin{equation}
\mathcal L_H(\theta;\mathbf g)=\sum_{\ell=1}^{h}\int_{\Omega_{\mathbf g}}|\nabla\chi_\ell|^2\,\dd\mathbf x,
\label{eq:harmonic-loss}
\end{equation}
{Fixed traces determine the scale of a harmonic measure, so this numerator-only objective directly realizes the Dirichlet principle through neural energy minimization \citep{e2018deep}. \docref{Appendix}{app:harmonic-loss} gives the variational derivation. Raw and scaled lifts parameterize the same boundary-value problem and correction space with different optimization coordinates. Holding the positive checkpoint fixed while changing the lift isolates this conditioning effect.}

With $h$ holes and total dimension $K$, the completed representation contains $K-h$ positive modes and $h$ harmonic modes. Let $\mathbf u_{\rm base}$ supply any prescribed base flow, and $a_i,b_\ell$ be the positive and harmonic coefficients:
\begin{equation}
\mathbf u=\mathbf u_{\rm base}+\sum_{i=1}^{K-h}a_i\Phi_i+\sum_{\ell=1}^{h}b_\ell h_\ell.
\label{eq:completion}
\end{equation}
The prescribed base flow carries nonhomogeneous data when present. Writing $P_K=\operatorname{span}\{\Phi_1,\ldots,\Phi_K\}$ and $H_h=\operatorname{span}\{h_1,\ldots,h_h\}$, the comparison $P_K$ versus $P_{K-h}+H_h$ changes the allocation of coordinates while keeping their total number fixed. We abbreviate these allocations as P and P+H (or PH). They test topological completeness at the same capacity. \docref{Appendix}{app:topology-rom} illustrates this distinction on an annulus and gives the circulation normalization.

\begingroup
Circulation is particularly important in lifting-airfoil flow. In the ideal slip-flow illustration of \docref{Figure}{fig:harmonic-airfoil}, the Kutta condition selects a circulation $\Gamma$ and the classical Kutta--Joukowski lift per unit span has magnitude $\rho U_\infty|\Gamma|$, where $\rho$ is fluid density and $U_\infty$ the freestream speed \citep{batchelor2000fluid}. The harmonic coordinate permits that circulation to be represented alongside the base flow and positive modes. This motivates including the sector whenever the target physics requires it; the multi-hole experiments quantify its allocation at a fixed coordinate budget.
\par\endgroup

\subsection{From the learned span to autonomous dynamics}
{Write the reduced velocity as $u=u_{\rm base}+Uc$, where column $U_i$ is basis field $i$ and $c=(c_1,\ldots,c_K)^{\top}$ is the coefficient state. Following spectral fluid Galerkin reduction \citep{dewitt2012fluid,liu2015model}, let $t$ denote simulation time and $\dot c=\dd c/\dd t$. The full velocity mass matrix $M$ defines the coefficient metric, the vorticity/curl stiffness matrix $S$ supplies viscous dissipation, $\nu$ is kinematic viscosity, and $Q(c,c)$ is the quadratic advective load. Projection of the homogeneous unforced Hodge model gives}
\begin{equation}
M\dot c=Q(c,c)-\nu S c,\qquad
M_{ij}=\ip{U_i}{U_j},\quad S_{ij}=\ip{dU_i^\flat}{dU_j^\flat},
\label{eq:reduced-dynamics}
\end{equation}
\begingroup
For indices $i,j,k=1,\ldots,K$, advection gives the quadratic load $[Q(c,c)]_i=\sum_{j,k}Q_{ijk}c_jc_k$, with $Q_{ijk}=-\langle U_i,(U_j\cdot\nabla)U_k\rangle$.
\par\endgroup
{A base flow or force contributes the corresponding constant and linear terms. Exact orthonormal eigenfields make $M$ the identity and $S$ diagonal. For the learned span, retaining the full operators incorporates its measured metric and energy couplings into the dynamics.}

For initial velocity $u_0$, the initial coefficients solve $M c(0)=\ip{U}{u_0-u_{\rm base}(0)}$, where the vector inner product has entry $\langle U_i,u_0-u_{\rm base}(0)\rangle$. This initializes an autonomous coefficient trajectory. Numerical reference states enter the subsequent error measurement. Learned and numerical bases use the same stated discrete generator within each matched comparison, including the normal-penalty grid generator in 3D.

\paragraph{Offline assembly and coefficient-only integration.}
For a fixed span and a quadratic discrete fluid generator, {preassembled reduced operators \citep{treuille2006model,dewitt2012fluid,liu2015model} move full-field evaluation offline. Here $\mathcal L$ denotes the specified linear fluid generator (viscosity and any linear penalty or base-flow terms), while $\mathcal N$ denotes its bilinear advective part. Writing the generator as $\mathcal L u+\mathcal N(u,u)$ gives the reduced linear matrix $A$ and quadratic tensor $\mathcal Q$ below. The subscript $h$ denotes the discrete spatial quadrature:}
\begin{equation}
M\dot c=A c+\mathcal Q(c,c),\qquad
A_{ij}=\ip{U_i}{\mathcal L U_j}_h,\quad
\mathcal Q_{ijk}=\ip{U_i}{\mathcal N(U_j,U_k)}_h.
\label{eq:offline-rom}
\end{equation}

The same quadrature defines $M$ and the two assembled operators. During inference, each time-integration stage contracts these fixed operators with the current coefficient vector and solves the full mass system using its cached Cholesky factor. The recurrent state is therefore the reduced vector $c(t)$ throughout the trajectory. Spatial reconstruction $u(t)=u_{\rm base}(t)+Uc(t)$ is performed only at requested output times, as illustrated in \docref{Figure}{fig:method}. This is an algebraic reorganization of the chosen Galerkin system. \docref{Appendix}{app:coefficient-acceleration} measures operator equivalence, integration time, offline cost, and flow error separately.

\section{Experiments}
\label{sec:experiments}
We evaluate spectral objectives, harmonic completion, and reduced dynamics through the controlled comparisons in \docref{Table}{tab:experiment-map}.
\begin{table}[!htb]
\centering\small
\caption{{Validation goals and controlled comparisons.}}
\label{tab:experiment-map}
\setlength{\tabcolsep}{5pt}
\renewcommand{\arraystretch}{1.13}
\begin{tabular}{@{}>{\raggedright\arraybackslash}p{.14\linewidth}>{\raggedright\arraybackslash}p{.36\linewidth}>{\raggedright\arraybackslash}p{\dimexpr.50\linewidth-4\tabcolsep\relax}@{}}\toprule
Case & Comparison & {Validation}\\\midrule
Periodic & Learned spectral and kinematic bases; same-capacity Laplacian & {Neural eigenfluid and potential-space feasibility; autonomous accuracy and decay.}\\
Multi-hole & Random small-scale fields and large-scale vortices; fixed-budget allocation and lift conditioning & {Topology-aware circulation completion, energy capture and harmonic conditioning.}\\
Metaball & Velocity-basis capacity; full-grid and assembled coefficient operators & {3D feasibility, capacity-dependent representation and coefficient-space acceleration.}\\\bottomrule
\end{tabular}
\end{table}

{The field comparisons show the numerical ground-truth flow reference (Numerical GT), a same-capacity Laplacian basis, a potential-trained basis, and a velocity-trained basis. Let $u(t)$ be a reduced velocity and $u_{\rm GT}(t)$ the numerical reference at time $t$. Annotations give relative velocity error $e_u(t)=\|u(t)-u_{\rm GT}(t)\|_{L^2}/\|u_{\rm GT}(t)\|_{L^2}$, where $L^2$ is the spatial mass norm over the full assessment domain and plotted percentages are $100e_u$; we also write $e(t)=e_u(t)$. A subscript GT denotes a reference quantity; shared color scales preserve amplitude differences. \docref{Appendix}{app:protocol} gives discretizations and convergence checks. \docref{Figure}{fig:multihole-modes} compares the multiply connected modes, complemented by the periodic and three-dimensional atlases in \docref{Appendix}{app:modes}.}

A fixed-case adaptation of Neural Kinematic Bases {\citep{liu2025neural}} provides an additional direct-velocity comparison under the stated training budget and common Galerkin evolution (\docref{Appendix}{app:nkb}). Experiments are conducted on a workstation with an Intel Core i9-14900K CPU and an NVIDIA GeForce RTX 4080 SUPER GPU.

\subsection{Case 1: Periodic vortex merger}
\label{sec:case1}
The periodic domain admits an analytic Fourier basis and a mean-zero scalar gauge, providing a controlled setting for autonomous flow with learned spatial coordinates. The vortex pair has separation $d=1.35$, with $K=1024$ modes and a final simulation time $T=18\pi$. All methods project the same initial velocity with the full mass matrix and retain every basis coordinate during evolution. {The Laplacian (Fourier) basis supplies the same-capacity numerical control, following Laplacian Eigenfluids \citep{dewitt2012fluid}.}

\paragraph{Autonomous evolution.}
\docref{Figure}{fig:case1-field} shows the potential-trained seed-0 trajectory through $t=0,9\pi,18\pi$. \docref{Figure}{fig:case1-wide} measures velocity and vorticity errors, kinetic-energy decay, and enstrophy decay on the same case. The spatial basis is reused throughout the trajectory.
\begingroup
Both learned spectral objectives sustain sub-percent velocity error. Velocity learning gives the higher accuracy, while potential learning reduces training cost, matching the cost--accuracy tradeoff predicted by the objective analysis (\docref{Figure}{fig:case1-wide}; \docref{Table}{tab:runtime-training}). The potential and NKB trajectories have nearly identical synchronized integration times, yet the spectral span gives substantially smaller flow error (\docref{Table}{tab:runtime-2d-online}). Spectral organization therefore improves autonomous accuracy at comparable recurrent cost under this matched protocol. The absolute mean squared velocity error is reported in \docref{Table}{tab:case1}.
\par\endgroup
\begin{table}[!htb]\centering\small
\caption{Periodic vortex merger, $d=1.35$, $T=18\pi$, $K=1024$. Velocity mean squared error (MSE) averages squared velocity-component errors over all 241 saved $128^2$ observation-grid states. Learned entries then average three seeds. Relative errors are visualized in \docref{Figure}{fig:case1-wide}; spectral training costs appear in \docref{Table}{tab:runtime-training}, and NKB costs in \docref{Table}{tab:runtime-nkb-training}. NKB uses the 2000-update adaptation in \docref{Appendix}{app:nkb}.}
\label{tab:case1}
\begin{tabular}{lr}\toprule
Basis & Velocity MSE\\\midrule
Laplacian & $9.737\times10^{-10}$\\
Velocity & $8.882\times10^{-9}$\\
Potential & $1.619\times10^{-7}$\\
NKB & $2.823\times10^{-3}$\\
\bottomrule\end{tabular}\end{table}
\begin{figure}[!htb]
\centering\includegraphics[width=\linewidth,height=.72\textheight,keepaspectratio]{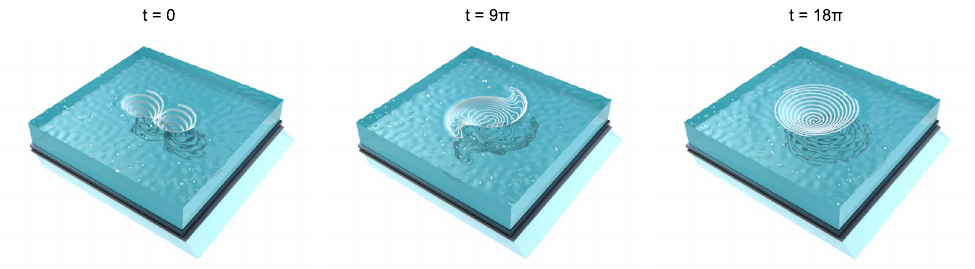}
\caption{\textbf{Potential: periodic vortex merger.} Panels show the $K=1024$, seed-0 Potential trajectory with vortex separation $d=1.35$, from left to right at $t=0,9\pi,18\pi$. Surface relief encodes smoothed vorticity magnitude on a common scale; white accents trace vortex structure. Height is a display coordinate for the two-dimensional flow.}
\label{fig:case1-field}
\end{figure}
\begin{figure}[!htb]
\centering\includegraphics[width=\linewidth,height=.72\textheight,keepaspectratio]{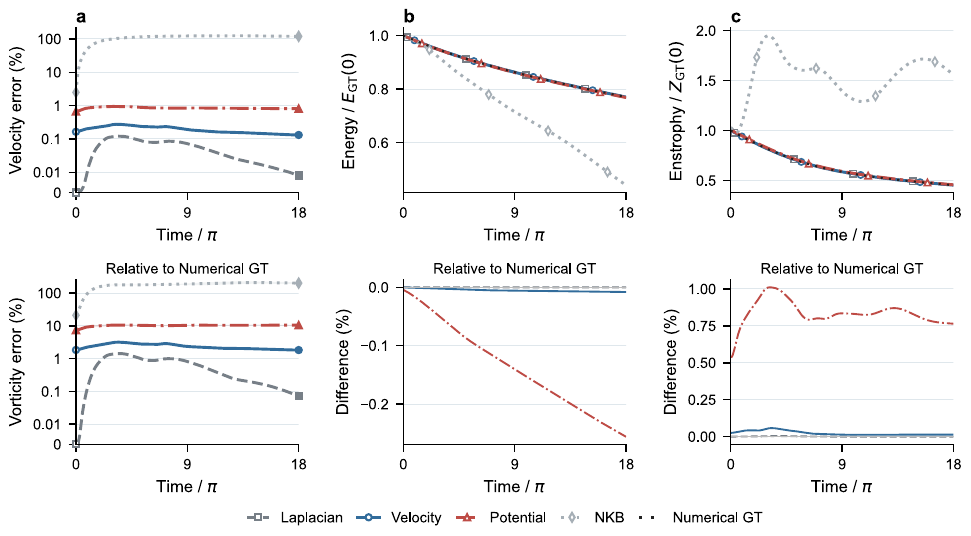}
\caption{\textbf{Autonomous accuracy and decay in vortex merger.} All panels use $K=1024$, $d=1.35$, $T=18\pi$. (a) Relative velocity (top) and vorticity (bottom) errors against Numerical GT, normalized by the corresponding time-dependent observation-grid $L^2$ norm. Learned curves show three-seed means with one sample standard deviation (SD), on a symmetric-log scale with a linear interval below $0.01\%$. Open and filled markers indicate initial and final states. (b) Kinetic energy, $E=\frac12\langle|u|^2\rangle$. (c) Enstrophy, $Z=\frac12\langle\omega^2\rangle$, where $\omega=\partial_xu_y-\partial_yu_x$ is scalar vorticity, $u_x,u_y$ are velocity components, and $\langle\cdot\rangle$ denotes the spatial mean. {All five methods appear in the upper decay panels, using seed-0 learned trajectories and common initial GT normalization. Distinct line styles and sampled markers distinguish nearly overlapping curves. The lower panels of (b,c) show $100(E-E_{\rm GT})/E_{\rm GT}$ and $100(Z-Z_{\rm GT})/Z_{\rm GT}$ for Laplacian, velocity and potential bases.} Curves use all 241 saved $128^2$ states from $256^2$ operator quadrature and maximum step $0.005$.}
\label{fig:case1-wide}
\end{figure}

\subsection[Case 2: Multiply connected flow: energy compression and circulation]{Case 2: Multiply connected flow---energy compression and circulation}
\label{sec:case2}
Each obstacle introduces an independent circulation coordinate. We consider two random flow ensembles: streamfunctions with many small-scale vortical structures, and combinations of a few smooth, large-scale vortices. Together they test energy compression, harmonic completion, and lift conditioning. The small-scale random-field comparison uses total $K=1024$ throughout: $P_K$ and $P_{K-h}+H_h$ exchange positive and harmonic coordinates within the same state budget.

\paragraph{Energy captured per reduced coordinate.}
\begingroup
We sample streamfunctions in physical space from Gaussian lattice values and a quintic spline, then condition each draw to impermeability and zero boundary vorticity before analytic differentiation. Boundary constants are derived from the same draw, so its vortical and circulation content arise together. Frozen seeds $[20260928,\text{topology index},\text{field index}]$ define 20 reproducible fields per topology; indices 0 and 1 supply the six-field all-method dynamic comparison. \docref{Appendix}{app:random-streamfunction} specifies the generator, conditioning, and reference construction.
\par\endgroup

\begingroup
Let $W$ be the positive assessment quadrature matrix, with $\|v\|_W^2=v^{\top}Wv$ for sampled velocity $v$, and let $u_K=Uc_0$ be the full-mass projection of $u_0$, with initial coefficient vector $c_0=c(0)$. The captured kinetic-energy fraction is $\eta=\|u_K\|_W^2/\|u_0\|_W^2$. Potential bases capture more initial kinetic energy than equal-capacity Laplacian controls across all three topologies, with the largest gains in the two- and three-hole domains, while compressing the reference state by roughly three orders of magnitude (\docref{Figure}{fig:random-k1024-energy}). Most energy in these small-scale random fields lies in the vortical sector, represented by the spectrally ordered $P$ modes. Harmonic coordinates lift independent obstacle-boundary constants into circulation-carrying velocities \citep{arnold2006finite,liu2015model}. At fixed $K$, their added harmonic energy competes with the energy of the $h$ displaced positive modes. This energy allocation favors positive coordinates for the small-scale ensemble, while harmonic coordinates provide its independent circulation sector (\docref{Appendix}{app:random-streamfunction}).
\par\endgroup

\begin{figure}[!htb]\centering
\includegraphics[width=\linewidth,height=.72\textheight,keepaspectratio]{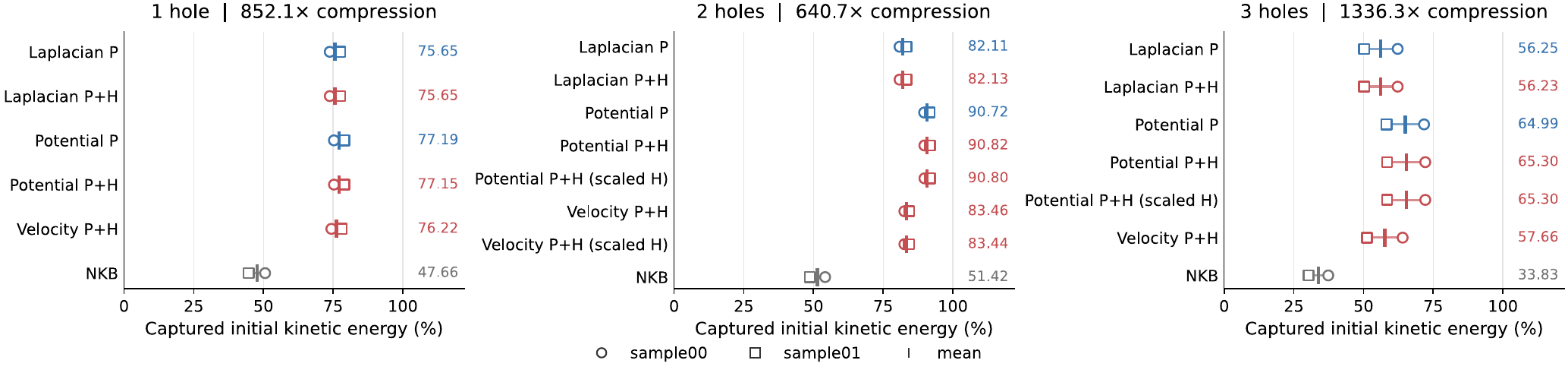}
\caption{Initial energy captured by the original $K=1024$ representations. Circles and squares denote the two predetermined physical fields in each topology; vertical ticks denote their mean. The annotations give mean captured-energy percentages and state-coordinate compression $N_{\rm ref}/K$, where $N_{\rm ref}$ is the number of independent coordinates of the reference state. The complete set of learned branches, raw/scaled harmonic coordinates, numerical controls, and NKBs use the same analytic initial fields.}
\label{fig:random-k1024-energy}
\end{figure}

\paragraph{Autonomous dynamics at the same capacity.}
\begingroup
Each representation evolves from its analytic-field projection using the full Galerkin operators, $\nu=1/750$, and $T=1$. The Potential P span combines higher initial energy capture with lower final trajectory error than the same-capacity Laplacian P control in every topology. \docref{Table}{tab:completion-improvement} quantifies the paired effects of coordinate allocation on energy capture and trajectory accuracy. The nonlinear evolution couples these coordinates within the retained span. With $e_P,e_{PH}$ and $\eta_P,\eta_{PH}$ denoting the errors and energy fractions under the two allocations, its gains are $\Delta e=e_P-e_{PH}$ and $\Delta\eta=\eta_{PH}-\eta_P$, with positive values denoting improvement. \docref{Appendix}{app:random-k1024-six} gives the full method--field curves, reference checks, energy histories, and circulation diagnostics; \docref{Appendix}{app:random-streamfunction} provides the broader numerical capacity and low-dimensional allocation studies.
\par\endgroup
\begingroup
\begin{table}[!htb]\centering\small
\caption{Paired changes under harmonic allocation on the six qualified random fields at total $K=1024$. Each entry averages the two fixed fields in its topology. Initial capture $\eta$ uses the analytic-target full-mass projection; final velocity error uses the qualified numerical trajectory. Gains are $\Delta\eta=\eta_{PH}-\eta_P$ and $\Delta e(1)=e_P(1)-e_{PH}(1)$ in percentage points, so positive values favor completion. Absolute capture and trajectories are shown in \docref{Figures}{fig:random-k1024-energy} and~\ref{fig:random-k1024-errors}.}
\label{tab:completion-improvement}
\setlength{\tabcolsep}{4pt}
\begin{tabular}{rlrr}\toprule
Holes & Basis & $\Delta\eta$ & $\Delta e(1)$\\\midrule
1 & Laplacian & +0.004 & +0.020\\
1 & Potential & -0.045 & +0.092\\
2 & Laplacian & +0.018 & +0.087\\
2 & Potential & +0.102 & -0.243\\
3 & Laplacian & -0.014 & +0.021\\
3 & Potential & +0.305 & +0.942\\
\bottomrule\end{tabular}\end{table}
\par\endgroup

\paragraph{Completion and conditioning in random large-vortex flows.}
\begingroup
We randomly construct flows from a few smooth, large-scale vortices, with random locations, signs, and strengths. These flows combine local vorticity with motion around the obstacles. At $K=32$, harmonic completion consistently lowers final error across all three obstacle configurations and brings the learned completed bases close to the numerical completed controls (\docref{Table}{tab:prescribed32-summary}). The boundary lifts make the independent circulation component accessible within a fixed-size reduced state. The two ensembles thus expose complementary demands on the basis: small-scale random fields emphasize vortical energy capture, while random large-vortex flows make topological completion and lift conditioning visible.
\par\endgroup

{At $K=1024$, raw/scaled pairs share the positive checkpoint, boundary traces, Dirichlet objective, initialization, and training budget. Scaling substantially lowers the multi-hole trajectory errors (\docref{Figure}{fig:case2-ablation}), whereas changing the spectral objective in the raw two-hole control has little effect. The paired comparison identifies harmonic-lift conditioning as the main source of this improvement. \docref{Table}{tab:case2} gives the corresponding positive-only controls.}

\begin{table}[!htb]
\centering\small
\caption{Positive-only controls for the random large-vortex ensemble: final relative velocity error (\%), total $K=1024$, $T=1$, averaged over 20 initial states. Learned positive modes use potential training, seed 1. Completed bases and raw/scaled lift comparisons are visualized in \docref{Figure}{fig:case2-ablation}.}
\label{tab:case2}
\begin{tabular}{rrr}\toprule
Holes & Numerical $P$ & Learned $P$\\\midrule
1 &32.542&32.124\\
2 &34.197&33.963\\
3 &36.685&36.707\\\bottomrule
\end{tabular}
\end{table}
{Once the harmonic fields are consistently conditioned, potential and velocity learning produce similar trajectory errors. \docref{Figure}{fig:case2-field} shows the resulting flow, and \docref{Appendix}{app:fields} gives the matched-objective and conditioning comparisons.}

\begin{figure}[!htb]
\centering\includegraphics[width=\linewidth,height=.72\textheight,keepaspectratio]{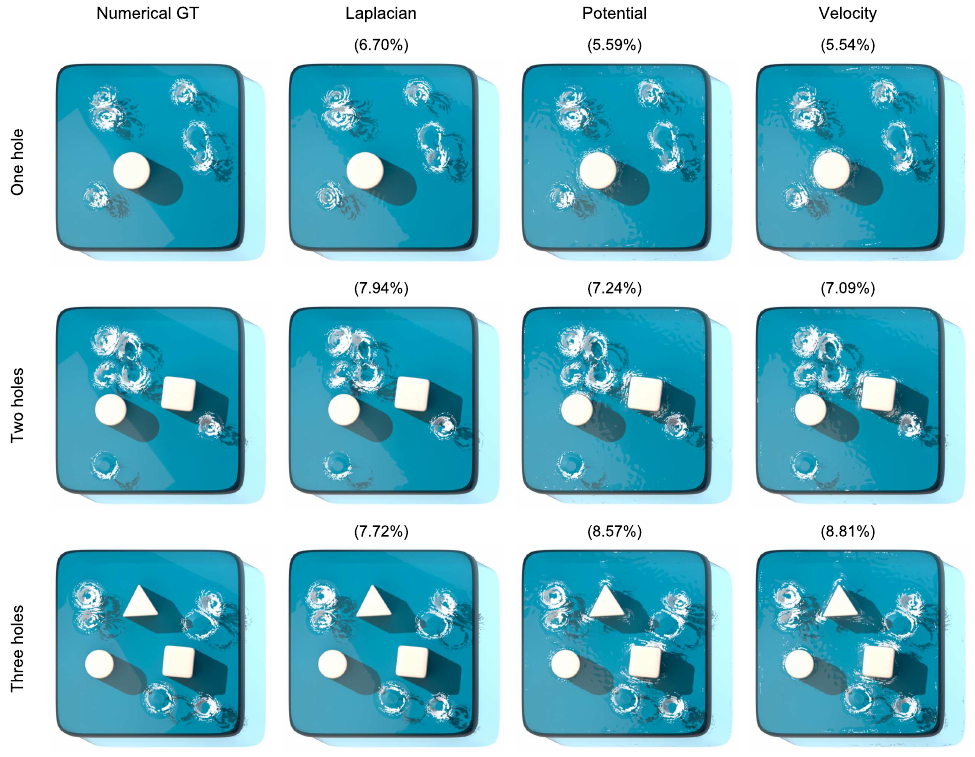}
\caption{\textbf{Random large-vortex flow at $K=1024$.} Rows: one-, two-, and three-hole domains. Columns: numerical reference, Laplacian, potential- and velocity-trained bases at $t=1$, seed 1, state 00. Reduced bases have $1024-h$ positive and $h$ harmonic coordinates; the learned objectives share raw $H_1$ and scaled $H_2,H_3$. Errors use the common level-2 reference. Overhead views share height and foam mappings within each geometry (\docref{Appendix}{app:fields}).}
\label{fig:case2-field}
\end{figure}

\begingroup
\paragraph{Topology-dependent spectral modes.}
\docref{Figure}{fig:multihole-modes} compares the first eight positive modes of the Laplacian and both learned constructions across three obstacle configurations. The aligned panels show each basis's spatial structure and ordering; independent harmonic coordinates complete the span.
\par\endgroup

\begin{figure}[!htb]
\centering\includegraphics[width=\linewidth,height=.72\textheight,keepaspectratio]{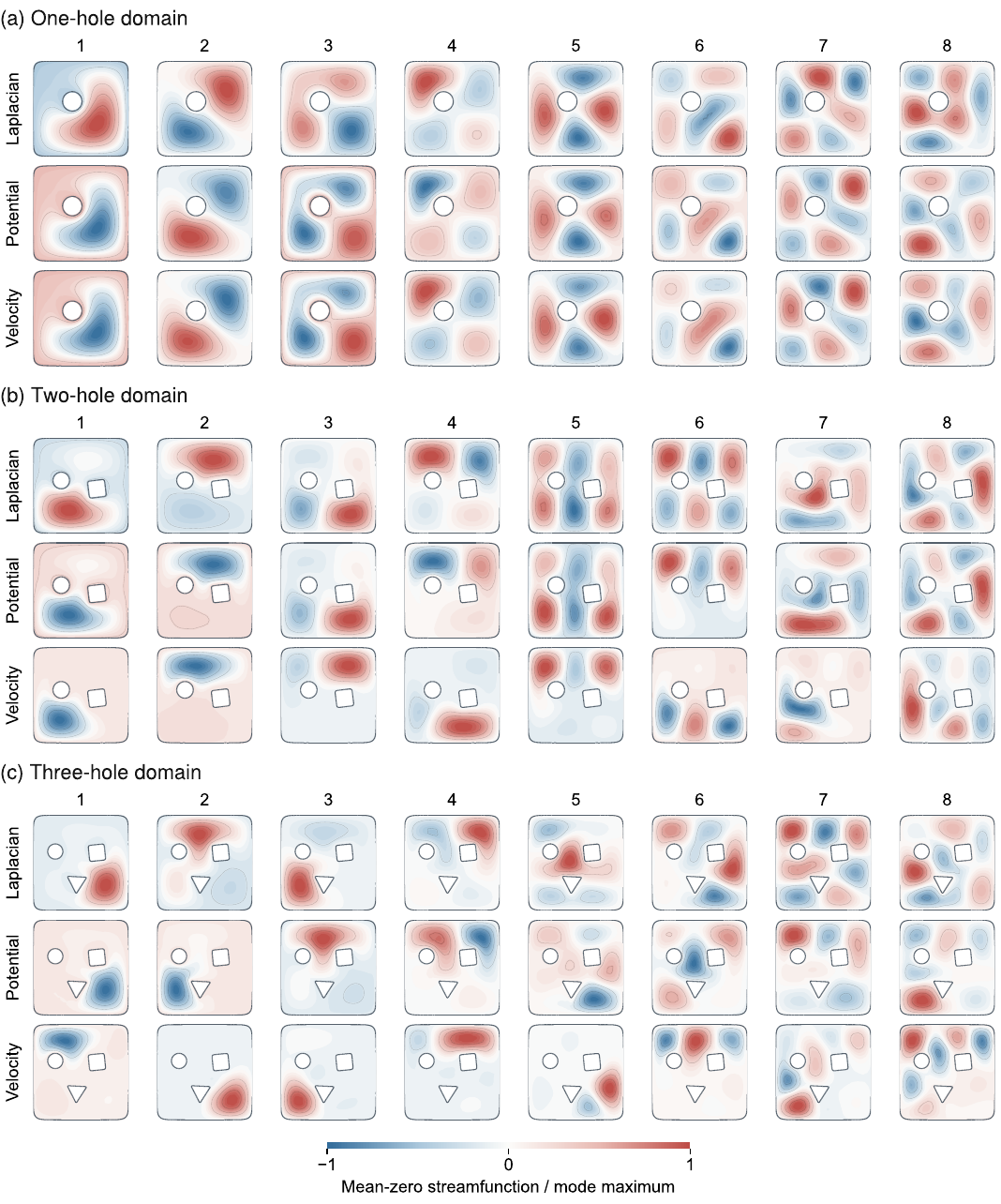}
\caption{\textbf{Positive spectral modes across obstacle topologies.} Panels (a--c) show one-, two-, and three-hole domains. Within each panel, rows compare Laplacian, potential-trained, and velocity-trained fields; columns show modes 1--8. Learned fields follow the positive-mode selection order in the total-$32$ continuous piecewise-linear finite-element (P1) basis, whose $h$ harmonic coordinates are separate. The shared color scale shows area-centered streamfunction divided by its per-mode maximum magnitude. Signs and rotations within repeated eigenspaces preserve the span.}
\label{fig:multihole-modes}\end{figure}

\subsection[Case 3: 3D Metaball: representation and autonomous evolution]{Case 3: 3D Metaball---representation and autonomous evolution}
\label{sec:case3}
The three-dimensional domain tests how a learned span supports autonomous dynamics. We separate the error due to spatial representation from the error in coefficient evolution, then measure the cost of evaluating the same reduced generator entirely in coefficient space.

\paragraph{Autonomous flow at 1024 modes.}
{\docref{Figure}{fig:case3-field} compares a velocity-trained $K=1024$ basis with same-capacity Laplacian and NKB bases against the common independent $128^3$ normal-penalty Navier--Stokes (NS) reference. All three evolve autonomously from full-mass projections of the same initial field. Eigenfluid and Laplacian begin with comparable reconstruction quality and develop similar relative errors during autonomous evolution. The vector-error panels in \docref{Figure}{fig:case3-field} also highlight similar regions of elevated error. Their agreement in error magnitude and spatial structure identifies the shared role of truncated Galerkin dynamics. Both methods compress the flow into the same number of coordinates, with nonlinear interactions resolved within that span. This correspondence connects the learned continuous basis to the established spectral ROM behavior at the same compression level. \docref{Figure}{fig:case3-capacity}(b) examines the joint contributions of spatial capacity and coefficient dynamics.}
\begin{figure}[!htb]
\centering\includegraphics[width=\linewidth,height=.72\textheight,keepaspectratio]{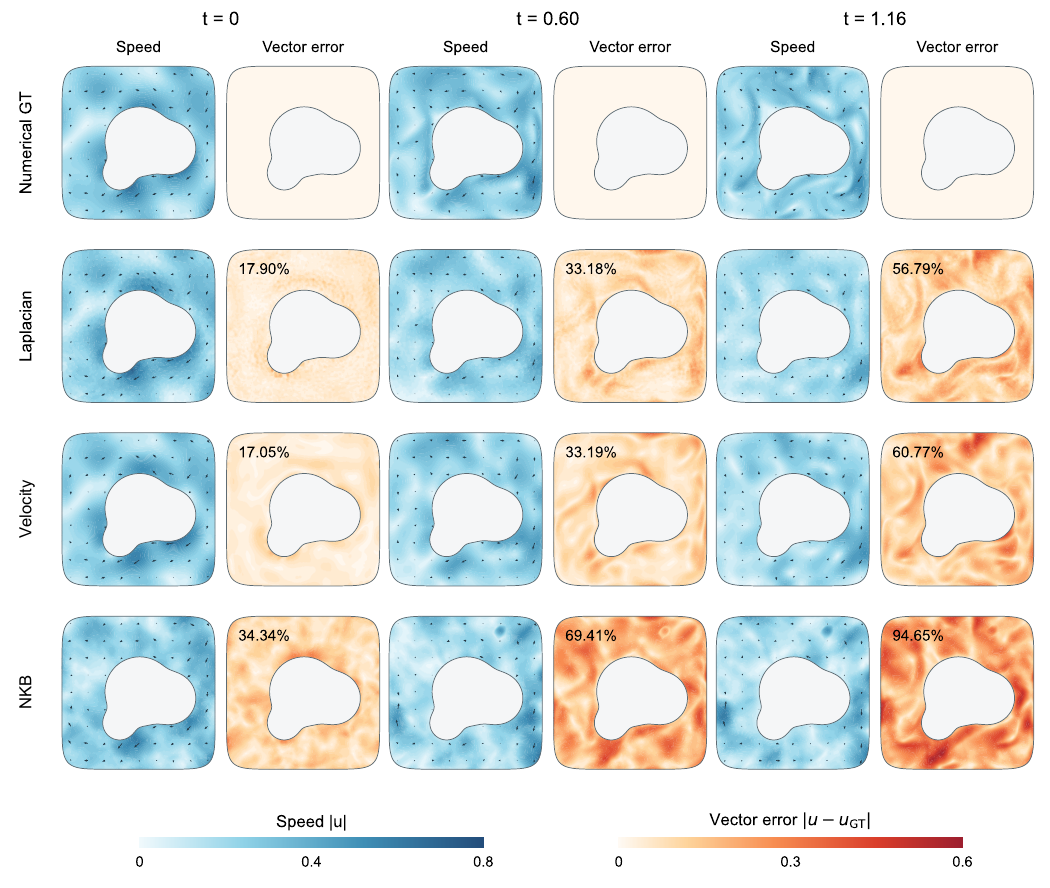}
\caption{\textbf{Three-dimensional autonomous flow at $K=1024$.} Rows show Numerical GT, Laplacian, Velocity, and NKB; columns show $t=0,0.60,1.16$. Each time pairs speed $|u|$ (blue) with pointwise vector error $|u-u_{\rm GT}|$ (orange--red) on the offset $z=0.31640625$ slice, which exposes the fused Metaball silhouette. Filled contours interpolate the saved nodal values and are clipped to the original geometry's smooth boundary. Speed, error, and in-plane arrow lengths use separate fixed scales across all methods and times. Percentages in the upper-left corner of the vector-error panels denote full-volume relative velocity error, evaluated on all $1.6\,\mathrm{M}$ fluid nodes. Here $\Delta t$ is the integration time step; ROM trajectories use $\Delta t=0.000625$ and the common NS reference uses $\Delta t=0.0025$.}
\label{fig:case3-field}
\end{figure}

\paragraph{Spatial capacity.}
\begingroup
Increasing capacity improves both initial reconstruction and autonomous accuracy under the same training budget, full-volume reference, and assessment times (\docref{Figure}{fig:case3-capacity}(b)). At $K=1024$, the learned velocity span has lower initial projection error than the Laplacian control. \docref{Table}{tab:finite-potential-initial} gives the original-field projection errors, and \docref{Appendix}{app:Pot-cost} reports training costs.
\par\endgroup

\paragraph{Representation of randomized 3D large-eddy fields.}
Figure~\ref{fig:case3-Pot-capacity}(a) compares the frozen Metaball spans on 20 independently randomized large-eddy initial states. A case-local Philox stream draws 96 independent standard-Gaussian sine and cosine coefficients for a three-component vector potential over 16 representatives of the integer wavevector pairs $\pm\boldsymbol{k}$ with $1\leq\|\boldsymbol{k}\|\leq2$. Coefficients are scaled by $1/\|\boldsymbol{k}\|$; the raw potential wavelengths range from $0.5$ to $1$ unit-domain length. The analytic curl follows the common $128^3$ Metaball incompressibility projection and boundary conditioning, RMS normalization to $0.32$, and four warm-up substeps. Each frozen basis projects the same states using the full physical mass matrix, all $1.6\,\mathrm{M}$ fluid nodes, and all $K$ retained coordinates. Errors decrease with capacity. At $K=512$, Potential has lower error than Laplacian in 15 of the same 20 states. This paired assessment measures representation of independently sampled large eddies with each training checkpoint held fixed; the autonomous-flow study evaluates evolution on its trajectory initial condition. Assessment states are separate from basis training and selection, and NKB uses geometry alone. The $K=1024$ results for Velocity, Laplacian, and NKB are reported separately in Appendix~\ref{app:randomfield-values}. The modest reconstruction gain of Potential over Laplacian comes at a substantial training cost; the cost--accuracy comparison in Appendix~\ref{app:randomfield-values} favors Velocity or Laplacian at larger capacities in this case. Appendix~\ref{app:old-Pot} documents the Potential-FG and Potential-AL configurations; Appendix~\ref{app:Pot-cost} reports preparation costs.
\begin{figure}[!htb]
\centering\includegraphics[width=\linewidth]{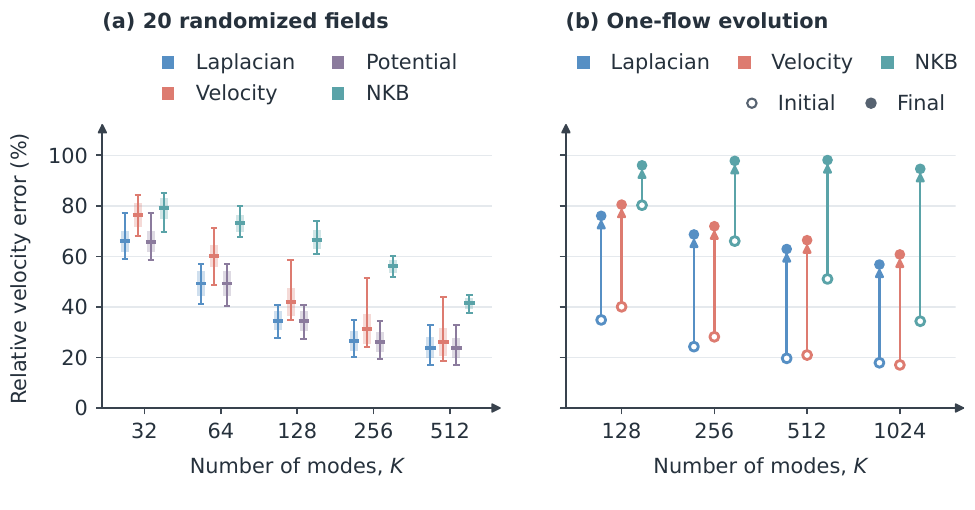}
\caption{\textbf{Spatial representation and autonomous evolution in the three-dimensional Metaball domain.} \textbf{(a)} Frozen-basis reconstruction on 20 independently randomized large-eddy initial fields, with 400 paired measurements across four methods and $K=32,64,128,256,512$. Shaded rectangles show mean $\pm$ sample standard deviation ($n=20$, $\mathrm{ddof}=1$); whiskers show minima and maxima. The $K=1024$ randomized-field results are in Appendix~\ref{app:randomfield-values}. \textbf{(b)} A separate, fixed initial field is used for the autonomous-flow comparison of Laplacian, Velocity, and NKB. Initial full-mass projection error (open circles, $t=0$) and final autonomous error (filled circles, $T=1.16$) at $K=128,256,512,1024$, with arrows connecting the two times. The three methods are offset horizontally in this left-to-right order within each capacity. All capacities use the same method colors. Both panels report relative velocity $L^2$ errors in percent using all retained modes and all $1.6\,\mathrm{M}$ fluid nodes. Autonomous trajectories store 30 states, with ROM step $0.000625$ and common NS reference step $0.0025$; training costs appear in Tables~\ref{tab:Pot-training-basis-cost} and~\ref{tab:runtime-nkb-training} and the error decomposition in Appendix~\ref{app:3d-error}.}
\label{fig:case3-representation-evolution}
\label{fig:case3-Pot-capacity}
\label{fig:case3-capacity}
\end{figure}

\paragraph{Separating the two error sources.}
{Projection onto each frozen span separates autonomous error into orthogonal representation and in-span coefficient terms (\doceqref{eq:error-split}). The extended decomposition shows the same pattern through 1024 modes: a richer span lowers the best-projection error, while the coefficient term remains nearly unchanged at the upper capacities (\docref{Table}{tab:3d-error-attribution}). Spatial refinement and coefficient evolution thus control complementary parts of the reduced model. The decomposition makes the representation gain explicit and connects the common evolution pattern to the retained-coordinate dynamics.}

\paragraph{Exact coefficient-space execution.}
\begingroup
Preassembling the full quadratic Galerkin operator moves spatial evaluation out of recurrent time stepping. The resulting integration is faster than the matched-step NS control at both tested capacities, with assembled and direct reduced right-hand sides agreeing to numerical precision. This acceleration comes from executing the same reduced generator in coefficient space. \docref{Figure}{fig:coefficient-capacity} shows the associated flow errors and timings; \docref{Appendix}{app:runtime} separates one-time preparation from online costs. Repeated initial-condition queries reuse that assembly, motivating the \hyperref[sec:inference-cost]{amortization analysis below}.
\par\endgroup

\paragraph{Continuous reconstruction across query domains.}
{The same frozen neural map can be evaluated at held-out points and on three unseen domains. \docref{Appendix}{app:unseen-geometry} reports these reconstruction measurements together with target-specific numerical controls. This assessment complements the capacity study by examining how the continuous coordinates respond to changes in query location and geometry.}

\subsection{Continuous basis queries and amortized inference}
\label{sec:inference-cost}
The learned object is the map $(\mathbf x,\mathbf g)\mapsto U_\theta(\mathbf x;\mathbf g)$ from spatial coordinates and geometry parameters to a vector-valued basis. Once trained, this map supplies basis values and analytic spatial derivatives at newly requested locations without repeating variational optimization or solving a new eigenproblem. It supports arbitrarily dense spatial resampling: there is no fixed output lattice or discrete set of admissible query points. For a geometry-conditioned model, geometry parameters can likewise be queried continuously within its learned family. The three-dimensional capacity experiment uses a fixed-domain checkpoint; the continuous geometry-family construction is exercised in the planar cases.

\paragraph{Online latency and numerical comparison.}
\begingroup
The map supplies spatial fields once a geometry is selected; operator assembly then prepares their recurrent coefficient dynamics. Synchronized measurements give mean RK4 update times of 3.79 ms at $K=512$ and 26.18 ms at $K=1024$, with field reconstruction charged at requested outputs. \docref{Figure}{fig:coefficient-capacity} places integration costs alongside flow errors. This cost separation makes reuse explicit: trajectories on one geometry share the basis and assembled operators, while initial projection and selected field outputs belong to each query.
\par\endgroup

\paragraph{When repeated inference recovers training cost.}
For $n$ trajectories on one geometry, let $T_{\rm train}$ denote training, $T_{\rm off}$ the reusable preparation cost, and $T_{\rm ROM},T_{\rm NS}$ the per-trajectory integration costs. The smallest integration-only break-even count $n_\star$ is given below, with $\lceil\cdot\rceil$ denoting rounding up to an integer:
\begin{equation}
n_\star=\left\lceil\frac{T_{\rm train}+T_{\rm off}}{T_{\rm NS}-T_{\rm ROM}}\right\rceil,
\qquad T_{\rm NS}>T_{\rm ROM}.
\label{eq:inference-break-even}
\end{equation}
{Repeated initial-condition queries amortize both training and reusable operator preparation. \docref{Table}{tab:inference-amortization} gives the counts and exact timer boundaries. Projection and requested field output enter the per-query cost when included in the workload. For a parameterized family, training is shared across geometry queries, while each geometry's basis generation and operator preparation are charged to that query (\docref{Appendix}{app:inference-amortization}).}

\section{Conclusion}
\begingroup
Neural eigenfluids express a physically structured fluid spectrum through a continuous spatial basis. Differential lifts and potential traces define admissible velocities; geometry-local orthogonalization and Rayleigh minimization organize their positive modes. Canonical gauge selection gives potential-space learning a lower-derivative formulation, with the spectral-variance relation explaining its weighting of approximate modes. A neural Dirichlet construction supplies the independent harmonic circulation coordinates.

The experiments connect this structure to representation and dynamics. Learned spectral spans sustain accurate periodic vortex merger and improve energy capture and autonomous accuracy over equal-capacity Laplacian controls on directly sampled random fields. Completed bases express obstacle-scale circulation, and matched conditioning tests identify how harmonic coordinates affect evolution. In three dimensions, increasing capacity reduces spatial representation error while exposing the contribution of coefficient dynamics to total flow error. Full quadratic assembly moves the same reduced generator into coefficient space, allowing repeated initial states to share their spatial basis and operators. Continuous evaluation, physical spectral organization, and reusable autonomous computation are thus realized within one reduced representation.
\par\endgroup

\pdfbookmark[1]{References}{navigation-references}
\bibliographystyle{plainnat}
\begingroup
\small
\setlength{\bibsep}{1.2pt plus .5pt}
\bibliography{references}
\endgroup
\appendix

\section{Gauge representatives, projection, and controls}
\label{app:theory}
This appendix develops the \hyperref[app:vector-gauge]{gauge projection} and the \hyperref[app:old-Pot]{Potential-AL and Potential-FG controls} used in the construction. The \hyperref[app:spectrum]{spectral correspondence} and \hyperref[app:topology-rom]{harmonic completion} follow, together with \hyperref[app:3d-controls]{three-dimensional comparisons}, \hyperref[app:protocol]{experimental protocols}, \hyperref[app:modes]{mode atlases}, \hyperref[app:nkb]{NKB comparisons}, \hyperref[app:fields]{flow visualizations}, and \hyperref[app:runtime]{computational costs}.

\subsection{Scalar constants and boundary-pinned representatives}
\label{app:scalar}
For a connected domain, $\nabla^\perp(\psi+c)=\nabla^\perp\psi$. The mean-zero projection $P_0\psi=\psi-\langle\psi\rangle_\Omega$ is invariant to a constant shift, preserves all spatial derivatives, and selects a unique representative. Let $w$ be the column vector of positive quadrature weights, $\mathbf1$ the all-ones vector, and $\psi_q=\psi(\mathbf x_q)$. The discrete projector is $P_0=I-\mathbf1 w^{\top}/(\mathbf1^{\top}w)$; backpropagation uses its transpose in Euclidean coordinates. In particular,
\[
\sum_q w_q(P_0\psi)_q=0,\qquad
\norm{P_0\psi}_Q^2=\norm{\psi}_Q^2-\frac{(\sum_qw_q\psi_q)^2}{\sum_qw_q}.
\]
For a periodic scalar potential, this removes the constant kernel before the spectral mass is formed. Without removal, $\norm{\nabla(\psi+c)}^2/\norm{\psi+c}^2\to0$ as $|c|\to\infty$, although the physical velocity is unchanged.

For a homogeneous Dirichlet scalar eigenproblem, $\psi=0$ on the boundary excludes nonzero constant shifts from the admissible space and fixes the representative used in its mass form. Centering this function changes the mass form and is therefore used here for display only: it changes the common boundary value to $-\langle\psi\rangle_\Omega$ while preserving velocity. Finite-element scalar displays use area-weighted means, and periodic displays use uniform-grid means.

On a multiply connected domain, independent boundary constants encode physical harmonic velocities. Subtracting one global constant preserves their differences. It therefore fixes scalar gauge without deleting circulation degrees of freedom. Each connected component requires its own additive convention if the fluid domain itself is disconnected.

\subsection{Vector gauge as a quotient-space projection}
\label{app:vector-gauge}
Let $d_R^k$ denote exterior differentiation of degree-$k$ forms with relative boundary trace; its adjoint is $\delta_R^{k+1}$. The closed potential-to-velocity operator is $\C=\star d_R^{D-2}$. Writing $\ker$ for the nullspace and $\perp$ for mass-orthogonal complement, the canonical potential space is $(\ker\C)^\perp$. In 3D, exact-gradient removal is the orthogonal projection defined using scalar test functions $\eta\in H_0^1(\Omega)$, where $H_0^1$ is the Sobolev space of functions with square-integrable first derivatives and zero boundary trace:
\[
\ip{\nabla\phi}{\nabla\eta}=\ip{\widetilde{\mathbf A}}{\nabla\eta},\qquad \eta\in H_0^1(\Omega).
\]
Integration by parts gives \doceqref{eq:poisson-gauge}. Removing the remaining relative harmonic kernel completes the projection. Both corrections have zero curl, and a Dirichlet scalar has no tangential derivative on the boundary, so curl and the relative tangential trace are preserved.

To see why this operation belongs before the loss, choose a nonzero $\kappa\in\ker\C$. The family $\mu_\alpha=\mu+\alpha\kappa$ represents exactly the same velocity for every real scalar $\alpha$, yet
\begin{equation}
\frac{\norm{\C\mu_\alpha}^2}{\norm{\mu_\alpha}^2}
=\frac{\norm{\C\mu}^2}{\norm{\mu+\alpha\kappa}^2}\longrightarrow0
\quad\text{as }|\alpha|\longrightarrow\infty.
\end{equation}
The unprojected quotient can therefore decrease without improving a physical mode. On the canonical space this ambiguity disappears. Write $L^2\Lambda^k$ for square-integrable degree-$k$ forms, $\operatorname{Ran}$ for operator range, an overline for closure, and $\oplus$ for orthogonal direct sum. The relative harmonic space $\Harm_R^k$ contains the closed, coclosed forms with relative boundary trace. With the orthogonal Hodge decomposition {\citep{arnold2006finite}}
\begin{equation}
L^2\Lambda^{D-2}=\overline{\operatorname{Ran}d_R^{D-3}}
\oplus\Harm_R^{D-2}\oplus\overline{\operatorname{Ran}\delta_R^{D-1}},
\end{equation}
the last summand is $(\ker\C)^\perp$. Let $\Delta_H=d\delta+\delta d$ be the positive Hodge Laplacian and $G_R$ its relative-boundary Green pseudoinverse. The exact projector is $P_{\rm gf}=\delta_R d_R G_R$, and it satisfies
\begin{equation}
P_{\rm gf}(\mu+\kappa)=P_{\rm gf}\mu,\quad
\C P_{\rm gf}\mu=\C\mu,\quad
\delta_R P_{\rm gf}\mu=0.
\end{equation}
Normalization and Gram--Schmidt consequently act on physical equivalence classes. In a finite auxiliary space, the invariance identity applies to the resolved kernel directions; \docref{Appendix}{app:old-Pot} quantifies the residual for the three-dimensional implementation.

\subsection{Potential-AL and Potential-FG controls and capacity}
\label{app:old-Pot}
Potential-AL and Potential-FG are the noncanonical 3D potential controls. The Potential-AL models at $K=32,64,128$ use divergence and wall constraints with an augmented-Lagrangian objective. The seed-0 $K=256$ Potential-FG model uses finite scalar/harmonic gauge projection. Their original-field reconstruction results appear in \docref{Table}{tab:finite-potential-initial}. A separately trained, matched Potential-FG family spans $K=32,64,128,256$, with the results below.

The matched finite-gauge family completes 2000 training epochs and 80 scheduled validations at each capacity. Every model reconstructs the same 20 independent random large-eddy fields with all retained modes and all $1.6\,\mathrm{M}$ fluid points (\docref{Table}{tab:matched-FG-NEW20-values}). Preparation costs, including geometry and streamed basis-output I/O, appear in \docref{Table}{tab:Pot-training-basis-cost}.

\begin{table}[!htb]\centering\small
\caption{Matched Potential-FG reconstruction on the same 20 independent random large-eddy fields, after 2000 epochs and 80 scheduled validations. Relative velocity $L^2$ errors are reported as mean $\pm$ sample standard deviation (\%); all $1.6\,\mathrm{M}$ points and all $K$ modes are retained. The Laplacian entries use the same fields and capacity.}
\label{tab:matched-FG-NEW20-values}
\begin{tabular}{rcc}\toprule
$K$&Matched Potential-FG&Laplacian\\\midrule
32&$99.928\pm0.045$&$65.924\pm4.039$\\
64&$98.690\pm1.315$&$49.278\pm4.782$\\
128&$95.348\pm3.671$&$34.595\pm3.852$\\
256&$91.001\pm4.096$&$26.511\pm3.946$\\
\bottomrule\end{tabular}
\end{table}

\begin{table}[!htb]\centering\small
\caption{Initial relative velocity $L^2$ errors on the original common initial field ($n=1$ per method and capacity), using all retained modes and all $1.6\,\mathrm{M}$ fluid nodes. Potential completes 2000 epochs and 80 scheduled validations through $K=512$. The controls use Potential-AL at $K=32,64,128$ and Potential-FG at $K=256$. The $K=1024$ Potential training run completes 25 optimizer updates; its update count is reported separately from the full-training representation results.}
\label{tab:finite-potential-initial}
\begin{tabular}{rrrrr}\toprule
$K$&Potential&Velocity&Laplacian&Potential-AL/FG\\\midrule
32&0.843112&0.839273&0.845281&0.917776 (AL)\\
64&0.515772&0.622647&0.521114&0.878704 (AL)\\
128&0.343194&0.400527&0.348509&0.869497 (AL)\\
256&0.235699&0.281845&0.242727&0.994937 (FG)\\
512&0.201610&0.209542&0.196530&\textemdash\\
1024&\textemdash&0.170499&0.178991&\textemdash\\
\bottomrule\end{tabular}\end{table}

The finite projection leaves gradient directions in the curl kernel. For the admissible analytic potential $A=\nabla[e(x)^2\sin(2\pi f(x_1+2x_2+3x_3))]$, where $e$ is the boundary envelope, curl vanishes and $A$ is zero on the boundary. At $f=4$, the scalar and harmonic correction leaves $99.86\%$ of its source-point mass, despite a scalar-system residual below $5\times10^{-15}$. Residual gradient content can enlarge the potential quotient's denominator without adding velocity energy. On the frozen 256-mode model, replacing 441 scalar directions with a nonnested 1392-direction space changes retained potential mass by $44.91\%$ and the mean orthogonalized quotient by a factor of $4.62$, while preserving each raw curl. This measures representative sensitivity rather than auxiliary-space convergence. The measured FP32/FP64 analytic-curl discrepancy is $0.0111\%$. For the Potential-FG seed-0 control, these probes connect finite-gauge sensitivity to its velocity representation; they are not diagnostics of the distinct Potential-AL loss. Full-mass best projection is invariant to nonzero column scaling, so reduced curl amplitude alone does not explain the reconstruction error. Resolving gradient content is the necessary qualification for this 3D potential objective.

\subsection{Differentiable auxiliary gauge projection}
The Potential-FG configuration uses 441 effective scalar gauge directions for differentiable representative selection. A pure-gauge injection diagnostic quantifies the resolved gradient content, with 43.80\% amplitude retained after projection. Scalar corrections preserve the analytic curl exactly. The \hyperref[app:gauge-solvers]{solver comparison below} separately examines refinement, harmonic orthogonality, and derivative consistency using a frozen 32-network auxiliary-solver checkpoint after five updates; it is not the full-training Potential or Potential-FG capacity checkpoint.

\subsection{Accuracy and cost of the auxiliary gauge solve}
\label{app:gauge-solvers}

The auxiliary solve selects the potential representative entering the Rayleigh quotient. Its accuracy depends on the space used to represent the Poisson correction as well as on the precision of the linear algebra. We compare these effects on a fixed 32-network checkpoint obtained after five updates with Fourier projection and a method-of-fundamental-solutions (MFS) boundary correction. All methods use the same smooth Metaball geometry, neural parameters, analytic derivatives, and relative harmonic field, isolating representative selection from learning.

\paragraph{Common assessment.}
Let $A_\theta$ denote the neural vector potential, with raw column $A_j$. Its scalar divergence forcing is $f_j=\diver A_j$, and $\phi_j$ denotes its numerical scalar correction. We define $\epsilon_P$ as the maximum relative Poisson residual over columns $j=1,\ldots,32$; the sums below use unit weights on independent points $x_q$. On 8192 identical independent fluid points, we report
\begin{equation}
\epsilon_P=\max_{j\leq32}
\frac{\left(\sum_q|f_j(x_q)-\Delta\phi_j(x_q)|^2\right)^{1/2}}
{\left(\sum_q|f_j(x_q)|^2\right)^{1/2}}.
\label{eq:gauge-solver-residual}
\end{equation}
The residual measures the fraction of divergence forcing left after the scalar correction, since $\diver(A_j-\nabla\phi_j)=f_j-\Delta\phi_j$. Relative harmonic overlap is measured on a disjoint 110592-node volume rule after the column transform, and tangential trace on 8192 independent points per wall. Columnwise diagnostics characterize gradient correction, harmonic orthogonality, and boundary trace across all 32 modes. Setup, forward, reverse, and assessment times are measured separately on an RTX 4080 SUPER using native CuPy. Forward/reverse measurements use a 4 GiB allocation pool; the R-marked synchronized setup and assessment measurements use a 12 GiB pool with the same complete workload.

\paragraph{Fourier particular solution and boundary correction.}
On an auxiliary box of side $L=1.12$, an exterior continuation agrees with the original neural potential throughout the fluid. Its divergence is sampled on $193^3$ nodes. Let $k\in\mathbb Z^3$ be an integer wavevector ($\mathbb Z$ denotes integers), $\kappa_k=2\pi k/L$ its physical frequency, $x_0$ the box origin, and $\widehat f_k$ the Fourier coefficients of the sampled forcing. The scalar particular solution $\phi_p$ is
\begin{equation}
\phi_p(x)=-\sum_{k\ne0}\frac{\widehat f_k}{|\kappa_k|^2}
e^{\mathrm i\kappa_k\cdot(x-x_0)}
+\widehat f_0\frac{|x-(0.5,0.5,0.5)|^2}{6}.
\end{equation}
The mean term preserves nonzero-mean forcing. The represented Laplacian is the Fourier interpolant $f_N$ with $N=193$ nodes per axis. The independent residual $f_j-f_N$ measures the source representation at this resolution. A Laplace method-of-fundamental-solutions (MFS) correction with 6144 exterior sources and 18432 boundary collocation nodes enforces the zero Dirichlet trace in least squares \citep{fairweather1998method}. This harmonic boundary correction complements the volume particular solution. We separately test the training arithmetic, FP64 fast Fourier transforms (FFTs) on the same FP32 samples, and an FP64 source/FFT path. A reuse control combines repeated source FFTs and accumulates spectral cotangents before the source-grid inverse FFT, keeping the represented operator unchanged.

\paragraph{Smooth weak projection.}
For a scalar test function $v\in H_0^1(\Omega)$, integration by parts gives the same continuous Dirichlet problem in the form
\begin{equation}
\int_\Omega\nabla v\cdot\nabla\phi\,\dd x
=\int_\Omega\nabla v\cdot A_\theta\,\dd x,
\qquad v\in H_0^1(\Omega).
\end{equation}
We use $C^2$ tensor cubic B-splines {\citep{deboor1978splines}} multiplied by the original analytic zero-trace envelope. Let $b_i$ be the enveloped spline basis functions, $\zeta_i$ their coefficients in $\phi_h=\sum_i\zeta_i b_i$, $\zeta=(\zeta_i)$ their coefficient vector, $\mathsf K$ the gradient stiffness matrix, and $r=(r_i)$ the load vector. The discretization is
\begin{equation}
\mathsf K_{ij}=\sum_qw_q\nabla b_i(x_q)\cdot\nabla b_j(x_q),\quad
r_i=\sum_qw_q\nabla b_i(x_q)\cdot A_\theta(x_q),\quad \mathsf K\zeta=r.
\end{equation}
The neural right-hand side and its parameter reverse require potential values rather than divergence derivatives. Basis gradients and Laplacians are analytic, including across spline knots. Resolution-12, -20, and -24 spaces use 32768, 110592, and 262144 fitting quadrature points. {Matrix-free conjugate gradients \citep{hestenes1952methods} apply the weighted gradient Gram operator; a standard two-level construction \citep{briggs2000multigrid} uses cubic refinement prolongation, symmetric smoothing, and a factored coarse solve.} Severely cut supports are removed when their quadrature gradient-energy diagonal is at most $10^{-5}$ of the maximum. The selection and quadrature remain fixed during differentiation. The algebraic solve residual measures convergence within this space, while \doceqref{eq:gauge-solver-residual} measures the resolved neural forcing.

\paragraph{Volume potential with multipole acceleration.}
{The alternative particular solution follows volume-potential methods \citep{malhotra2016volume} and uses the Newton potential of the piecewise-constant source density $f_h$ over the auxiliary box $B$, with $y$ the integration point:}
\begin{equation}
\phi_p(x)=-\frac1{4\pi}\int_B\frac{f_h(y)}{|x-y|}\,\dd y.
\end{equation}
The volume representation uses constant cuboid densities in $B=[-0.02,1.02]^3$, with the same interior forcing and exterior continuation. Near cuboids, including self interactions, are integrated analytically. {Degree-six Cartesian multipole-to-multipole, multipole-to-local, and local-to-local translations follow the fast multipole method (FMM) \citep{greengard1987fast} to evaluate the far field, whose local Taylor polynomials are harmonic.} The same MFS correction enforces the boundary trace. A source-adaptation variant refines coarse cells when a neighboring density difference, normalized by the per-column coarse-source root mean square (RMS) and maximized over the 32 columns, exceeds $0.5$. The sampling map is fixed for the adjoint, and evaluation uses a uniform tree. The fixed sampling map permits consistent differentiation through the adaptive piecewise-constant representation, following the volume-potential construction \citep{malhotra2016volume}.

\begin{table}[!htb]\centering\footnotesize\setlength{\tabcolsep}{3pt}
\caption{Gauge-solver accuracy and cost at the fixed checkpoint. \docref{Figure}{fig:gauge-solver-ablation} plots the Poisson residual $\epsilon_P$. Harmonic overlap and tangential trace are percentages; all four timing columns are seconds. Size denotes grid size or retained spline degrees of freedom. Setup and assessment are excluded from forward/reverse. Superscript R denotes a synchronized measurement of the complete workload; forward/reverse columns exclude setup and assessment.}
\label{tab:gauge-solver-ablation}

\begin{tabular}{lrrrrrrr}\toprule
Method&Size&Harm.&Wall tan.&Forward&Reverse&Setup&Assess.\\\midrule
Fourier--MFS & $193^3$ & 0.0177 & 0.6904 & 49.07 & 138.60 & $54.57^{\mathrm R}$ & $23.82^{\mathrm R}$\\
Fourier--MFS, reuse & $193^3$ & 0.0177 & 0.6904 & 48.28 & 136.12 & $62.86^{\mathrm R}$ & $23.32^{\mathrm R}$\\
Spline 12 & 3,106 & 0.0181 & 0.0012 & 6.63 & 6.98 & 1.98 & 5.90\\
Spline 20 & 10,648 & 0.0015 & 0.0012 & 15.40 & 16.60 & 4.75 & 5.91\\
Spline 24 + two-level & 17,086 & 0.0036 & 0.0012 & 71.00 & 78.96 & 11.69 & 5.84\\
Volume FMM 32 & $32^3$ & 0.0242 & 4.8727 & 12.15 & 13.43 & 22.99 & 61.76\\
Volume FMM 64 & $64^3$ & 0.0261 & 3.6399 & 14.23 & 17.66 & 22.82 & 60.63\\
Volume FMM 64 adaptive & $64^3$ & 0.0261 & 3.6399 & 14.53 & 17.96 & 23.71 & 60.51\\\bottomrule\end{tabular}\end{table}

Forward and reverse totals include synchronized component measurements; one-time setup and independent assessment are separate columns. Setup combines spline space/quadrature and optional multigrid construction, or the FMM and fixed harmonic moments; assessment combines independent residual, harmonic, and wall checks. Spline forward timing includes query-point construction. The Fourier and Fourier-reuse measurements share identical neural passes and include source-array export, while spline and volume methods use in-memory sources. Both particular-solution families reuse the geometry-dependent MFS factors.

\begin{table}[!htb]
\centering\small
\caption{Floating-point precision on the same $193^3$ grid. The first two rows use identical source samples. Source time measures one all-32 forward pass, including export.}
\label{tab:gauge-precision}
\begin{tabular}{lrr}\toprule
Arithmetic & $\epsilon_P$ (\%) & Source forward (s)\\\midrule
Training arithmetic & 7.989593 & 37.24 \\
FP32 source, FP64 FFT & 7.989595 & 37.24 \\
FP64 source and FFT & 7.989582 & 266.41 \\
\bottomrule\end{tabular}\end{table}
\paragraph{Representation and arithmetic.}
The precision controls yield nearly identical forcing residuals (\docref{Table}{tab:gauge-precision}). With identical source samples, changing FFT precision produces a relative Laplacian difference of $4.20\times10^{-7}$. Source-space resolution therefore provides the effective control over representative selection at this checkpoint.

Transform reuse preserves forward outputs to the reported precision and changes the source cotangent by a relative $2.50\times10^{-12}$. It reduces the three-evaluation Fourier stage from 9.91 to 8.99 seconds and the source-grid transpose from 12.96 to 10.48 seconds. Total component time decreases from 187.67 to 184.40 seconds (1.74\%); neural source evaluation and parameter differentiation account for 85.5\% of the original total.

\paragraph{Auxiliary-space resolution.}
Refinement systematically improves the spline and volume corrections (\docref{Figure}{fig:gauge-solver-ablation}), while the harmonic projection maintains small overlap across the solver family (\docref{Table}{tab:gauge-solver-ablation}). These complementary diagnostics identify two controls for representative selection: the approximation space resolves gradient content, and the mass projection selects harmonic coordinates. Source adaptation refines 31,432 of 32,768 coarse cells (95.9\%), with 284,224 source evaluations including the coarse pass, compared with 262,144 for the uniform rule. This distribution favors uniform refinement for the present forcing.
\paragraph{Operator consistency.}
FMM evaluation and transpose consistency are tested using $16^3$ cuboids, 48 query points, and two random right-hand sides. The analytic cuboid kernel agrees with nonsingular Gauss integration to relative $7.30\times10^{-11}$. Against direct cuboid integration, increasing the far expansion degree from four to six reduces relative gradient error from $1.38\times10^{-3}$ to $1.90\times10^{-4}$; the degree-six transpose discrepancy is $2.73\times10^{-16}$. These comparisons isolate evaluation of the chosen density from density approximation. For the manufactured scalar $e(x)[1+0.2\sin(2\pi x_1)+0.1x_2x_3]$, the refined spline method gives relative gradient and Poisson errors of $2.85\times10^{-4}$ and $8.82\times10^{-4}$, with linear residual $4.95\times10^{-9}$. Together with \docref{Figure}{fig:gauge-solver-ablation} and \docref{Table}{tab:gauge-solver-ablation}, these checks connect accurate operator evaluation to systematic control through auxiliary-space refinement.

\begin{figure}[!htb]
\centering\includegraphics[width=\linewidth,height=.72\textheight,keepaspectratio]{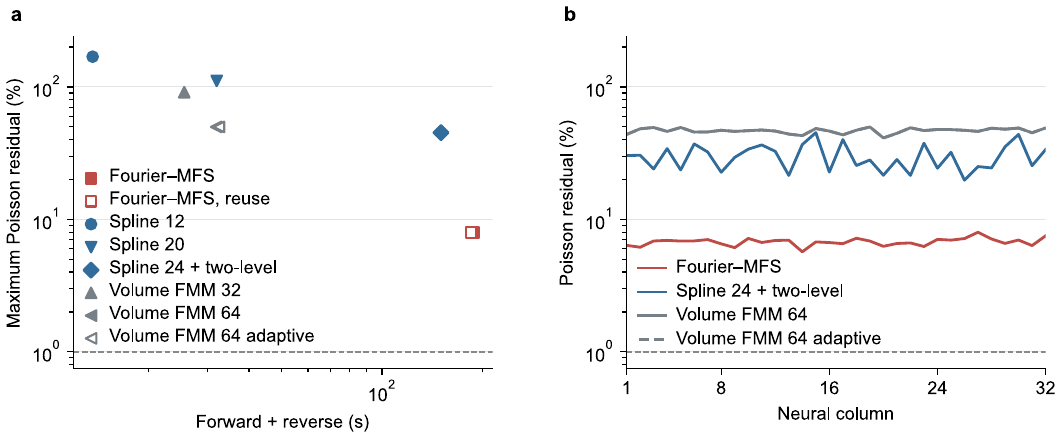}
\caption{\textbf{Gauge-solver accuracy and cost.} Left: all-32 forward-plus-reverse time versus independent Poisson residual. Right: per-column residuals for Fourier--MFS and the finest tested spline and volume spaces. All methods evaluate the same fixed neural checkpoint.}
\label{fig:gauge-solver-ablation}
\end{figure}

\paragraph{Connection to the velocity span.}
At a fixed network, each scalar correction has zero curl, so all solvers retain the same 32-dimensional analytic velocity span. The solver comparison therefore measures representative selection for the potential objective; changes to the learned velocity span arise through subsequent optimization.

\section{Spectral correspondence, objective weighting, and boundaries}
\label{app:spectrum}
\subsection{Pairing the nonzero spectrum}
Take $\C$ to be a densely defined closed operator between the potential and velocity Hilbert spaces, with Hilbert adjoint $\C^*$. {The classical spectral relationship follows from the paired normal operators of the differential complex \citep{arnold2006finite}, with the differential-complex domains determining which physical boundary problem they realize.}

\begin{proposition}[Positive-spectrum correspondence]
\label{prop:positive-spectrum}
If $\C^*\C\mu=\lambda\mu$ with $\lambda>0$ and $\norm\mu=1$, then $u=\C\mu/\sqrt\lambda$ has unit norm and satisfies $\C\C^*u=\lambda u$. Conversely, $\mu=\C^*u/\sqrt\lambda$ maps a positive eigenfield back to the canonical potential space.
\end{proposition}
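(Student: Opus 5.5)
The plan is to verify the forward direction directly from the definitions of a closed operator and its Hilbert adjoint, and then run the same computation with the roles of $\C$ and $\C^*$ exchanged. The hypothesis $\C^*\C\mu=\lambda\mu$ means that $\mu\in\operatorname{dom}(\C)$, that $\C\mu\in\operatorname{dom}(\C^*)$, and that $\C^*(\C\mu)=\lambda\mu$.

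Set $u=\C\mu/\sqrt\lambda$. I will carry out the forward direction in three steps.

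\begin{itemize}
\item \textbf{Norm.} By the defining relation of the adjoint,
\[
\norm{\C\mu}^2=\ip{\C\mu}{\C\mu}=\ip{\mu}{\C^*\C\mu}=\lambda\norm{\mu}^2=\lambda .
\]
Hence $\norm u=1$. In particular $u\neq0$, because $\lambda>0$.
\item \textbf{Domain membership.} Since $\C\mu\in\operatorname{dom}(\C^*)$, we get $u\in\operatorname{dom}(\C^*)$ and $\C^*u=\sqrt\lambda\,\mu$. Because $\mu\in\operatorname{dom}(\C)$, this gives $\C^*u\in\operatorname{dom}(\C)$.
\item \textbf{Eigen-relation.} Applying $\C$ yields $\C\C^*u=\sqrt\lambda\,\C\mu=\lambda u$.
\end{itemize}

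So the only work in this direction is careful bookkeeping of domains, which is why I keep $\operatorname{dom}(\C)$ and $\operatorname{dom}(\C^*)$ explicit throughout.

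For the converse, assume $\C\C^*u=\lambda u$ with $\lambda>0$ and $\norm u=1$, and set $\mu=\C^*u/\sqrt\lambda$. The symmetric computation gives
\[
\norm{\C^*u}^2=\ip{u}{\C\C^*u}=\lambda .
\]
Hence $\norm\mu=1$, and also $\C^*\C\mu=\C^*(\lambda u)/\sqrt\lambda=\lambda\mu$.

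It remains to justify the phrase ``canonical potential space,'' meaning $(\ker\C)^\perp=\overline{\operatorname{Ran}\C^*}$ as in \docref{Appendix}{app:vector-gauge}. This is the standard identity $\ker\C=(\operatorname{Ran}\C^*)^\perp$, valid for densely defined closed operators because $\C^{**}=\C$. It shows that $\mu\in\operatorname{Ran}\C^*\subset(\ker\C)^\perp$. Equivalently, for any $\kappa\in\ker\C$ we have $\ip{\mu}{\kappa}=\lambda^{-1}\ip{\C^*\C\mu}{\kappa}=\lambda^{-1}\ip{\C\mu}{\C\kappa}=0$, which avoids the closed-range discussion altogether. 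So the back-mapped potential is automatically gauge-fixed, consistent with $P_{\rm gf}$.

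Finally, I will note that the two maps are mutually inverse up to the factor $\sqrt\lambda$. Indeed, $\C^*(\C\mu/\sqrt\lambda)/\sqrt\lambda=\mu$, and similarly in the other direction. They are also isometric on eigenspaces, since $\ip{\C\mu}{\C\mu'}=\lambda\ip{\mu}{\mu'}$. This upgrades the statement to a unitary correspondence between the $\lambda$-eigenspaces, giving equal multiplicities as claimed after \doceqref{eq:paired-lift}.

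The main obstacle is not algebraic. It is identifying $\C\C^*$ with the absolute Hodge velocity operator of \doceqref{eq:target}, which is a question of which domains the relative/absolute complex induces. I would treat that identification as given by \citet{arnold2006finite} and keep the proposition purely at the level of the abstract operator $\C$.
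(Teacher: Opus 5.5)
Your proposal is correct and matches the paper's proof: the norm identity $\norm{\C\mu}^2=\ip{\mu}{\C^*\C\mu}=\lambda$, applying $\C$ for the forward direction and $\C^*$ for the converse, and orthogonality of the positive eigenvector to $\ker\C$, with the identification of $\C\C^*$ as the absolute Hodge velocity operator deferred to the compatible-domain theory exactly as the paper does. Your explicit domain bookkeeping and the unitary-eigenspace argument giving equal multiplicities spell out steps the paper leaves implicit; they are not a different route.
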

\begin{proof}
The norm identity is $\norm{\C\mu}^2=\ip\mu{\C^*\C\mu}=\lambda$. Applying $\C$ to the potential eigenproblem gives the velocity equation. Applying $\C^*$ gives the converse. Positive eigenvectors are orthogonal to the corresponding kernels. With the relative/absolute Hodge domains, the range of the lift is the coexact absolute velocity sector; physical harmonic velocities form its orthogonal complement.
\end{proof}

For the absolute Hodge domain, smooth velocities obey $\mathbf u\cdot\mathbf n=0$ and $\mathbf n\times\curl\mathbf u=0$ in 3D (zero boundary vorticity in 2D). These boundary conditions, together with the differential energy, identify the physical realization of the paired normal operators.

\begingroup
\subsection{Periodic and decaying free-space specialization}
For periodic zero-mean potentials, $\C=\nabla^\perp$ in 2D and $\C=\curl$ on divergence-free canonical vector potentials in 3D commute with $-\Delta$. The positive eigenspaces are therefore related by the normalized lift in \hyperref[prop:positive-spectrum]{the proposition above}.

On $\mathbb R^D$, smooth fields with sufficient decay give the same identities without a wall boundary contribution. In Fourier coordinates, let $\xi$ be spatial frequency, $i$ the imaginary unit, a hat the Fourier transform, and $\xi^\perp=(\xi_y,-\xi_x)$ in 2D. The positive symbol is $|\xi|^2$; $\widehat{u}=i\xi^\perp\widehat\psi$ in 2D and $\widehat{u}=i\xi\times\widehat A$ with $\xi\cdot\widehat A=0$ in 3D. For $\xi\ne0$, these maps pair the potential and divergence-free velocity spectral sectors. Free space generally has continuous spectrum, so this statement is a spectral correspondence rather than a discrete list of square-integrable eigenmodes. For either periodic cancellation or decay at infinity, integration by parts also gives $\|\nabla u\|^2=\|\curl u\|^2$ for divergence-free velocity. A bounded curved-wall problem requires its specified boundary domain. Even when the positive spectra agree, $R_p$ and $R_v$ coincide on exact eigenspaces and differ on mixtures according to \doceqref{eq:variance}.
\par\endgroup

\subsection{Weak formulation and its 2D specialization}
For scalar Dirichlet trial functions $\psi,\eta\in H_0^1(\Omega)$,
\begin{equation}
\ip{\C\psi}{\C\eta}=\int_\Omega\nabla\psi\cdot\nabla\eta\,\dd\mathbf x.
\end{equation}
Stationarity of $\frac12\norm{\nabla\psi}^2-\frac\lambda2(\norm\psi^2-1)$ therefore yields
\begin{equation}
\int_\Omega\nabla\psi\cdot\nabla\eta\,\dd\mathbf x
=\lambda\int_\Omega\psi\eta\,\dd\mathbf x,
\qquad \eta\in H_0^1(\Omega).
\end{equation}
The first-derivative form realizes $\C^*\C=-\Delta_D$, where $-\Delta_D$ is the positive Dirichlet Laplacian. If $-\Delta\psi=\lambda\psi$ and $u=\nabla^\perp\psi$, then
\begin{equation}
\diver u=0,\quad u\cdot\mathbf n=0,\quad -\Delta u=\lambda u,
\qquad \omega=\partial_xu_y-\partial_yu_x=\lambda\psi.
\end{equation}
The boundary vorticity vanishes because the streamfunction trace does. This explicitly identifies the compatible 2D Hodge problem. On the periodic domain, the same calculation uses mean-zero periodic functions in place of Dirichlet trial functions.

For compatible form domains, the general calculation is the commutation identity $\Delta_Hd=d\Delta_H$, which follows from $d^2=0$, together with $\Delta_H\star=\star\Delta_H$. It yields $\Delta_H\star d\mu=\lambda\star d\mu$ for a potential eigenform. The absolute harmonic velocities remain outside the positive lifted sector and must be supplied independently.

\subsection{Spectral moments and the choice of velocity energy}
For nonnegative integer $r$, define the spectral moment $m_r=\sum_j\lambda_j^r|a_j|^2$ in an orthonormal canonical eigenbasis. Potential and velocity Rayleigh ratios are $m_1/m_0$ and $m_2/m_1$. In 2D, a direct-vorticity objective gives $m_3/m_2$. Cauchy--Schwarz gives $m_r^2\le m_{r-1}m_{r+1}$, hence
\[
\frac{m_1}{m_0}\le\frac{m_2}{m_1}\le\frac{m_3}{m_2}.
\]
Equality holds when the nonzero coefficients occupy one eigenvalue eigenspace. At repeated eigenvalues the eigenspace, rather than the phase or orientation of individual modes, is identifiable.

The Hodge velocity form is $\norm{\curl\mathbf u}^2+\norm{\diver\mathbf u}^2$, reducing to curl energy on divergence-free fields. The componentwise Dirichlet form $\norm{\nabla\mathbf u}^2$ coincides with it on a periodic domain and differs by boundary terms on a general bounded domain. The 3D velocity experiments use the componentwise Dirichlet quotient during training and the common discrete NS generator during evolution. Their \hyperref[app:boundary-energy]{boundary-energy relation} is given below.

\subsection{The boundary term in gradient and curl energy}
\label{app:boundary-energy}
For a smooth vector field on a Euclidean domain, with $\dd S$ denoting boundary surface measure, integration by parts gives
\begin{equation}
\norm{\curl u}^2+\norm{\diver u}^2-\norm{\nabla u}^2
=\int_{\partial\Omega}\left[(u\cdot n)\diver u-n\cdot((u\cdot\nabla)u)\right]\,\dd S.
\end{equation}
Periodic faces cancel, and a strong componentwise-zero trace also eliminates the right-hand side. For a tangent field on a curved wall, the remaining term depends on the wall geometry. Thus a bounded-domain spectral problem is specified jointly by its energy and boundary domain.

\section{Topology completion and reduced evolution}
\label{app:topology-rom}
\subsection{One scalar gauge constant versus physical boundary differences}
Let $r$ be radial distance, $r_{\rm in},r_{\rm out}$ the inner and outer radii, and $e_\theta$ the azimuthal unit vector. The annulus $r_{\rm in}<r<r_{\rm out}$ provides a concrete example; $\gamma$ is a counterclockwise loop around its hole and $\dd l$ its oriented line element. The streamfunction $\psi=\log r$ has
\begin{equation}
u=\nabla^\perp\psi=-\frac1r e_\theta,\qquad
\diver u=0,\quad \curl u=0,\quad
\oint_{\gamma}u\cdot\dd l=-2\pi
\end{equation}
for a counterclockwise loop around the hole. Subtracting the spatial mean of $\psi$ leaves this circulation unchanged. Setting both boundary values to the same constant would instead remove this harmonic component. The former changes only the representative; the latter changes the physical field.

For $h$ holes, the harmonic measures of \doceqref{eq:harmonic-measure} form convenient independent coordinates. Let $H_\ell=\nabla^\perp\chi_\ell$ and let $\gamma_m$ encircle hole $m$ counterclockwise. Define the period matrix $P_{m\ell}=\oint_{\gamma_m}H_\ell\cdot\dd l$. If unit-circulation coordinates are desired, set $\widetilde H_\ell=\sum_jH_j(P^{-1})_{j\ell}$. Then $\oint_{\gamma_m}\widetilde H_\ell\cdot\dd l=\delta_{m\ell}$. This invertible change of coordinates preserves the span. The reduced mass matrix accounts for the actual harmonic basis used in computation.

\subsection{Harmonic training with a numerator-only Rayleigh energy}
\label{app:harmonic-loss}
The harmonic sector already appears in the spectral ROM of \citet{liu2015model}, whose \href{https://doi.org/10.1145/2816795.2818130}{Section~3.1} includes the zero-frequency vector fields and whose \href{https://doi.org/10.1145/2816795.2818130}{Appendix~A} discusses both kernel eigenvectors and sparse harmonic solves. Our construction learns this sector through geometry-conditioned potentials. Each harmonic measure is learned in the affine admissible space $\mathcal A_\ell$ below, where $\chi$ is a trial scalar field and $H^1(\Omega)$ is the Sobolev space without a zero-trace restriction:
\begin{equation}
\mathcal A_\ell=\{\chi\in H^1(\Omega):\chi|_{\Gamma_0}=0,\ \chi|_{\Gamma_m}=\delta_{\ell m}\}
=L_\ell+H_0^1(\Omega).
\end{equation}
The prescribed boundary differences fix the amplitude and circulation coordinate. The Dirichlet principle therefore supplies the unnormalized potential Rayleigh numerator in \doceqref{eq:harmonic-loss}, rather than a spectral quotient \citep{e2018deep}. For Dirichlet energy $E(\chi)=\int_\Omega|\nabla\chi|^2$ and any admissible variation $v\in H_0^1(\Omega)$ with real variation parameter $\varepsilon$,
\begin{equation}
\frac{\dd}{\dd\varepsilon}E(\chi+\varepsilon v)\bigg|_{\varepsilon=0}
=2\int_\Omega\nabla\chi\cdot\nabla v
=-2\int_\Omega(\Delta\chi)v=0.
\end{equation}
Thus the minimizer solves $\Delta\chi=0$ with the cardinal traces, giving a curl-free velocity $u_H=\nabla^\perp\chi$. The traces exclude the trivial zero field, and the energy is strictly convex in the zero-trace correction. If $\chi^*$ is the harmonic extension, then
\begin{equation}
E(\chi^*+v)=E(\chi^*)+\norm{\nabla v}^2,\qquad v\in H_0^1(\Omega),
\end{equation}
which identifies the unique energy minimizer.

A potential-mass denominator would change this boundary-value problem. Setting $M(\chi)=\int_\Omega\chi^2$ and $R(\chi)=E(\chi)/M(\chi)$ yields the stationarity condition
\begin{equation}
\int_\Omega\nabla\chi\cdot\nabla v
=R(\chi)\int_\Omega\chi v,\qquad v\in H_0^1(\Omega),
\end{equation}
or $-\Delta\chi=R(\chi)\chi$, instead of the required harmonic equation. The positive-mode quotient determines an eigenvalue with free amplitude; the harmonic loss determines the minimum-energy extension of fixed, nonhomogeneous traces. Zero Hodge energy of the resulting velocity refers to its curl, while the training energy $E(\chi)=\norm{u_H}^2$ is nonzero for a nontrivial circulation coordinate. The implementation averages $|\nabla\chi_\ell|^2$ over uniformly sampled interior points for each output and differentiates these per-output energies; on a fixed geometry their sum differs from \doceqref{eq:harmonic-loss} by the constant factor $1/|\Omega|$, where $|\Omega|$ is the domain volume. Validation additionally averages over the $h$ outputs. Subsequent mass or circulation normalization is a change of basis after learning and leaves the harmonic span unchanged.

\subsection{Conditioning the harmonic coordinates}
In the neural parameterization $L_\ell+e_{\mathbf g}n_{\theta,\ell}$, the lift fixes the cardinal traces before training. Rescaling its interior construction changes optimization gradients while preserving the boundary-value problem. The raw/scaled comparison holds the positive modes, cardinal traces, correction envelope, initialization protocol, and epoch budget fixed, isolating conditioning at finite optimization accuracy.

\subsection{Full reduced operators and physical initialization}
{Following the spectral Galerkin fluid operators of \citet{dewitt2012fluid,liu2015model}, for a static geometry and basis $U$ define}
\begin{equation}
Q_{ijk}=-\ip{U_i}{(U_j\cdot\nabla)U_k},\qquad
M_{ij}=\ip{U_i}{U_j},\qquad S_{ij}=\ip{dU_i^\flat}{dU_j^\flat}.
\end{equation}
\doceqref[Equation]{eq:reduced-dynamics} is the homogeneous unforced form. Substituting $u=u_{\rm base}+Uc$ adds constant and linear terms, the base-flow time derivative, and projected external forcing. A full-mass initial projection minimizes $\norm{u_{\rm base}+Uc-u_0}$ using the assembled physical velocity metric.

For exact normalized positive Hodge eigenfields, $S$ is diagonal with entries $\lambda_i$, so viscosity alone gives $a_i(t)=e^{-\nu\lambda_i t}a_i(0)$. Ideal harmonic modes occupy the zero Hodge sector. The actual experiments retain assembled operators; in 3D they project the stated grid generator, including its normal penalty. The exact-mode identity gives the spectral interpretation of viscous decay; the assembled operators realize the measured dynamics.

\subsection{Invariant subspaces and the mode atlas}
An eigenvalue multiplicity identifies an eigenspace whose basis can rotate. For mass-orthonormal subspace matrices $V$ and $W$, the singular values of $V^{\top}M W$ are cosines of their principal angles, providing a basis-independent comparison. The atlases retain each method's ordering and display normalization. Across topologies, the harmonic dimension changes with the first Betti number; this changes the physical space itself.

\section{Three-dimensional capacity, computation, and transfer}
\label{app:3d-controls}
\subsection{Coefficient-only acceleration of the 3D Galerkin system}
\label{app:coefficient-acceleration}
For the original $128^3$ normal-penalty generator, let $B_i$ denote a sampled learned velocity field, extended by zero outside the fluid nodes, and let $P_h$ be the original self-adjoint Leray, dealiasing, and zero-mean projector. Here $\Omega_h$ is the set of fluid grid nodes, $h=1/128$ the Cartesian grid spacing, and $\nabla_h,\Delta_h$ the grid spectral gradient and Laplacian. The spatial weight $\chi$ and positive penalty timescale $\eta$ specify the original normal penalty along wall normal $n$. The operators in \doceqref{eq:offline-rom} are
\begin{align}
M_{ij}&=h^3\sum_{x\in\Omega_h}B_i(x)\cdot B_j(x),\\
A_{ij}&=\left\langle B_i,P_h\left[\nu\Delta_h B_j-\frac{\chi}{\eta}(B_j\cdot n)n\right]\right\rangle_h,\\
\mathcal Q_{ijk}&=h^3\sum_{x\in\Omega_h}(P_hB_i)(x)\cdot\left[B_j(x)\times(\nabla_h\times B_k)(x)\right].
\end{align}
The full-grid projector and derivatives act before restriction to the fluid nodes. The adjoint projector in the last line moves the output projection onto the test field, permitting blocked matrix products without changing the nonlinear operator. Input pairs are stored as $j\leq k$, summing $\mathcal Q_{ijk}+\mathcal Q_{ikj}$ for $j<k$. This stores $K^2(K+1)/2$ FP64 entries, occupying 0.501 and 4.004 GiB at $K=512$ and $1024$, respectively; both remain on the GPU during integration. All $1.6\,\mathrm{M}$ fluid nodes and all modes are retained; the sampled basis cache remains at its original FP32 precision, with operator construction and coefficient arithmetic in FP64.

At $K=512$, native CuPy evaluation of two production right-hand sides agrees with full-grid projection to $1.17\times10^{-14}$ relative error; one RK4 step differs by $4.53\times10^{-17}$ in relative mass norm. The complete coarse-step trajectory agrees with the full-grid Galerkin trajectory to $1.25\times10^{-14}$. These comparisons verify that tensor assembly preserves the selected reduced dynamics.

On an NVIDIA RTX 4080 SUPER, the assembled $K=512$ ROM is $17.74\times$ faster than the matched-step numerical NS solver over 1856 steps to $T=1.16$. This ratio compares synchronized online integration. The NS solver retains its split viscosity and normal-penalty integrator; the ROM integrates the projected unsplit generator using RK4. \docref{Appendix}{app:runtime} separates training, field preparation, assembly, initial projection, and requested output from integration.

Halving the ROM time step changes the trajectory by at most $2.67\times10^{-12}$ in relative mass norm, confirming temporal convergence of the assembled dynamics. Refining the NS reference from $0.0025$ to $0.000625$ changes its final velocity by $2.12\%$. \docref{Figure}{fig:coefficient-acceleration} presents the flow error and kinetic energy alongside the measured online and preparation costs, characterizing the accuracy--cost tradeoff of coefficient-space execution.
\begin{figure}[!htb]
\centering\includegraphics[width=\linewidth,height=.72\textheight,keepaspectratio]{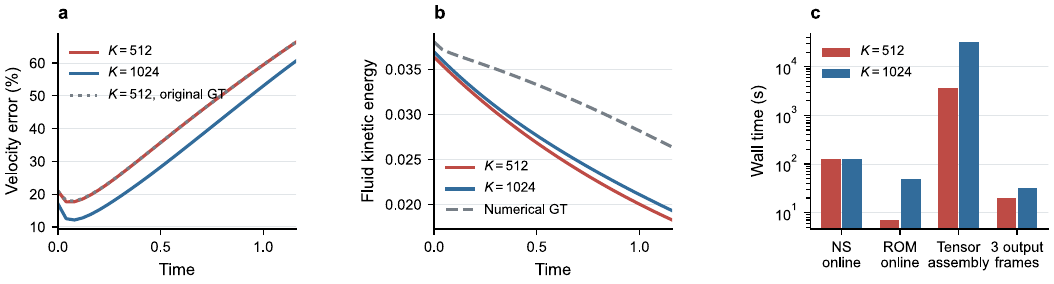}
\caption{\textbf{Accuracy, energy, and cost of coefficient-space dynamics.} Red and blue denote $K=512$ and $1024$. (a) Autonomous velocity errors against the shared refined NS reference on all $1.6\,\mathrm{M}$ fluid nodes; the dotted gray curve assesses $K=512$ against the original reference. (b) Fluid kinetic energy. (c) Synchronized NS/ROM online integration, offline tensor assembly, and three-frame output times. Both 2000-epoch models use full-mass initial projection, all retained coordinates, and $\Delta t=0.000625$. Time-step checks and online speed ratios are given in the adjacent text.}
\label{fig:coefficient-acceleration}
\label{fig:coefficient-capacity}
\end{figure}

At $K=1024$, coefficient-space execution gives a synchronized online speedup of $2.61\times$ over the matched-step NS control. \docref{Figure}{fig:coefficient-capacity} compares its error, energy, and timing with $K=512$. All trained coordinates are retained.

\paragraph{Operator agreement and stage costs.}
At $K=1024$, assembled and direct full-grid right-hand sides agree to $3.81\times10^{-14}$ relative error; the one-step RK4 discrepancy is $6.48\times10^{-17}$ in relative mass norm. Halving the ROM step changes the trajectory by at most $1.49\times10^{-11}$. {\docref{Appendix}{app:runtime}} collects field preparation, tensor assembly, loading, full-grid checks, projection, and output costs alongside the NKB and Laplacian comparisons. The agreement measurements show that assembly preserves the reduced generator, connecting the online cost directly to the learned span and projected dynamics assessed above.

\subsection[Randomized-field representation at K1024]{Randomized-field representation at $K=1024$}
\label{app:randomfield-values}

\begin{table}[!htb]\centering\small
\caption{Randomized-field representation at $K=1024$. Each frozen basis projects the same 20 independent fields used in Figure~\ref{fig:case3-Pot-capacity}(a), with all $1.6\,\mathrm{M}$ fluid points and all retained modes. Entries are relative velocity $L^2$ error in percent, reported as mean $\pm$ sample standard deviation.}
\label{tab:randomfield-K1024}
\begin{tabular}{lc}\toprule
Method&Relative error (\%)\\\midrule
Laplacian&$23.00\pm4.11$\\
Velocity&$23.03\pm4.46$\\
NKB&$27.77\pm1.82$\\
\bottomrule\end{tabular}
\end{table}
At $K=1024$, Velocity and Laplacian have nearly identical mean reconstruction error, while NKB has higher error. Potential is assessed through $K=512$ on this ensemble. Its $K=512$ training takes 140.20 hours, compared with 3.99 hours for Velocity; the Potential and Laplacian mean errors differ by 0.17 percentage points. This cost--accuracy comparison motivates Velocity or Laplacian bases at larger capacities in this three-dimensional case.

\subsection{Training and boundary-to-basis costs}
\label{app:Pot-cost}
\begin{table}[!htb]\centering\scriptsize
\caption{Three-dimensional preparation costs in seconds. Training totals cover 2000 updates, scheduled validation and checkpoint I/O. Basis generation runs from the original parameterized boundary to all $1.6\,\mathrm{M}$ points and all $K$ fields, including streamed-output I/O; Laplacian also includes coefficient construction, interstage waits, setup, paging and background load. A dash means that the stage is not applicable. Superscript $*$ marks an estimate, not a completed run; projected runs assume uninterrupted execution. ``incl.'' means recovery is already included in the training total and was not timed separately; $\dagger$ marks these inclusive counters. Estimates and the measured short $K=1024$ endpoint are explained below. Matched Potential-FG rows use the same training protocol.}
\begin{tabular}{rlrrr}\toprule
$K$&Method&Normal training&State recovery&Boundary to all basis\\\midrule
32&Laplacian&\textemdash&\textemdash&3467.56\\
32&Velocity&3042.38&0.00&267.91\\
32&Potential&8170.08&200.31&255.50\\
32&Potential-AL&6316.24&0.00&259.52\\
32&Potential-FG (matched)&7332.09&0.00&291.39\\
64&Laplacian&\textemdash&\textemdash&2545.51\\
64&Velocity&$3930.13^{\dagger}$&incl.&307.32\\
64&Potential&14404.07&0.00&269.28\\
64&Potential-AL&$10513.53^{\dagger}$&incl.&273.65\\
64&Potential-FG (matched)&11924.79&0.00&346.78\\
128&Laplacian&\textemdash&\textemdash&3576.89\\
128&Velocity&$5361.59^{\dagger}$&incl.&295.34\\
128&Potential&26569.77&0.00&281.03\\
128&Potential-AL&18887.93&4036.51&281.98\\
128&Potential-FG (matched)&20862.81&0.00&451.21\\
256&Laplacian&\textemdash&\textemdash&5548.86\\
256&Velocity&$8622.13^{\dagger}$&incl.&310.61\\
256&Potential&59879.95&717.31&291.56\\
256&Potential-FG (seed 0)&$23238.19^{\dagger}$&incl.&469.61\\
256&Potential-FG (matched)&39108.35&0.00&665.15\\
512&Laplacian&\textemdash&\textemdash&9246.67\\
512&Velocity&$14366.09^{\dagger}$&incl.&488.29\\
512&Potential&504713.77&106.03&348.12\\
512&Potential-AL&$55800^{*}$&$0^{*}$&$370^{*}$\\
512&Potential-FG (matched)&$66000^{*}$&$0^{*}$&$1090^{*}$\\
1024&Laplacian&\textemdash&\textemdash&22049.01\\
1024&Velocity&$27976.58^{\dagger}$&incl.&747.68\\
1024&Potential&$866800^{*}$&$0^{*}$&$440^{*}$\\
1024&Potential-AL&$96400^{*}$&$0^{*}$&$480^{*}$\\
1024&Potential-FG (matched)&$115300^{*}$&$0^{*}$&$1940^{*}$\\
\bottomrule\end{tabular}
\label{tab:Pot-training-basis-cost}
\end{table}
Initial-field evaluation is separate from training and basis generation: Potential takes 655.76 seconds at $K=256$ and 1182.42 seconds at $K=512$. At $K=512$, the saved-basis assessment of 20 independent random fields takes 123.32 seconds. Potential training at $K=1024$ completes 25 updates, with zero scheduled validations, the original 2000-epoch learning-rate horizon, and a full model/Adam endpoint checkpoint.

At $K=1024$, Laplacian construction from raw geometry (marching-cubes resolution 80, mesh size $0.025$) through the full spectrum and coefficient archive takes 18769.83 seconds. The sampled-basis stage takes 477.10 seconds and writes all three FP64 velocity components for $1.6\,\mathrm{M}$ fluid nodes and 1024 modes (39,315,554,304 payload bytes). The raw-geometry-to-output interval is 22049.01 seconds, including interstage waits, setup, paging, I/O, and background load. CPU construction uses one BLAS thread, a 7 GiB resident limit, and an 8 GiB resident-peak guard; GPU basis output reuses the completed analytic geometry prefix.

\paragraph{Measured short run and cost estimates.}
The completed $K=1024$ Potential endpoint contains 25 optimizer updates and no scheduled validation; its 13267.46-second elapsed interval includes failed attempts, repairs, waits and atomic publication of the full model/Adam state. This short-run elapsed interval is separate from the projected full-training cost in Table~\ref{tab:Pot-training-basis-cost}.

The $K=1024$ Potential training estimate uses 22 measured updates (epochs 3--24; mean 430.68 seconds) and 166 ordinary updates from the completed $K=512$ run (mean 250.78 seconds). Their ratio scales the measured $K=512$ 2000-update total of 504713.77 seconds, retaining its validation and checkpoint allowance. The resulting estimate is about 866800 seconds (240.8 hours); validation and checkpoint costs are assumed to scale by the same factor as the update workload. Potential-AL training costs are extrapolated from measured $K=32,64,128$ runs and matched Potential-FG costs from $K=32,64,128,256$ runs using a least-squares power law in $K$. Basis-generation estimates use an affine fit in $K$ to the measured lower-capacity full-point intervals for the same method. Estimated training and basis times are rounded to 100 and 10 seconds, respectively. These projections assume the same implementation, hardware, sampling and output boundary, with no recovery event. They do not imply trained high-capacity checkpoints or measured reconstruction accuracy.

\subsection{Training geometry and frequency}
The full 2000-epoch Potential and three-dimensional velocity runs use the original sampling and optimization computations: 6144 query points and eight 4096-point validation geometries every 25 epochs (80 scheduled validations). The $K=1024$ Potential training run retains the original sampling and learning-rate horizon and completes 25 optimizer updates with zero scheduled validations. Potential retains the established canonical objective and two-pass causal Gram--Schmidt; its canonical physical actions use FP64, while ANN parameters and analytic derivatives use FP32. The three-dimensional velocity capacity family uses a fixed training domain, first-layer frequency 40, and cosine learning-rate endpoints scaled by 0.1 relative to the original velocity configuration. Its reported checkpoints complete 2000 epochs with the original sampling and scheduled validation. The Potential-AL models at $K=32,64,128$ use seed 20260823, frequencies 40/36, and cosine learning-rate endpoints $2.8\times10^{-5}$ and $10^{-7}$. The Potential-FG $K=256$ model instead uses seed 0, frequencies 160/36, and endpoints $2.8\times10^{-4}$ and $10^{-6}$; it is not a matched cross-method control. \hyperref[app:unseen-geometry]{Geometry-conditioned evaluation on other domains} is reported separately below.

\subsection{Capacity and unseen-geometry reconstruction}
\label{app:unseen-geometry}
\begin{table}[!htb]
\centering\small
\caption{Held-out velocity reconstruction on disjoint 16,384-point assessment sets. Fixed-domain learned checkpoints complete 2000 epochs. Numerical bases use the same capacity, fit and assessment point sets, and target fields. Geometries 2 and 3 use radial Piola transport; geometry 4 uses a globally qualified quadratic approximation of its original smooth boundary and consistent physical finite-element matrices.}
\begin{tabular}{rlrr}\toprule
$K$ & Geometry & Learned & Numerical\\\midrule
128 &Training geometry, independent points&.398468&.346917\\
256 &Training geometry, independent points&.281388&.241861\\
256 &Unseen geometry 2&.751690&.222570\\
256 &Unseen geometry 3&.744673&.254720\\
256 &Unseen geometry 4&.770144&.230048\\\bottomrule
\end{tabular}
\end{table}
The geometry-2 numerical basis uses a fitted mesh with 80,890 vertices and 354,108 tetrahedra, the relative-potential finite-element method (FEM) with consistent mass, and radial Piola transport onto the smooth target domain. Geometry 3 uses the same construction on its own fitted mesh. Each reconstruction uses all 256 positive modes and the identical 16,384 fit points, disjoint assessment points, and target velocities. Solution and reconstruction take 687.17 and 1061.51 seconds for geometries 2 and 3, after mesh and point-map preparation. Geometry 4 retains all 320,815 tetrahedra and uses a quadratic physical N\'ed\'elec chart: positive cell determinants, the entire 95,296-face boundary homotopy, and the boundary degree qualify its global map. This chart approximates the original smooth geometry. Its consistent stiffness/mass eigensolve obtains 264 Ritz vectors with maximum relative residual $5.04\times10^{-11}$; the reconstruction retains 256 positive modes. Assembly, solve, and evaluation total 1054.85 seconds. The disjoint error is .230048, while the fit-point error is .227943. All three learned transfer measurements use the unchanged frozen checkpoint.

\section{Experimental protocols and measurements}
\label{app:protocol}
\subsection{Shared initialization, error, and timing definitions}
Each autonomous trajectory begins with a full-mass projection of the prescribed initial velocity and evolves all basis coordinates. Writing $u_{\rm ref}=u_{\rm GT}$, reference states supply subsequent error measurements, $\norm{u-u_{\rm ref}}/\norm{u_{\rm ref}}$, on a common assessment domain. Numerical controls use the same mode count wherever reported. Training uses analytic derivatives and native CuPy; operator contraction and reduced evolution use 64-bit floating-point arithmetic (FP64) where specified; FP32 denotes 32-bit floating-point arithmetic. Timings are measured wall times with training, assembly, integration, and output reported as separate stages.

\begin{table}[!htb]
\centering\small
\caption{Primary comparison protocols. DNS denotes direct numerical simulation. The multi-hole P1 and analytic high-capacity stages are separate discretizations.}
\begin{tabular}{llll}\toprule
Case & Capacity & Reference / assessment & Autonomous horizon\\\midrule
Periodic &1024&$256^2$ DNS / $128^2$ observations&241 frames, $T=18\pi$\\
Multi-hole P1 &32&Common circulation-aware P1 mesh&51 frames, $T=1$\\
Multi-hole analytic &1024&Finest level-2 reference, common points&51 frames, $T=1$\\
Metaball &128--1024&$128^3$, $1.6\,\mathrm{M}$ fluid nodes&30 frames, $T=1.16$\\\bottomrule
\end{tabular}
\end{table}

\subsection{Periodic flow protocol}
The periodic domain is $[0,2\pi)^2$. The vortex separation is $d=1.35$, core size $\sigma=\pi/10$, circulation $\pi/5$ per vortex, and viscosity $\nu=0.0010742914$. The numerical reference, same-capacity Laplacian control, and learned trajectories share 241 observation times through $T=18\pi$. Full mass and curl-stiffness matrices and matrix-free rotational loads use $256^2$ quadrature points; errors are measured on the same uniform $128^2$ observation grid. With $k=(k_x,k_y)$ the integer Fourier wavevector, Fourier uses 512 conjugate wave pairs ordered by $(|k|^2,k_x,k_y)$, with complete sine/cosine pairs and zero constant-velocity coefficients. The learned spectral models retain frozen epoch-2000 checkpoints and analytic spatial derivatives. Initial projection uses the physical velocity alone; subsequent reference states do not enter the evolution. \docref{Figure}{fig:case1-field} shows saved states 1, 121, and 241. Energy and enstrophy use the same initial reference normalization for all methods.
All nine learned full-mass matrices have rank 1024 without ridge or mode deletion. Each frozen span is advanced at maximum steps $0.005$ and $0.0025$; the largest relative mass-norm trajectory difference is $4.412\times10^{-12}$. An independent implementation check against the inherited exponential time-differencing fourth-order Runge--Kutta (ETDRK4) equations {\citep{cox2002exponential,kassam2005fourth}} precedes each trajectory.

\subsection{Topology allocation and harmonic conditioning}
The random large-vortex ensemble combines three to seven smooth Wendland $C^4$ vortices, where $C^k$ denotes fields with continuous derivatives through order $k$. Centers, signs, and amplitudes are random; support radii lie in $[0.65,0.95]$. Random obstacle-boundary streamfunction constants complete each initial field. Their direction is sampled from a normalized Gaussian vector, and the energy of its harmonic lift relative to the zero-trace vortical component is drawn uniformly from $[0,0.30]$. A common rescaling sets peak speed to $0.5$. All bases project the same resulting velocity field. The low-capacity comparison uses three training seeds, 20 such initial states per geometry, and the same circulation-aware P1 NS reference for every basis. Each total-$32$ completed basis contains $32-h$ positive fields and $h$ harmonic fields. The displayed fields use seed 1, state 00. The high-capacity analytic comparison retains 1024 coordinates, the same 20 states, and both time steps $0.002$ and $0.001$; the finest common assessment contains approximately 4.19 million points per domain.

\begin{table}[!htb]\centering\small
\caption{Harmonic completion for random large-vortex flows at total $K=32$. Final relative velocity error (\%) over the same 20 initial states per geometry; learned entries average three training seeds. Completed bases replace $h$ positive modes with $h$ harmonic modes. These are the low-capacity results summarized in \docref{Section}{sec:case2}.}
\label{tab:prescribed32-summary}
\begin{tabular}{rrrr}\toprule
{Holes}&{Potential $P$}&{Potential $PH$}&{Numerical $PH$}\\\midrule
{1}&{53.23}&{40.04}&{39.76}\\
{2}&{52.99}&{40.49}&{40.13}\\
{3}&{54.59}&{40.49}&{39.72}\\\bottomrule
\end{tabular}\end{table}

The high-capacity raw/scaled harmonic pairs fix the positive checkpoint, seed, optimizer initialization, trace, correction envelope, Dirichlet objective, and 2000-epoch budget. For two holes, initial/final errors change from $12.709\%/28.486\%$ to $6.591\%/6.692\%$; for three holes, from $16.069\%/25.831\%$ to $8.503\%/9.066\%$. The single-hole raw completed basis gives $5.739\%$ final error. Maximum final time-step differences are $7.00\times10^{-7}$ for single-hole raw, $3.68\times10^{-7}$ for two-hole scaled, and $3.40\times10^{-7}$ for three-hole scaled. The common level-2 reference fixes the spatial assessment for these temporal comparisons.

For basis streamfunctions $\psi_i$ with vorticities $\omega_i=-\Delta\psi_i$, velocity-basis integration uses the FP64 weak tensor $Q_{ijk}=\sum_q w_q[(\partial_x\psi_i)(\partial_y\psi_j)-(\partial_y\psi_i)(\partial_x\psi_j)]\omega_k$, assembled from all $1{,}048{,}576$ quadrature points. Its $1024^3$ entries occupy 8 GiB on the RTX 4080 SUPER, alongside the coefficient states and linear operators. For the one-hole case, assembly takes 57.9 minutes including checkpoint I/O. Subsequent steps average 0.393 seconds, compared with 153.8 seconds for streamed full-field contraction of the same Galerkin operator: approximately $391\times$ faster stepping, with assembly accounted for separately. Assembled and factored nonlinear contractions agree to $3.37\times10^{-14}$ relative error over the 60 initial coefficient columns.

\subsection{Direct random-streamfunction ensemble}
\label{app:random-streamfunction}
The ensemble contains 20 fixed fields per topology, indexed 0--19. The \hyperref[tab:random-historical-capacity]{numerical capacity study} and \hyperref[tab:random-paired]{learned/numerical $K=32$ allocation ablation} use this ensemble. The all-method $K=1024$ dynamic comparison uses fields 0 and 1 in each topology, with six independently qualified references. Each reported comparison uses identical physical target fields across methods and completion arms.

\subsubsection{Sampling and boundary conditioning}
We draw independent standard-normal scalar values on a $25\times25$ lattice spanning $[-1,11]\times[-6,6]$, with spacing $0.5$, and interpolate them with an exact tensor-product quintic B-spline {\citep{deboor1978splines}}. The seed hierarchy is $[20260928,\text{topology index},\text{field index}]$, with topology indices 0, 1, and 2. The spline is $C^4$ before boundary conditioning, and the final streamfunction is at least $C^2$.

Let $s$ be the raw spline and $q_j$ a smooth nonnegative defining function for boundary component $j$. The boundary indices $i,j,k$ run from $0$ to $h$. The outer component has $q_0=1-((x-5)/4)^{12}-(y/4)^{12}$; the obstacles use the original circle, rounded-square, and smoothed-triangle defining functions. For each component, $c_j$ is the arc-length-weighted mean of $s$ over 1536 boundary points. We construct bounded defining functions $\sigma_j$, partition weights $w_j$, logarithmic functions $\ell_j$, squared distance estimates $d_j^2$, and localized blending functions $b_j$. These give the boundary-adjusted field $p$, vorticity-correction weight $a$, and final streamfunction $\psi$:
\begin{align}
\sigma_j&=q_j/\sqrt{1+q_j^2},&
w_j&=\frac{\prod_{i\ne j}\sigma_i}{\sum_k\prod_{i\ne k}\sigma_i},\\
\ell_j&=\log(1+q_j),&
d_j^2&=\frac{\ell_j^2}{|\nabla\ell_j|^2+(\ell_j/4)^2+10^{-30}},\\
b_j&=w_j^3\exp\!\left[-\left(d_j^2/0.35^2\right)^2\right],&
p&=s+\sum_j b_j(c_j-s),\\
a&=\tfrac12\sum_j d_j^2b_j,&
\psi&=p-a\Delta p.
\end{align}
This construction retains a separate constant on each boundary component and enforces the original impermeability and zero-boundary-vorticity conditions. Velocity and vorticity follow from analytic derivatives, $u=(\partial_y\psi,-\partial_x\psi)$ and $\omega=-\Delta\psi$. One global scale sets the maximum sampled speed to $0.5$ on the original $321\times321$ mask and boundary loops; this scale is fixed under all numerical refinements. The distribution, raw lattice values, boundary constants, and scales define each reproducible draw; harmonic content is measured from the resulting velocity field.

\subsubsection{Captured energy, reference coordinates, and independent samples}
Assessment uses arrays of $1{,}048{,}576$, $4{,}194{,}304$, and $4{,}194{,}304$ points. Their common all-method masks contain $1{,}048{,}092$, $4{,}192{,}289$, and $4{,}192{,}265$ points. These are quadrature/assessment locations. The \hyperref[tab:random-historical-capacity]{state dimensions below} are counted from continuous piecewise-cubic (P3) reference finite-element spaces.

Using the common velocity assessment matrix $W$, let $M=U^{\top}WU$ be the mass matrix, $b=U^{\top}Wu_0$ the initial load vector, and $n=u_0^{\top}Wu_0$ the squared target norm. The full-mass initial projection gives
\begin{equation}
Mc_0=b,\qquad
\eta=\frac{c_0^{\top}Mc_0}{n}=\frac{\tfrac12\|Uc_0\|_W^2}{\tfrac12\|u_0\|_W^2}.
\end{equation}
Here $\eta$ is the fraction of initial kinetic energy represented by the projected field. Projection orthogonality gives $\eta=1-e_0^2$ for each field, where $e_0=\|Uc_0-u_0\|_W/\|u_0\|_W$. We compute the quadratic energy separately for every field and training seed before averaging. An overbar on scalar statistics denotes the ensemble mean. The mean is therefore $\overline\eta=1-\overline{e_0^2}$; no transformation of the ensemble-mean error is used. All coordinates are retained without a ridge term. Across the 840 saved $K=32$ and 120 numerical $K=1024$ projections, direct quadratic-energy and load-energy ratios agree within $1.3\times10^{-13}$.

The one-, two-, and three-hole P3 meshes have $881{,}793$, $669{,}533$, and $1{,}390{,}768$ scalar nodes. Their boundary nodes number $9{,}216$, $13{,}458$, and $22{,}428$. With constant streamfunction trace on each component and the outer constant fixed, writing $N_{\rm P3}$ for the scalar node count and $N_\partial$ for the boundary node count, the velocity-state dimension is
\begin{equation}
N_{\rm ref}=N_{\rm P3}-N_{\partial}+h.
\end{equation}
Equivalently, the reference evolves interior vorticity values and $h$ circulations; streamfunction is reconstructed from that state. This gives $N_{\rm ref}=872{,}578$, $656{,}077$, and $1{,}368{,}343$. The counts in \docref{Table}{tab:random-historical-capacity} define the state-coordinate compression ratio $N_{\rm ref}/K$, visualized in \docref{Figure}{fig:random-k1024-energy}. The same mesh dimensions apply to both reduced capacities in each domain. Initial representation checks cover all 60 P3 fields. The six predetermined fields in \docref{Appendix}{app:random-k1024-six} additionally have full-trajectory refinement measurements.

\begin{table}[!htb]\centering\small
\caption{Initial kinetic-energy capture and reference state dimensions for 20 direct random-streamfunction fields per topology. P3 nodes count the reference mesh; $N_{\rm ref}$ counts independent interior scalar coordinates plus $h$ circulations after boundary constraints and outer-gauge fixing. $K$ is the reduced state dimension, and $\eta_P,\eta_{PH}$ are mean captured-energy percentages for numerical $P_K$ and $P_{K-h}+H_h$. Energy fractions are evaluated per field before averaging. The common assessment measure, basis discretizations, and initial reference qualification are specified in \docref{Appendix}{app:random-streamfunction}.}
\label{tab:random-historical-capacity}
\setlength{\tabcolsep}{4pt}
\begin{tabular}{rrrrrr}\toprule
Holes & P3 nodes & $N_{\rm ref}$ & $K$ & $\eta_P$ (\%) & $\eta_{PH}$ (\%)\\\midrule
1&881,793&872,578&32&2.27&2.13\\
1&881,793&872,578&1024&75.32&75.34\\
2&669,533&656,077&32&2.37&2.16\\
2&669,533&656,077&1024&79.66&79.73\\
3&1,390,768&1,368,343&32&1.79&1.60\\
3&1,390,768&1,368,343&1024&54.48&54.52\\
\bottomrule\end{tabular}\end{table}

\subsubsection{Optional combined energy--compression summary}
\label{app:basis-quality}
For a fixed reference discretization, captured energy and compression in \docref{Figure}{fig:random-k1024-energy} define the energy--compression product $Q_B$. Let $E_{\rm projected}$ and $E_{\rm target}$ denote initial projected and target kinetic energies; then
\begin{equation}
Q_B=\eta\frac{N_{\rm ref}}{K}
=\frac{E_{\rm projected}/K}{E_{\rm target}/N_{\rm ref}}.
\label{eq:basis-energy-quality}
\end{equation}
For field $i$, $Q_{B,i}=(N_{\rm ref}/K)\eta_i$, so $\overline Q_B=(N_{\rm ref}/K)\overline\eta$. This is a derived summary of energy capture and compression. At fixed $N_{\rm ref}$ and $K$ it orders methods exactly as $\eta$ does. Its reference normalization changes under mesh refinement. We therefore report the independently interpretable energy fractions and compression ratios in the \hyperref[fig:random-k1024-energy]{main text}, and retain the \hyperref[tab:basis-quality]{combined values here} for a common reference convention.
\begin{table}[!htb]\centering\small
\caption{Derived energy--compression products for the numerical representations in \docref{Table}{tab:random-historical-capacity}. Energy fractions are evaluated per field before averaging; all 20 draws per topology are retained.}
\label{tab:basis-quality}
\begin{tabular}{rrrr}\toprule
Holes&$K$&$\overline Q_{B,P}$&$\overline Q_{B,PH}$\\\midrule
1&32&618.62&581.29\\
1&1024&641.84&641.99\\
2&32&485.62&442.04\\
2&1024&510.41&510.82\\
3&32&767.13&685.73\\
3&1024&728.05&728.60\\
\bottomrule\end{tabular}\end{table}

The $K=32$ bases use their original common P1 embeddings. The numerical $K=1024$ bases use their original level-2 P1 representations. Both are assessed against the same continuous analytic initial fields on the unchanged common masks. The capacity table compares these existing discretized representations; their basis meshes differ. All 20 fields are retained at each capacity.

The Laplacian basis contributes one numerical realization; potential and velocity training each contribute three existing seeds at $K=32$. For each learned method, captured-energy fractions are averaged across seeds within a field, then over the 20 independent fields. \docref{Table}{tab:random-paired} reports mean capture and the paired gain $\Delta\eta=\eta_{PH}-\eta_P$ in percentage points. This equal-total-$K$ comparison exchanges $h$ positive coordinates for $h$ harmonic coordinates.

\begin{table}[!htb]\centering\small
\caption{Initial kinetic-energy capture at total $K=32$, in percent. $\Delta\eta$ is the mean captured-energy change in percentage points. Positive values favor completion. Learned energy fractions average three seeds within each of the 20 independent fields; the last column counts fields with higher $PH$ capture.}
\label{tab:random-paired}
\begin{tabular}{rlrrrr}\toprule
Holes & Basis & $\eta_P$ (\%) & $\eta_{PH}$ (\%) & $\Delta\eta$ & Improved\\\midrule
1&Laplacian&2.269&2.132&-0.137&2/20\\
1&Potential&2.252&2.178&-0.075&5/20\\
1&Velocity&2.269&2.145&-0.124&2/20\\
2&Laplacian&2.369&2.156&-0.213&3/20\\
2&Potential&2.370&2.174&-0.197&3/20\\
2&Velocity&2.345&2.216&-0.129&3/20\\
3&Laplacian&1.794&1.604&-0.190&3/20\\
3&Potential&1.795&1.633&-0.163&2/20\\
3&Velocity&2.009&1.814&-0.195&2/20\\
\bottomrule\end{tabular}\end{table}

\begin{samepage}
\paragraph{Harmonic-energy diagnostic.}
\begingroup
The 60 directly sampled fields predominantly occupy the vortical sector. Projection onto the numerical harmonic span gives median initial energy fractions of $0.01016\%$, $0.06411\%$, and $0.04385\%$ for one, two, and three holes, with maxima of $0.11744\%$, $0.17651\%$, and $0.27178\%$. Level-2/level-3 harmonic refinement changes these fractions by at most $0.00214$ percentage points. At fixed $K$, mass-orthogonal sectors trade the harmonic projection energy gained against the energy of the $h$ displaced positive modes. The coordinate allocation is therefore governed by the relative occupation of these two sectors. Positive modes are ordered by spectral Rayleigh values; a field's captured energy depends on its modal coefficients. Appending $H_h$ to unchanged $P_K$ creates a nested span, whose best-projection error is nonincreasing.
\par\endgroup
\par
\end{samepage}

\subsubsection{Autonomous trajectories for the first one- and two-hole draws}
\label{app:random-dynamics}
The $K=32$ comparison uses the first fixed draw (index 0) in each of the one- and two-hole domains. Each method starts from the analytic-field full-mass projection and evolves autonomously with the reused Galerkin operators, $\nu=1/750$, $T=1$, and 51 saved frames. The two step sizes are $0.002$ and $0.001$. Subsequent reference values enter error assessment, not the coefficient evolution. \docref{Table}{tab:random-dynamics} summarizes the fine-step errors. {With $\bar e=\int_0^1e(t)\,\dd t$ the time-averaged relative error, we define the improvement $\Delta(t)=e_P(t)-e_{PH}(t)$ and $\bar\Delta=\bar e_P-\bar e_{PH}$; positive values mean lower error after completion.}

The numerical reference uses continuous cubic finite elements and the fine time step. Maximum coarse/fine relative velocity differences are $2.53042\times10^{-6}$ and $5.03966\times10^{-5}$. Full-trajectory quadratic/cubic comparisons give $0.84444\%$ and $0.72543\%$ differences; independent boundary-geometry refinement gives $0.02050\%$ and $0.07681\%$ across all 51 frames on both original masks. These checks retain the same analytic initial field. Direct-field and moment-based error evaluation agree to below $6.10\times10^{-16}$; the independent polynomial point-adapter check is below $3.33\times10^{-13}$. This table reports the two $K=32$ dynamic cases; the \hyperref[app:random-k1024-six]{complete six-field $K=1024$ assessment} is reported next.

\begin{table}[!htb]\centering\small
\caption{Random-field dynamics for index 0 in each of two domains. $\bar e=\int_0^1e_u(t)\,\mathrm dt$ is evaluated by the trapezoidal rule over 51 frames. Errors are percentages and differences are percentage points. Learned entries average the three training seeds of the same physical field.}
\label{tab:random-dynamics}
\begin{tabular}{rlrrrr}\toprule
Holes&Basis&$\bar e_P$&$\bar e_{PH}$&$\bar\Delta$&$\Delta(T)$\\\midrule
1&Laplacian&99.28325&99.29712&-0.01387&-0.01336\\
1&Potential&99.26077&99.26889&-0.00812&-0.00733\\
1&Velocity&99.27255&99.29087&-0.01832&-0.01817\\
2&Laplacian&98.75970&98.79192&-0.03221&-0.02997\\
2&Potential&98.79390&98.84504&-0.05114&-0.05093\\
2&Velocity&98.81309&98.91302&-0.09993&-0.10682\\
\bottomrule\end{tabular}\end{table}

\subsubsection{Six-field \texorpdfstring{$K=1024$}{K=1024} comparison}
\label{app:random-k1024-six}
All methods use the fixed indices 0 and 1, full-rank mass projection of the analytic initial field, and all 1024 coordinates. Existing learned checkpoints, numerical operators, and NKB models are reused. Reduced trajectories were advanced as 20-field batches; this assessment selects the two predetermined fields without changing their coefficients. Coarse/fine steps are $0.002/0.001$ with 51 saved frames on $[0,1]$ and $\nu=1/750$. No later reference state enters autonomous integration. Initial-energy fractions use the assessment mass, which is distinct from the operator quadrature mass.

Reference refinement tests compare the same physical fields under time-step, polynomial-space, and boundary-geometry changes on both original masks. For the three-hole fields, the largest time-step differences are $0.02414\%$ and $0.03835\%$; polynomial-space differences are $0.49879\%$ and $0.92725\%$; geometry differences are $0.10612\%$ and $0.17219\%$. The initial physical fields and evaluation masks remain fixed under all refinements. The reported relative velocity errors use the original common all-method masks; analytic-only supplementary points are reported separately where the representation is defined.

\docref{Figure}{fig:random-k1024-energy} gives initial capture and compression, and \docref{Figure}{fig:random-k1024-errors} shows all autonomous error curves. The paired allocation gains are reported in \docref{Table}{tab:completion-improvement}.

\begin{figure}[!htb]\centering
\includegraphics[width=\linewidth,height=.72\textheight,keepaspectratio]{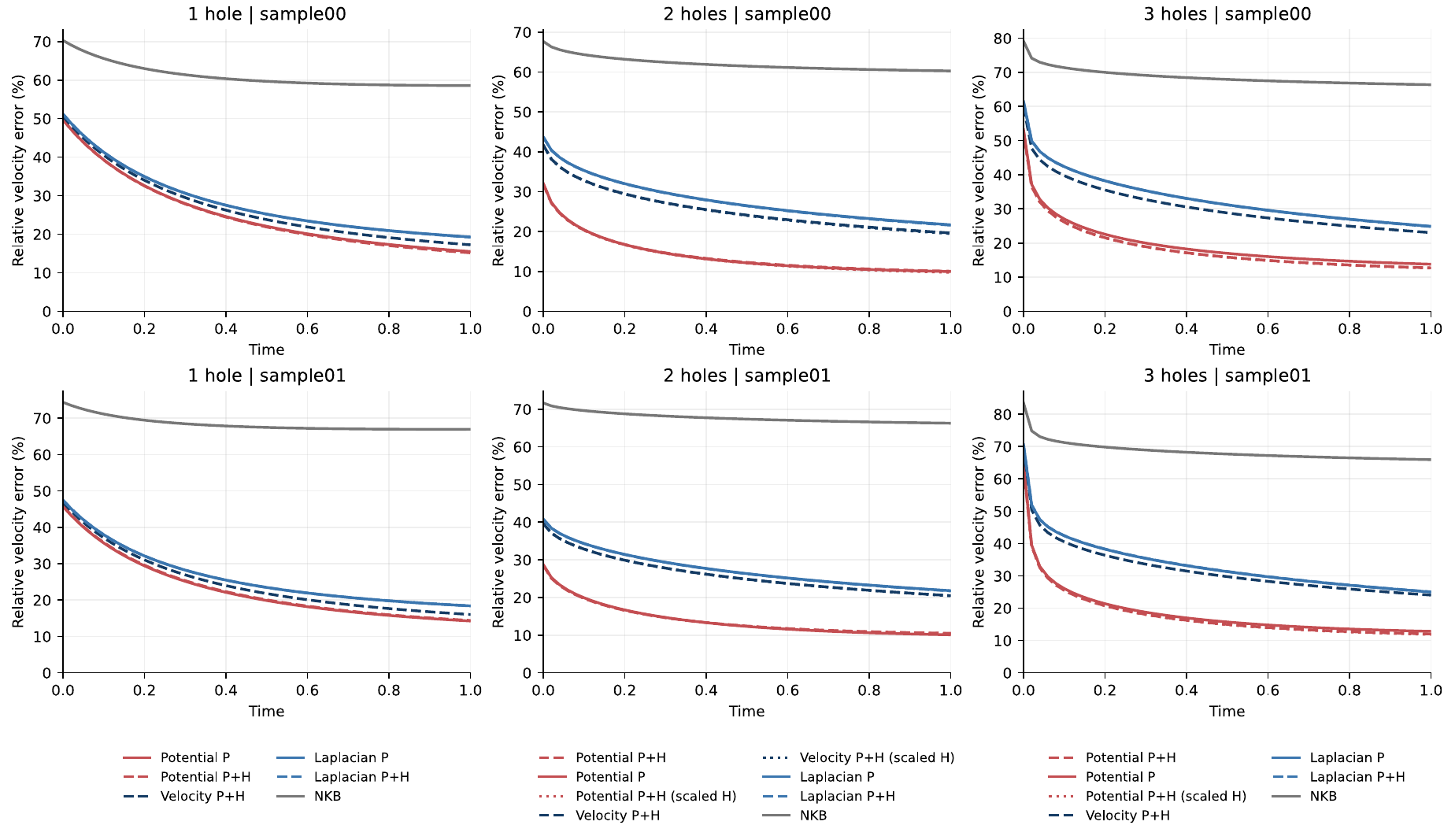}
\caption{All six autonomous random-field comparisons at $K=1024$ against qualified references. Each panel is one predetermined physical field, and every curve uses all 51 saved times and the original common mask. The numerical control has the same total coordinate count and the corresponding P or P+H allocation.}
\label{fig:random-k1024-errors}
\end{figure}

For a saved coefficient vector, let $M_{\rm assessment}=U^{\top}WU$ denote the full mass matrix on assessment nodes. Its quadratic energy is $c^{\top}M_{\rm assessment}c$ (twice the kinetic energy). Explicit harmonic components are reported both through their own coefficients and as the mass-orthogonal projection of the full velocity onto each method's harmonic span. Let $H$ index the harmonic coordinates, $M_{HH}$ be their mass block, and $\ell_H$ the harmonic entries of the full velocity load. The latter quadratic energy is $\ell_H^{\top}M_{HH}^{-1}\ell_H$, with $\ell_H=(M_{\rm assessment}c)_H$, retaining P/H cross terms. Learned harmonic fields use their original learned representation throughout.

\begin{figure}[htbp]\centering
\includegraphics[width=\linewidth,height=.72\textheight,keepaspectratio]{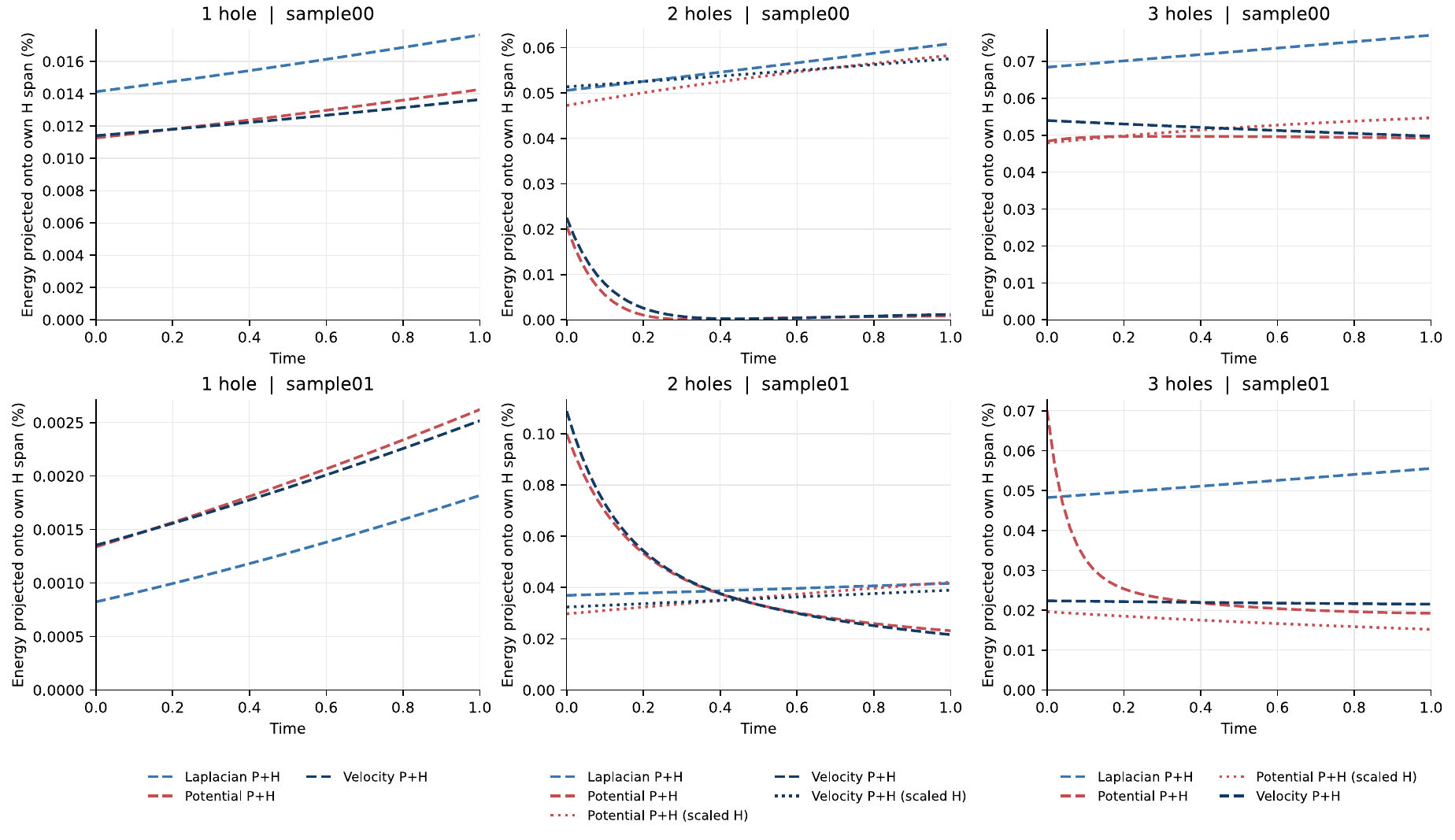}
\caption{Energy projected onto each representation's explicit harmonic span as a fraction of its instantaneous total energy. The projection measures occupation of each representation's explicit circulation coordinates.}
\label{fig:random-k1024-harmonic}
\end{figure}

Supplementary boundary diagnostics use exact edge integration of the original P1 polygon velocities for numerical bases, and nested 1024/2048-point analytic-boundary evaluations for neural bases. They report normal speed and counterclockwise circulation around each enclosed component over all 51 frames. Curl-form neural velocities have cancelling analytic mixed derivatives; NKB divergence is evaluated from its analytic direct-velocity Jacobian at 1024 deterministic original common-mask points. These component diagnostics complement the full-domain error measurements. The released results include all per-component circulation curves, quadrature-refinement differences, and energy histories.

The timing ledger separates reference generation and qualification, cache/operator recovery, projection, batched evolution, and error assessment. Existing 20-field batch costs remain associated with those batches. Stage and process timers retain their respective workload boundaries. Supplementary boundary diagnostics required 390.59 seconds across their measured stages, independently of reference generation and final full-domain error assessment.

\subsection{Three-dimensional dynamics and error attribution}
\label{app:3d-error}
The $128^3$ reference uses rotational NS with viscosity $0.00025$ and normal penalization. The reference solver uses {classical RK4 advection \citep{butcher2008numerical}} with split viscosity and normal-penalty updates at step $0.0025$. Reduced trial fields are zero-extended in the solid and integrated with the projected, unsplit grid generator at $0.00125$ and $0.000625$. Numerical fields use the mapped fitted N\'ed\'elec basis {\citep{nedelec1980mixed}}; learned velocities use analytic curls. All fluid nodes enter the mass matrix, projections, and error measurements. The \hyperref[app:coefficient-acceleration]{coefficient timing study below} additionally evaluates the NS reference at the matched step $0.000625$.

The velocity capacity study retains 6144 training points per epoch, eight validation geometries with 4096 points each every 25 epochs, and all 2000 epochs. Training takes 5361.59 and 8622.13 seconds at $K=128,256$. The learned final/initial energy ratios are $0.496797,0.505110$; numerical ratios are $0.410710,0.396369$. Maximum relative mass-norm time-step differences are below $5.3\times10^{-13}$.

For frozen $U$, let $c_\star(t)$ minimize $\norm{Uc-u_{\rm ref}(t)}$. Projection orthogonality gives
\begin{equation}
\frac{\norm{Uc(t)-u_{\rm ref}}^2}{\norm{u_{\rm ref}}^2}
=\frac{\norm{Uc_\star-u_{\rm ref}}^2}{\norm{u_{\rm ref}}^2}
+\frac{\norm{U(c(t)-c_\star)}^2}{\norm{u_{\rm ref}}^2}.
\label{eq:error-split}
\end{equation}
The representation term decreases with capacity, whereas the in-span term stays close to one half of the reference norm at $K=512,1024$. Its share of the total squared error consequently increases. The decomposition is evaluated after autonomous integration using the saved trajectories and common reference loads. At $K=128,256$, the numerical best-projection errors are $0.659617,0.547576$ and in-span errors are $0.378547,0.414556$, respectively. These controls show the same separation between improving the spatial span and changing its autonomous evolution.

\begin{table}[!htb]\centering\small
\caption{Final orthogonal error decomposition for the velocity-trained Eigenfluid spans, $T=1.16$. Entries are relative velocity errors (\%), normalized by the common full-fluid NS state. Representation and in-span terms add in square, as in \doceqref{eq:error-split}. The $K=512,1024$ entries extend the decomposition using saved full-mass matrices, all 30 reference loads, and autonomous coefficients; their final total errors also agree with direct saved-field differences.}
\label{tab:3d-error-attribution}
\begin{tabular}{rrr}\toprule
$K$ & Best representation & In-span evolution\\\midrule
128 & 67.0459 & 44.5447\\
256 & 55.5999 & 45.6442\\
512 & 43.9777 & 49.7599\\
1024 & 35.0357 & 49.6553\\
\bottomrule\end{tabular}\end{table}

\section{First eight modes: a basis atlas}
\label{app:modes}
{The periodic and three-dimensional atlases complement the multiply connected comparison in \docref{Figure}{fig:multihole-modes}. Rows identify the construction and columns show its first eight displayed modes, with ordering and representation specified in each caption. Signs and orthogonal rotations within repeated eigenspaces preserve the represented span. Mean-zero scalar displays and per-mode amplitude normalization expose spatial structure while leaving the ROM coordinates unchanged.}
\begin{figure}[!htb]
\centering\includegraphics[width=\linewidth,height=.72\textheight,keepaspectratio]{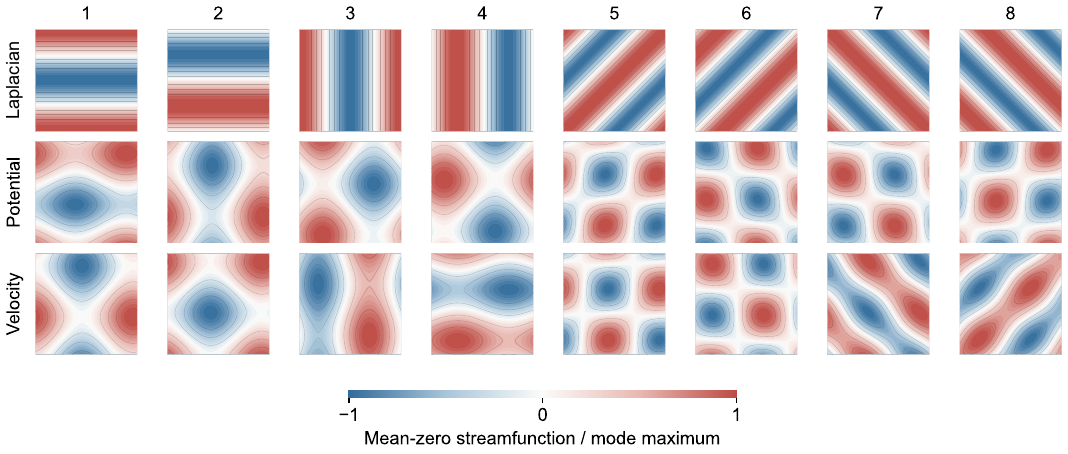}
\caption{\textbf{First eight periodic modes.} Rows compare analytic Fourier modes and generalized Ritz modes from the potential- and velocity-trained $K=1024$ spans; columns show modes 1--8. Potential Ritz fields use centered potential mass and gradient energy, while velocity Ritz fields use velocity mass and curl energy. Color shows mean-zero streamfunction with per-mode normalization; all 1024 channels contribute to each learned Ritz field.}
\label{fig:modes1}
\end{figure}

\begin{figure}[!htb]
\centering\includegraphics[width=\linewidth,height=.72\textheight,keepaspectratio]{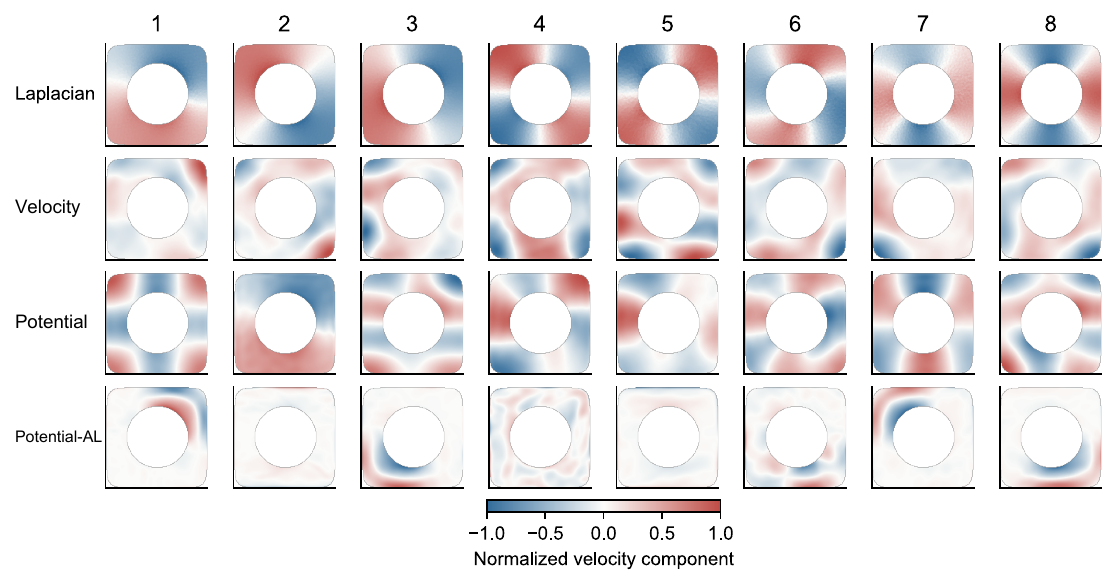}
\caption{\textbf{First eight velocity fields at common $K=32$.} Rows compare Laplacian, Velocity, Potential, and Potential-AL; columns show modes 1--8. All methods use the same $z=0.50390625$ slice and per-mode maximum-magnitude normalization of $u_z$. Laplacian fields follow numerical eigenvalue order; learned fields follow their saved channel order, without a display-only Ritz rotation.}
\label{fig:modes3}
\end{figure}

\section{Neural kinematic basis comparison}
\label{app:nkb}
We compare the learned spectral spans with a direct-velocity neural kinematic basis (NKB), following the camera-ready formulation of \citet{liu2025neural}. Each NKB model is trained independently at the reported capacity on the original case geometry. Its initial coefficients are obtained from the same physical initial velocity through a full-mass least-squares projection. The trajectory comparisons evolve all coordinates autonomously; subsequent reference snapshots are used only for assessment.

\paragraph{Architecture and optimization.}
Our native-CuPy reimplementation uses seven hidden layers of width 256 with LeakyReLU slope $0.01$, followed by an exponential linear unit (ELU) velocity output; LeakyReLU is the leaky rectified linear unit. It optimizes the six NKB penalties with weights $0.01$, $5$, $100$, $30$, $100$, and $100$ for spatial variation, divergence, orthogonality, boundary alignment, mean length, and small length, respectively. The target mean basis length is $0.37$, and the small-length threshold is $0.05$. We realize the orthogonality penalty as the mean squared off-diagonal entry of the global velocity Gram matrix, a nonnegative implementation of the paper's printed signed sum. Spatial Jacobians and their parameter adjoints are evaluated analytically. The periodic model uses sine/cosine coordinates; bounded domains retain their original implicit surfaces and use a soft boundary band of width $0.05$ times the domain side length. Direct velocity outputs use the stated soft kinematic penalties; Eigenfluid uses differential lifts, spectral objectives, and explicit harmonic coordinates.

Each model receives 2000 Adam updates {\citep{kingma2015adam}}, using 4096 points per update in 2D and 6144 in 3D. The learning rate is $5\times10^{-4}\,0.96^{10(s-1)/2000}$ at update $s$, mapping the original ten-epoch decay over this fixed-case budget. Validation uses 4096 points every 25 updates, including the eight original validation geometries in 3D. The common 2000-update protocol compares fixed-case spatial bases, with autonomous assessment at the trajectory capacities. The original NKB geometry-collection experiment uses one million updates; here its shared network architecture and penalties are trained independently for each reported case.

\paragraph{Common evolution and assessment.}
We use the full velocity mass and all output modes, without mode truncation or ridge regularization. The 3D comparison projects the unchanged $128^3$ NS generator, including its Leray projection and normal penalty, onto the NKB span. The periodic and multi-hole comparisons use the same algebraic curl-form operators and integration schemes as the spectral-basis experiments. NKB controls incompressibility and boundary tangency through its penalties. The common Galerkin assessment measures how this spatial representation supports the same reduced evolution used for the spectral bases. The original NKB solver couples its basis to semi-Lagrangian advection and reprojection.

Periodic errors use the common $128^2$ observation grid and all 241 frames through $T=18\pi$, with vortex separation $d=1.35$. NKB operator quadrature uses $256^2$ points. The multi-hole assessment uses all original common-domain points, the level-2 reference, 20 initial states, and 51 frames to $T=1$. The 3D assessment retains all $1.6\,\mathrm{M}$ fluid nodes and 30 frames to $T=1.16$. Training, offline assembly, and online integration are separate costs. \docref{Table}{tab:nkb-matched} gives the 2D initial projection errors, \docref{Figure}{fig:nkb-completed} compares autonomous errors with same-capacity controls, and \docref{Table}{tab:runtime-nkb-training} records NKB training costs.

\begin{table}[!htb]
\centering\small
\caption{Initial full-mass projection errors for the 2D NKB adaptation at $K=1024$ (\%). Periodic entries are mean $\pm$ sample SD over three seeds; multi-hole entries average 20 initial states. Trajectory comparisons are shown in \docref{Figure}{fig:nkb-completed}, 3D initial errors in \docref{Figure}{fig:case3-capacity}(b), and NKB training costs in \docref{Table}{tab:runtime-nkb-training}.}
\label{tab:nkb-matched}
\begin{tabular}{lr}\toprule
Case&Initial velocity error (\%)\\\midrule
Periodic &$2.493\pm0.138$\\
One hole&20.760\\
Two holes&22.566\\
Three holes&22.201\\
\bottomrule\end{tabular}
\end{table}

\paragraph{Kinematic residuals and numerical refinement.}
The periodic NKB initial states have a three-seed mean divergence-to-curl norm ratio of $0.195$. Their full-mass condition numbers range from $1.236\times10^{8}$ to $1.899\times10^{8}$. The maximum double-time-step trajectory difference is $4.412\times10^{-12}$. For seed 0, the $128^2$ to $256^2$ operator-quadrature trajectory sensitivity on this same case is $46.100\%$; these measurements separately characterize spatial quadrature and temporal resolution.

For the one-, two-, and three-hole domains, respectively, the mean initial divergence-to-curl ratios are $(0.721,0.770,0.722)$ and the sampled boundary-normal root mean square (RMS) relative to domain velocity RMS is $(0.170,0.216,0.231)$. Maximum trajectory differences under time-step refinement are $(7.03\times10^{-6},1.26\times10^{-5},4.15\times10^{-6})$; those under operator-quadrature refinement are $(1.24\times10^{-2},3.94\times10^{-3},2.99\times10^{-3})$.

\paragraph{Three-dimensional trajectory and kinematic diagnostics.}
\docref{Tables}{tab:nkb-3d-checks} and~\ref{tab:nkb-3d-kinematics} supplement the final errors with trajectory means, full-mass conditioning, and initial/final kinematic residuals. The trajectory mean averages all 30 stored frames, including $t=0$ and $T=1.16$. The mass condition number $\kappa(M)$ is the largest-to-smallest eigenvalue ratio of the symmetric FP64 mass matrix; all $K$ coordinates are retained.

\begin{table}[!htb]
\centering\small
\caption{Completed 3D NKB trajectories and numerical checks. Mean error is relative velocity $L^2$ in percent. Right-hand-side (RHS) discrepancy compares the assembled coefficient RHS with the projected full-grid RHS at $0.2c_0$ and $c_0$, taking the larger relative Euclidean error. The final column is the maximum relative mass-norm trajectory difference between $\Delta t=0.00125$ and $0.000625$. These checks measure operator agreement and time-step sensitivity on the fixed $128^3$ grid.}
\label{tab:nkb-3d-checks}
\begin{tabular}{rrrrr}\toprule
$K$&Mean error (\%)&$\kappa(M)$&RHS discrepancy&Time-step difference\\\midrule
128&90.170&$5.51\times10^{5}$&$2.37\times10^{-10}$&$9.39\times10^{-12}$\\
256&85.572&$5.58\times10^{6}$&$2.04\times10^{-9}$&$2.14\times10^{-11}$\\
512&78.442&$1.17\times10^{8}$&$5.33\times10^{-8}$&$3.86\times10^{-11}$\\
1024&66.826&$4.33\times10^{9}$&$9.92\times10^{-8}$&$5.00\times10^{-11}$\\
\bottomrule\end{tabular}
\end{table}

For the grid-reconstructed velocity (zero in solid cells), let $\mathbf k$ be the grid Fourier wavevector and $\widehat u$ its Fourier transform, with $\|\cdot\|_2$ the Euclidean norm over Fourier samples. Let $\langle\cdot\rangle_{\mathcal B}$ denote the mean over the original boundary band $\mathcal B$, and $\mathcal P_h$ the original grid Leray projection with componentwise dealiasing and the zero-mean constraint. The divergence ratio $D$, boundary-normal ratio $B$, and grid-correction ratio $P$ are
\begin{align*}
D&=\frac{\|i\mathbf k\cdot\widehat u\|_2}{\||\mathbf k|\widehat u\|_2},&
B&=\left(\frac{\langle(u\cdot n)^2\rangle_{\mathcal B}}{\langle|u|^2\rangle_{\mathcal B}}\right)^{1/2},&
P&=\frac{\|\mathcal P_hu-u\|_2}{\|u\|_2}.
\end{align*}
Here $D$ uses the full velocity-gradient norm, whereas the 2D diagnostic uses the curl norm; $B$ is a band-sampled RMS ratio.

\begin{table}[!htb]
\centering\small
\caption{Initial and final 3D NKB kinematic residuals on the original $128^3$ grid. All entries are dimensionless ratios, using the definitions above.}
\label{tab:nkb-3d-kinematics}
\begin{tabular}{rrrrrrr}\toprule
&\multicolumn{2}{c}{Divergence $D$}&\multicolumn{2}{c}{Boundary normal $B$}&\multicolumn{2}{c}{Grid correction $P$}\\
\cmidrule(lr){2-3}\cmidrule(lr){4-5}\cmidrule(lr){6-7}
$K$&Initial&Final&Initial&Final&Initial&Final\\\midrule
128&0.447&0.328&0.395&0.212&0.401&0.312\\
256&0.472&0.385&0.431&0.272&0.373&0.519\\
512&0.430&0.391&0.368&0.222&0.304&0.470\\
1024&0.364&0.385&0.271&0.175&0.224&0.435\\
\bottomrule\end{tabular}
\end{table}

\begin{figure}[!htb]
\centering\includegraphics[width=\linewidth,height=.72\textheight,keepaspectratio]{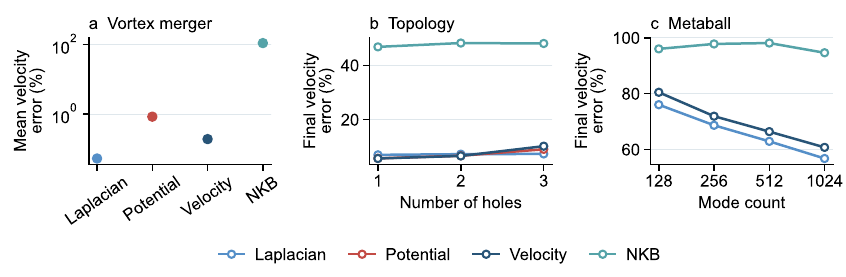}
\caption{\textbf{NKB and Eigenfluid under matched case assessment.} (a) Vortex-merger mean trajectory error over 241 frames, $K=1024$, three learned seeds with one sample SD. (b) Final errors for the three topologies, averaged over 20 states. (c) Metaball capacity comparison against the same original reference. NKB uses the stated 2000-update, common-Galerkin adaptation. Numerical denotes the same-capacity Laplacian control.}
\label{fig:nkb-completed}
\end{figure}

\section{Additional paired flow visualizations}
\label{app:fields}
\subsection{Random large-vortex ablations and matched flow views}
These figures report the random large-vortex ensemble used for the $K=1024$ lift-conditioning comparisons in \docref{Section}{sec:case2}. The direct random-streamfunction ensemble is reported separately in \docref{Appendix}{app:random-streamfunction}.

\begin{figure}[!htb]
\centering\includegraphics[width=\linewidth,height=.72\textheight,keepaspectratio]{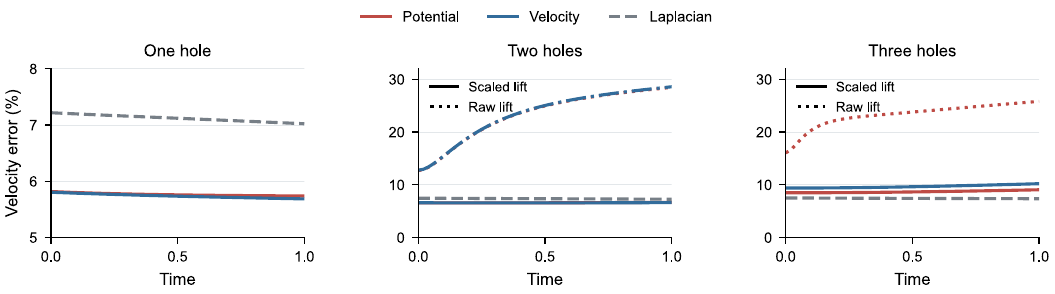}
\caption{\textbf{Harmonic completion, lift conditioning, and objective choice.} One-, two-, and three-hole trajectories use total-$1024$ bases and average the same 20 random large-vortex states. Red/blue denote potential/velocity learning, dashed gray numerical completion, dotted red raw lifts, and dash-dot blue the raw-$H_2$ velocity control. Matched learned bases share raw $H_1$, scaled $H_2$, and scaled $H_3$. Curve endpoints give the final-error comparison; positive-basis training costs appear in \docref{Table}{tab:runtime-training}.}
\label{fig:case2-ablation}
\end{figure}

\subsection{Periodic matched-basis comparison}

\paragraph{Periodic vortex-merger relief.}
\docref{Figure}{fig:case1-field} uses saved potential-trained $K=1024$, seed-0 states at $t=0,9\pi,18\pi$ from the $d=1.35$ trajectory. On the periodic $384^2$ display grid, $S$ smooths $|\omega|$ with Gaussian standard deviation 4.5 pixels; all three times use the fixed original scale $q=0.7395528286634208$ and the common display height $h=0.20-1.40[\operatorname{clip}(S|\omega|/q,0,1)]^{0.70}$. Here $\operatorname{clip}(v,a,b)$ clamps $v$ to $[a,b]$. One subdivision smooths the display mesh. At the middle time, white foam follows 12 signed-smoothed vorticity contour levels, normalized by the same fixed $q$, from 0.065 to 0.69; this exposes the elongated S-shaped merger filaments and pointed outer lobes. At the initial and final times, continuous spiral material accents are anchored to measured vortex cores and cover both deep and shallow relief. These material cues emphasize vortex structure on the data-derived height surface. The oblique camera, lighting, transparent water material, and white ground are shared. Saved velocity, vorticity, and ROM coefficients are unchanged.

\begin{figure}[!htb]
\centering\includegraphics[width=\linewidth,height=.72\textheight,keepaspectratio]{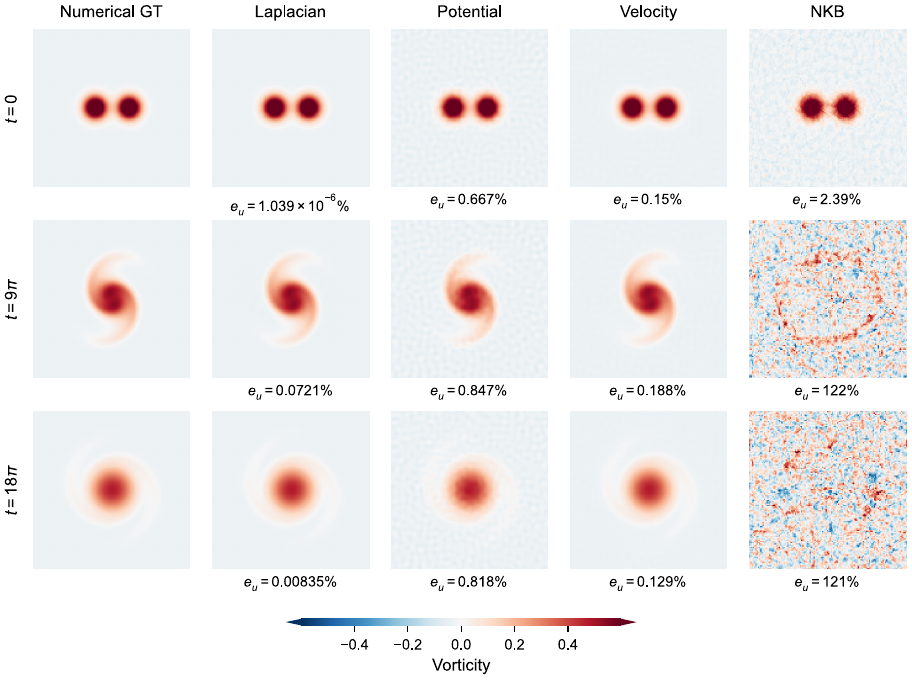}
\caption{\textbf{Vortex merger at $K=1024$.} Columns compare Numerical GT, Laplacian, potential-trained, velocity-trained, and NKB bases; rows show $t=0,9\pi,18\pi$. Learned fields use seed 0 and maximum step $0.005$. Signed vorticity uses the common GT 99.5th-percentile scale; colorbar extensions indicate saturation. Annotations report velocity error over the full $128^2$ observation grid. All methods use the same $d=1.35$ initial velocity and evolve autonomously.}
\label{fig:case1-fourway}
\end{figure}

\subsection{Three-dimensional flow comparison}
\docref{Figure}{fig:3d-context} compares instantaneous velocity streamtubes for the numerical reference, Laplacian basis, and velocity-trained Eigenfluid basis. Shared camera, seeding, material, and speed mapping expose the three-dimensional flow structure behind the common-slice comparison in \docref{Figure}{fig:case3-field}.
\begin{figure}[!htb]
\centering\includegraphics[width=\linewidth,height=.72\textheight,keepaspectratio]{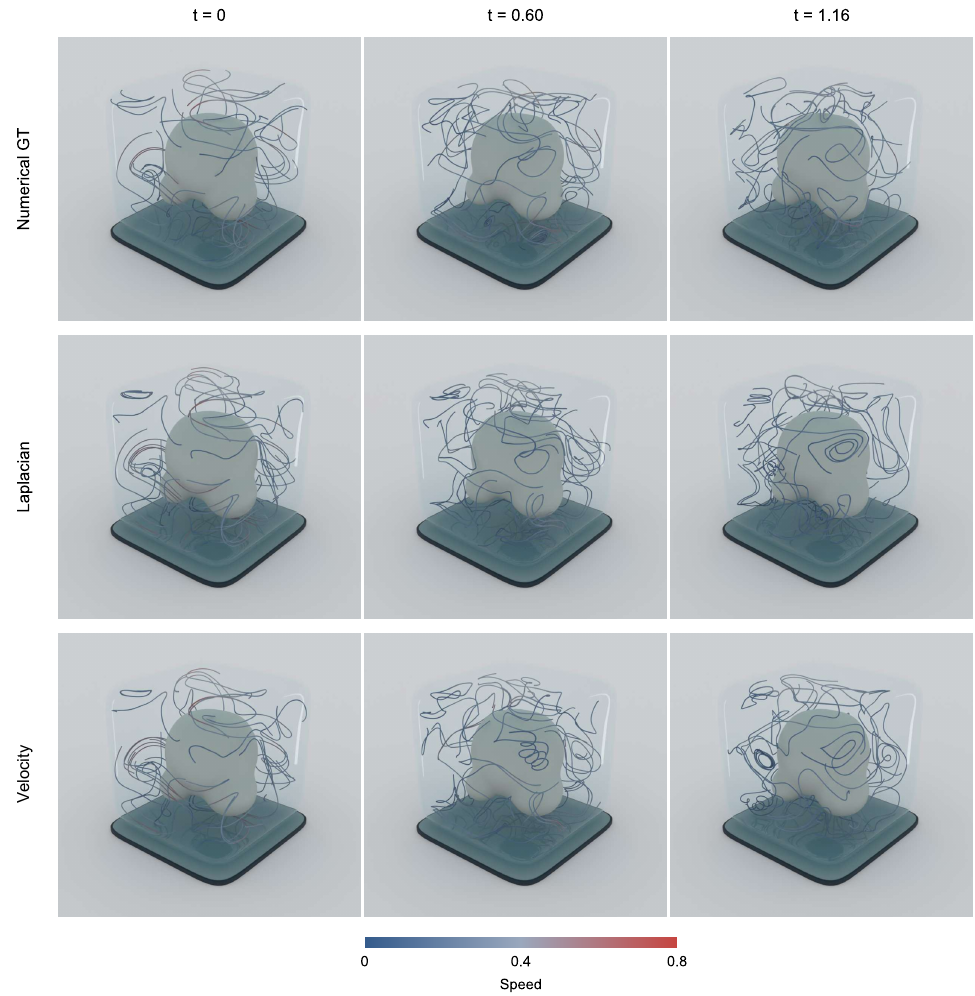}
\caption{\textbf{Three-dimensional flow structure.} Numerical GT, Laplacian, and velocity-trained Eigenfluid $K=1024$ streamtubes at $t=0,0.60,1.16$, colored by speed. The transparent shell follows the rounded domain boundary. Camera, seeding, material, and the $[0,0.8]$ speed range are shared. Centerline smoothing is a display operation.}
\label{fig:3d-context}
\end{figure}

\subsection[Three-hole K1024 time sequence]{Three-hole $K=1024$ time sequence}
\docref{Figure}{fig:case2-h3-water-time} follows the refined reference and all three completed bases through three saved times. One display mapping across methods and times preserves temporal changes in the surface relief.
\begin{figure}[!htb]
\centering\includegraphics[width=\linewidth,height=.72\textheight,keepaspectratio]{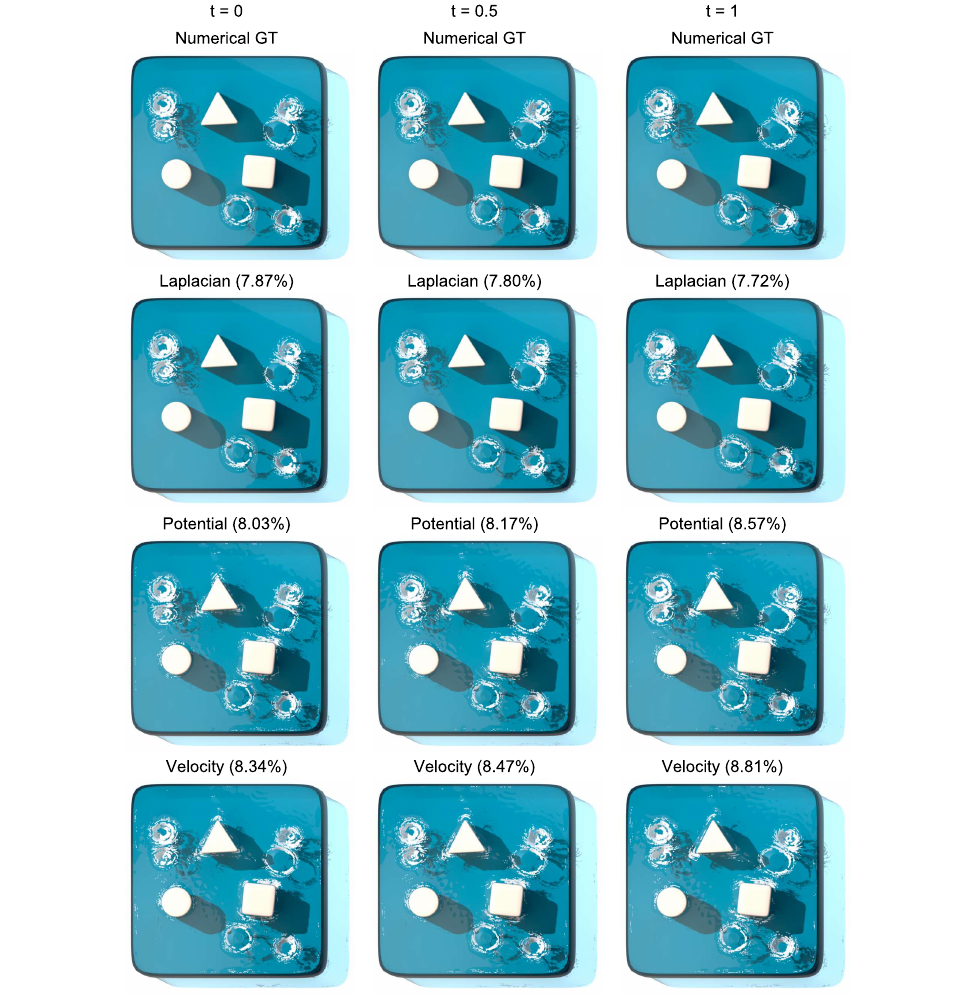}
\caption{\textbf{Three-hole evolution at $K=1024$.} Rows compare Numerical GT, Laplacian, potential-trained, and velocity-trained bases; columns show $t=0,0.5,1$, seed 1, state 00. Each basis contains 1021 positive and three harmonic coordinates; the learned bases share scaled $H_3$. All panels share the height mapping, contour levels, camera, and material. Annotations give full-domain velocity error against the level-2 reference.}
\label{fig:case2-h3-water-time}
\end{figure}

\paragraph{Water-surface display mapping.}
The $K=1024$ renderings in \docref{Figures}{fig:case2-field} and~\ref{fig:case2-h3-water-time} embed the saved two-dimensional fields in a height surface. Analytic derivatives of the learned fields are evaluated from the original checkpoints and contracted with the saved coefficients. The refined GT vorticity is recovered from its stored streamfunction using the original interior finite-element mass--stiffness relation. On a $481\times481$ display grid, let $S$ be a masked Gaussian smoother with standard deviation 8 pixels. Let $Q_p[f]$ be the pooled empirical quantile of field $f$ at probability $p$, $q$ a combined vorticity--speed display signal, and $z_{\rm disp}$ the display height. Define
\begin{align*}
q&=0.8\,\frac{|S(\omega)|}{Q_{0.90}[|S(\omega)|]}
  +0.2\,\frac{S(\|u\|^2)}{Q_{0.90}[S(\|u\|^2)]},\\
z_{\mathrm{disp}}&=0.30-0.75\,\operatorname{clip}\!\left(
\frac{q-Q_{0.02}[q]}{Q_{0.98}[q]-Q_{0.02}[q]},0,1\right).
\end{align*}
Quantiles pool all four methods and all three times within each geometry; the main views and time sequence reuse that common mapping. Filtering signed vorticity before taking its magnitude emphasizes coherent large-scale structure. The speed term also reveals curl-free harmonic circulation. Height is a display coordinate; the saved physical fields and ROM coordinates remain unchanged. White foam follows shared streamfunction contour levels after subtracting each field's nodal mean on the reference mesh; this contour convention differs from the area-weighted centering of the planar scalar plots. The obstacle footprints come from the stored reference-level-2 mesh. Camera and water material are fixed within each plate. \docref{Figures}{fig:case2-field} and~\ref{fig:case2-h3-water-time} use matched orthographic overhead views with a white ground and one subdivision of the display surface. Its foam particles follow the saved contours, with density weighted by $|S\omega|$; within each geometry, all four methods use the same absolute density thresholds taken from the potential-trained $t=1$ field (58th and 95th percentiles), particle-size rule, and random seeds. Foam provides a common material cue for the displayed flow structure.

\section{Computational cost}
\label{app:runtime}
We distinguish basis training, offline preparation, online evolution, and field output.
The tables collect the recorded costs for the case comparisons and their numerical checks.
Times are measured wall times on the RTX 4080 SUPER workstation; each caption specifies
the workload and timer boundary. Superscript R marks synchronized complete-workload measurements with all points and modes. Other entries use the recorded stage timers. Entries marked n.r. have no separate timer, excl. denotes a branch outside the stated comparison, and m.f. denotes matrix-free evaluation. Composite workloads are listed as bundles.
The analytic Fourier basis is available directly; numerical Laplacian eigensolves
belong to offline preparation. Existing Laplacian modes are reused in the
reported controls.

\subsection{Training cost and its components}
Completed 2000-update spectral training costs appear in \docref{Table}{tab:runtime-training} for 2D and \docref{Table}{tab:Pot-training-basis-cost} for 3D. Topology timers cover positive-basis training; the harmonic lift shared within each Potential/Velocity pair is separate. NKB directly predicts velocity without a harmonic training stage; its update, validation, and total timers appear in \docref{Table}{tab:runtime-nkb-training}. The seed-0 Potential-FG run at $K=128$ takes 195.80 minutes; its $K=256$ counterpart and the matched FG, AL, and canonical Potential families are included in the 3D preparation table.

\begin{table}[!htb]\centering\footnotesize\setlength{\tabcolsep}{4pt}
\caption{Two-dimensional spectral training wall time in minutes for complete 2000-update runs. Periodic entries are mean $\pm$ sample SD over three seeds. Topology totals exclude the separately trained harmonic lift shared within each Potential/Velocity pair. Three-dimensional spectral totals are in \docref{Table}{tab:Pot-training-basis-cost}; NKB totals and components are in \docref{Table}{tab:runtime-nkb-training}.}
\label{tab:runtime-training}
\label{tab:k1024-cost}
\begin{tabular}{lrrr}\toprule
Case&$K$&Potential&Velocity\\\midrule
Periodic & 1024 & $106.428\pm3.196$ & $165.760\pm1.636$\\
1 hole & 1024 & 232.61 & 423.97\\
2 holes & 1024 & 235.48 & 376.76\\
3 holes & 1024 & 233.88 & 388.71\\
\bottomrule\end{tabular}\end{table}
\begin{table}[!htb]\centering\footnotesize\setlength{\tabcolsep}{4pt}
\caption{Instrumented 3D Velocity-training components, in minutes, summed over 2000 committed updates and 80 scheduled validations. Fwd./Rev. include analytic derivatives and the reverse pass includes recomputation. The full wall total also includes initialization, progress logging, cleanup, and other uninstrumented overhead; the listed components identify the principal computational stages. Checkpoints are saved every update through $K=256$ and every 25 updates at higher capacities.}
\label{tab:runtime-training-components}

\begin{tabular}{rrrrrrrr}\toprule
$K$&Fwd.&Rev.&Geometry&Ortho.&Validate&Save&Opt./sample\\\midrule
128 & 10.52 & 35.97 & 25.14 & 0.21 & 8.35 & 4.67 & 0.22\\
256 & 20.85 & 72.05 & 25.22 & 0.62 & 10.76 & 9.26 & 0.22\\
512 & 42.17 & 145.88 & 25.61 & 2.20 & 15.95 & 1.82 & 0.35\\
1024 & 89.94 & 298.41 & 29.88 & 8.86 & 28.61 & 3.77 & 0.67\\
\bottomrule\end{tabular}\end{table}
\paragraph{Architecture and derivative workload.}
The 3D velocity implementation assigns an independent width-128, five-hidden-layer sinusoidal representation network (SIREN) {\citep{sitzmann2020implicit}} to each mode: at $K=1024$ it has 70,126,592 trainable parameters. The shared NKB network has 1,188,352 parameters (\docref{Appendix}{app:nkb}). For Velocity, the neural vector potential $A_\theta$ gives $u=\curl A_\theta$ and the velocity-gradient Rayleigh energy require second spatial derivatives of $A_\theta$ and their parameter adjoints. NKB directly predicts $u$ and its losses use first spatial derivatives. At $K=1024$, neural forward and reverse account for 83.3\% of full training wall time; gauge projection is disabled. The cost breakdown connects the derivative workload and network architecture to the measured training time.

\begin{table}[!htb]\centering\footnotesize\setlength{\tabcolsep}{4pt}
\caption{NKB training timers, in seconds. Updates sum the 2000 measured optimization iterations, including sampling, analytic network forward/reverse, losses, Adam, and checkpoint writes within those iterations. Validation sums the 80 scheduled events. Total additionally includes preflight, initialization, and other run overhead. Topology seed is 1 and the 3D seed is 20260823; update totals measure the combined neural forward/reverse and optimization workload.}
\label{tab:runtime-nkb-training}

\begin{tabular}{lrrrrr}\toprule
Case&$K$&Seed&Updates&Validate&Total\\\midrule
Periodic & 1024 & 0 & 151.75 & 3.17 & 159.13\\
Periodic & 1024 & 1 & 151.59 & 3.19 & 158.98\\
Periodic & 1024 & 2 & 151.45 & 3.17 & 158.78\\
1 hole & 1024 & 1 & 153.04 & 3.28 & 160.96\\
2 holes & 1024 & 1 & 154.53 & 3.33 & 162.52\\
3 holes & 1024 & 1 & 159.76 & 3.52 & 167.91\\
3D & 128 & 20260823 & 41.68 & 4.72 & 50.63\\
3D & 256 & 20260823 & 61.90 & 7.01 & 73.30\\
3D & 512 & 20260823 & 121.50 & 13.69 & 139.66\\
3D & 1024 & 20260823 & 325.27 & 37.50 & 367.58\\
\bottomrule\end{tabular}\end{table}

\subsection{Offline basis, projection, and operator preparation}
For bounded domains, basis inference evaluates the frozen NKB model on every assessment node and all modes.
It is the median of three synchronized passes including coordinate/input transfer
and checksum reduction, excluding derivatives and output storage. Projection/preparation
contains the recorded full-mass initial projection and reference/error-moment preparation.
Derived-field preparation contains the derivative caches and linear operators.
Each independently timed pass characterizes its stated workload; production queries reuse the prepared basis and operators.

\begin{table}[!htb]\centering\footnotesize\setlength{\tabcolsep}{4pt}
\caption{Offline costs in seconds. Basis gives all-node field-evaluation times; periodic entries include analytic derivatives. Projection includes the recorded reference preparation where bundled. Derived means field/linear-operator preparation, and $Q$ denotes accepted assembly of the quadratic tensor $\mathcal Q$. The incl. periodic derivative fields are included in basis evaluation, and their nonlinear loads are matrix-free. n.r. marks components without a separate measurement. The $K=512$ derived cost covers all 512 modes, basis-cache transposition, and linear-operator preparation. These columns exclude neural training and online integration.}
\label{tab:runtime-offline}

\begin{tabular}{lrlrrrr}\toprule
Case&$K$&Basis&Basis eval.&Projection&Derived&$Q$\\\midrule
3D & 512 & Velocity & $454.46^{\mathrm R}$ & 18.50 & 3036.39 & 3645.51\\
3D & 1024 & Velocity & $679.90^{\mathrm R}$ & 41.25 & 10381.08 & 32434.41\\
Periodic s0 & 1024 & Potential & 143.21 & 1.73 & \textit{incl.} & \textit{m.f.}\\
Periodic s1 & 1024 & Potential & 142.83 & 1.73 & \textit{incl.} & \textit{m.f.}\\
Periodic s2 & 1024 & Potential & 143.98 & 1.63 & \textit{incl.} & \textit{m.f.}\\
Periodic s0 & 1024 & Velocity & 144.22 & 1.63 & \textit{incl.} & \textit{m.f.}\\
Periodic s1 & 1024 & Velocity & 149.66 & 1.72 & \textit{incl.} & \textit{m.f.}\\
Periodic s2 & 1024 & Velocity & 141.75 & 1.61 & \textit{incl.} & \textit{m.f.}\\
Periodic s0 & 1024 & NKB & 1.36 & 1.68 & \textit{incl.} & \textit{m.f.}\\
Periodic s1 & 1024 & NKB & 1.10 & 1.64 & \textit{incl.} & \textit{m.f.}\\
Periodic s2 & 1024 & NKB & 1.09 & 1.63 & \textit{incl.} & \textit{m.f.}\\
1 hole & 1024 & NKB & 3.55 & 31.34 & $22.58^{\mathrm R}$ & 3247.81\\
2 holes & 1024 & NKB & 14.02 & 118.76 & $90.01^{\mathrm R}$ & 12682.87\\
3 holes & 1024 & NKB & 13.88 & 118.83 & $90.00^{\mathrm R}$ & 12685.85\\
3D & 128 & NKB & 2.99 & 30.18 & 139.57 & 83.94\\
3D & 256 & NKB & 3.50 & 95.56 & 311.72 & 576.10\\
3D & 512 & NKB & 4.67 & 190.53 & 728.16 & 3841.67\\
3D & 1024 & NKB & 7.20 & 322.78 & 1939.69 & 32446.09\\
\bottomrule\end{tabular}\end{table}
For periodic flow, basis evaluation includes the analytic velocity and curl fields on all $256^2$ quadrature nodes; projection includes the full mass/curl-stiffness assembly and generalized eigensolve. The rotational nonlinear load is evaluated matrix-free.
The synchronized 3D Velocity basis pass evaluates analytic curl on all $1.6\,\mathrm{M}$ nodes in the original FP32 spatial arithmetic, with FP64 checksum accumulation; coordinate/input transfer and checksum reduction are included, and storage is excluded. Its checksum agrees with the original full-mass basis energy to $1.10\times10^{-14}$ relative error. The topology basis passes evaluate every field and derivative and assemble the full mass, derivative, and divergence moments on 1,048,576 nodes for one hole and 4,194,304 for two and three holes. Each saved matrix agrees with its original operator to the reported precision.

\subsection{Online evolution, output, and assessment}
\docref{Table}{tab:nkb-runtime} separates coefficient evolution from output costs in 3D.
Laplacian controls use the original grid-based reduced generator, whereas the listed
Velocity and NKB trajectories use assembled coefficient operators. The NS reference
uses its original split viscosity/normal-penalty integrator. The common physical interval and fine time step support comparison of the recorded execution strategies.
Flow errors accompany these timings in the capacity plots. Output measures field
reconstruction, transfer, storage, and recorded diagnostics.

\begin{table}[!htb]\centering\footnotesize\setlength{\tabcolsep}{4pt}
\caption{3D online costs in seconds, $T=1.16$. Coarse/fine steps are $0.00125/0.000625$. ROM output timers cover three selected fields; NS output timers cover its 30 saved fields. Laplacian online timers include the field operations required at every integration step, excluding assessment outside the stepping timer. Repeated NS rows are the measured reference runs for the two capacity experiments; n.a. indicates that NS has no retained-mode count.}
\label{tab:nkb-runtime}

\begin{tabular}{rllrrr}\toprule
$K$&Method&Execution&Coarse&Fine&Output\\\midrule
128 & NKB & Coefficient & 3.57 & 6.84 & $22.30^{\mathrm R}$\\
256 & NKB & Coefficient & 4.90 & 9.78 & $22.26^{\mathrm R}$\\
512 & Velocity & Coefficient & 3.85 & 7.03 & 20.41\\
512 & Laplacian & Grid ROM & 119.45 & 238.71 & $0.26^{\mathrm R}$\\
512 & NKB & Coefficient & 8.29 & 16.58 & 7.98\\
1024 & Velocity & Coefficient & 24.30 & 48.59 & 32.72\\
1024 & Laplacian & Grid ROM & 130.04 & 260.17 & $0.26^{\mathrm R}$\\
1024 & NKB & Coefficient & 766.10 & 1532.83 & 14.56\\
\textit{n.a.} & NS (K512 comparison) & Grid NS & $63.71^{\mathrm R}$ & 124.75 & 7.01\\
\textit{n.a.} & NS (K1024 comparison) & Grid NS & $63.66^{\mathrm R}$ & 126.68 & 7.04\\
\bottomrule\end{tabular}\end{table}
\begin{table}[!htb]\centering\footnotesize\setlength{\tabcolsep}{4pt}
\caption{2D measured execution intervals in seconds, $K=1024$. Each periodic method--seed pair is listed once. Periodic runs reach $T=18\pi$ with 241 saved fields; topology runs reach $T=1$ with 20 initial states. Periodic integration timers include synchronized coefficient capture and exclude field reconstruction and file writing. The matched potential, velocity and NKB runs use $256^2$ operator points and $128^2$ observations. Topology entries time their original 20-state batches.}
\label{tab:runtime-2d-online}

\begin{tabular}{llrll}\toprule
Case&Method&$\Delta t$&Time (s)&Timer scope\\\midrule
Periodic s0 & Potential & $0.005$ & 182.62 & Integration + coefficient capture\\
Periodic s1 & Potential & $0.005$ & 182.63 & Integration + coefficient capture\\
Periodic s2 & Potential & $0.005$ & 182.64 & Integration + coefficient capture\\
Periodic s0 & Velocity & $0.005$ & 185.97 & Integration + coefficient capture\\
Periodic s1 & Velocity & $0.005$ & 183.62 & Integration + coefficient capture\\
Periodic s2 & Velocity & $0.005$ & 181.99 & Integration + coefficient capture\\
Periodic s0 & NKB & $0.005$ & 182.99 & Integration + coefficient capture\\
Periodic s1 & NKB & $0.005$ & 182.81 & Integration + coefficient capture\\
Periodic s2 & NKB & $0.005$ & 182.59 & Integration + coefficient capture\\
1 hole & NKB & $0.001$ & 408.46 & 20-state coefficient batch\\
2 holes & NKB & $0.001$ & 402.47 & 20-state coefficient batch\\
3 holes & NKB & $0.001$ & 418.86 & 20-state coefficient batch\\
\bottomrule\end{tabular}\end{table}
\begin{table}[!htb]\centering\footnotesize\setlength{\tabcolsep}{4pt}
\caption{High-capacity Velocity pipeline wall times, in seconds. Process-stage intervals are grouped by purpose for each 20-state topology experiment. They include process startup, I/O, and process monitoring, and are kept separate from the synchronized NKB coefficient timers in \docref{Table}{tab:runtime-2d-online}. Shared harmonic training is excluded.}
\label{tab:runtime-topology-pipeline}

\begin{tabular}{lrrr}\toprule
Stage&1 hole&2 holes&3 holes\\\midrule
Boundary fields and assessment & 60.05 & 70.07 & 70.04\\
Initial projection & 20.01 & 40.03 & 40.02\\
Linear/operator preparation & 874.27 & 977.54 & 978.13\\
Nonlinear tensor & 3484.71 & 3504.59 & 3501.97\\
ROM execution & 520.06 & 610.08 & 610.08\\
Reference fields and error assessment & 570.13 & 2250.30 & 2280.28\\
\bottomrule\end{tabular}\end{table}
The one-hole kernel comparison in \docref{Appendix}{app:protocol} isolates coefficient stepping from streamed field contraction. Its kernel-level intervals differ from the process-stage workloads in \docref{Table}{tab:runtime-topology-pipeline}.

\subsection{Numerical checks and auxiliary costs}
\begin{table}[!htb]\centering\footnotesize\setlength{\tabcolsep}{4pt}
\caption{Additional 3D measured costs in seconds. Velocity check includes full-grid RHS and one-step comparisons; refined projection uses the matched-step NS reference. Velocity/NKB evaluation bundles include both trajectories, checks, and output and overlap the preceding component timers. NKB capacities time evaluation after operator preparation. Laplacian refresh includes both time steps and fresh FP64 projection. Bundle totals and individual components describe their respective timer boundaries.}
\label{tab:runtime-3d-checks}

\begin{tabular}{rrrrrrr}\toprule
$K$&Velocity load&Velocity check&Refined proj.&Velocity eval.&NKB eval.&Lap. refresh\\\midrule
512 & 0.71 & 183.30 & 19.37 & 386.33 & 69.50 & 376.58\\
1024 & 3.36 & 438.64 & 35.32 & 761.22 & 2337.10 & 407.83\\
\bottomrule\end{tabular}\end{table}
At $K=128,256$, joint Velocity/Laplacian evaluation costs 547.33 and 3449.86 s, covering both methods, both time steps, and their field outputs. The corresponding NKB evaluation bundles take 25.91 and 30.12 s. These totals characterize the complete evaluation workloads at the two capacities.

\paragraph{NKB operator precision.}
The accepted $K=1024$ tensor assembly uses FP64 short-dot products and compensated accumulation. Every online RK4 stage likewise uses compensated products and summation. Their measured costs appear in Tables~\ref{tab:runtime-offline} and~\ref{tab:nkb-runtime}. Independent full-grid RHS comparisons verify the assembled operator to the reported precision (\docref{Table}{tab:nkb-3d-checks}).

\paragraph{Gauge-solver timings.}
\docref{Table}{tab:gauge-solver-ablation} groups setup, forward, reverse, and independent
assessment at the frozen checkpoint, with residuals visualized in \docref{Figure}{fig:gauge-solver-ablation}.
\docref{Table}{tab:gauge-precision} adds source-evaluation times for the precision controls.
The workload evaluates 32 frozen columns with reused geometry-dependent MFS factors;
the listed timers cover the source evaluation, correction, differentiation, and assessment stages.

\subsection{Inference accounting and reuse thresholds}
\label{app:inference-amortization}
Basis-query timing follows the continuous evaluation boundary in \docref{Section}{sec:inference-cost}; numerical construction additionally charges geometry-specific mesh preparation, point mapping, factorization, and the spectral solve. \docref{Table}{tab:runtime-offline} reports all-node evaluation costs, while \docref{Appendix}{app:unseen-geometry} gives fresh numerical constructions on unseen domains. The reuse counts below concern trajectories that share spatial bases and operators.

\begin{table}[!htb]\centering\small
\caption{Reuse thresholds computed from original wall-time receipts, without rounding inputs. A reuse is one $T=1.16$, 1856-step trajectory on the same geometry, with the same time step and existing operators. The baseline is the numerical NS integration timer. Training-only counts set $T_{\rm off}=0$; adding recorded assembly or derived-field preparation charges those stages once. Derived-field costs include the recovered $K=512$ preparation timers. Per-query projection/output and separately unrecorded stages lie outside these integration-only thresholds.}
\label{tab:inference-amortization}
\begin{tabular}{rrrr}\toprule
$K$&Training only&+ $Q$ assembly&+ derived fields\\\midrule
512&123&154&179\\
1024&359&774&907\\
\bottomrule\end{tabular}\end{table}

For a full workload, replace the denominator of \doceqref{eq:inference-break-even} by the difference between complete per-query costs, including initial projection, selected field reconstruction, transfers, and storage. Output costs are assigned according to the requested number of reconstructed fields: the recorded NS workload saves 30 fields, and the ROM workload reconstructs three selected fields.

For basis generation over a geometry family, write $T_{\rm num}^{(g)}$ for a fresh numerical basis construction and $T_{\rm query}^{(g)}$ for the frozen neural map with the same postprocessing. For $G$ geometry queries indexed by $g$, training amortizes when
\begin{equation}
\sum_{g=1}^{G}\left(T_{\rm num}^{(g)}-T_{\rm query}^{(g)}\right)\geq T_{\rm train}.
\end{equation}
If the per-geometry costs are constant, write them as $T_{\rm num}$ and $T_{\rm query}$; then $G_\star=\lceil T_{\rm train}/(T_{\rm num}-T_{\rm query})\rceil$. This geometry-level threshold requires matched basis-generation timers. Each new geometry also incurs its own operator assembly when autonomous dynamics are requested. Arbitrarily dense neural spatial queries preserve the same learned continuous function; their cost scales with requested points and modes, and their accuracy is that of the learned representation.

\paragraph{Tensor-product and nonuniform parameter sweeps.}
A tensor-product sweep with $m$ values for each of $d$ independent query parameters contains $m^d$ cases. The per-axis break-even resolution $r_\star(d)$ and integer sampling count $m_\star(d)$ are
\begin{equation}
r_\star(d)=n_\star^{1/d},\qquad
m_\star(d)=\left\lceil r_\star(d)\right\rceil.
\label{eq:sweep-resolution}
\end{equation}
Here $d$ counts varied physical or query parameters and $n_\star$ is the integration-only reuse count in \doceqref{eq:inference-break-even}.

Unequal axis counts $m_1,\ldots,m_d$ meet the threshold when $\prod_{j=1}^d m_j\geq n_\star$; their geometric mean $(\prod_jm_j)^{1/d}$ is the mean per-axis resolution. Sparse, constrained, or correlated sweeps use the actual query count. With per-query savings $\Delta T_i$ and reusable cost $T_{\rm one\mbox{-}time}$, break-even requires $\sum_i\Delta T_i\geq T_{\rm one\mbox{-}time}$, including per-case preparation. Fixed-geometry initial-condition sweeps reuse the frozen basis and operators; geometry sweeps additionally charge geometry generation and required operator assembly.

\end{document}